\newcommand{\removelatexerror}{\let\@latex@error\@gobble}
\newtheorem{theorem}{Theorem}
\newtheorem{definition}{Definition}
\newtheorem{lemma}{Lemma}
\newtheorem{proposition}[theorem]{Proposition}
\newenvironment{proof}{\noindent \textbf{{Proof~} }}{\hfill $\blacksquare$}
\begin{document}
	
	\title{Quantum variational learning for quantum error-correcting codes}

	\author{Chenfeng Cao}
	\affiliation{Department of Physics, The Hong Kong University of Science and Technology, Clear Water Bay, Kowloon, Hong Kong, China}
	\orcid{0000-0001-5589-7503}
	
	\author{Chao Zhang}
	\affiliation{Department of Physics, The Hong Kong University of Science and Technology, Clear Water Bay, Kowloon, Hong Kong, China}
	\orcid{0000-0002-2093-7496}
	
	\author{Zipeng Wu}
	\affiliation{Department of Physics, The Hong Kong University of Science and Technology, Clear Water Bay, Kowloon, Hong Kong, China}
	\orcid{0000-0002-9349-1325}
	
	\author{Markus Grassl}
	\affiliation{International Centre for Theory of Quantum Technologies, University of Gdansk, 80-309 Gdansk, Poland}
	\orcid{0000-0002-3720-5195}
	
	\author{Bei Zeng}
	\email{zengb@ust.hk}
	\affiliation{Department of Physics, The Hong Kong University of Science and Technology, Clear Water Bay, Kowloon, Hong Kong, China}
	\orcid{0000-0003-3989-4948}

	\begin{abstract}	 
	 Quantum error correction is believed to be a necessity for large-scale fault-tolerant quantum computation. In the past two decades, various constructions of quantum error-correcting codes (QECCs) have been developed, leading to many good code families. However, the majority of these codes are not suitable for near-term quantum devices. Here we present VarQEC, a noise-resilient variational quantum algorithm to search for quantum codes with a hardware-efficient encoding circuit. The cost functions are inspired by the most general and fundamental requirements of a QECC, the Knill-Laflamme conditions. Given the target noise channel (or the target code parameters) and the hardware connectivity graph, we optimize a shallow variational quantum circuit to prepare the basis states of an eligible code. In principle, VarQEC can find quantum codes for any error model, whether additive or non-additive, degenerate or non-degenerate, pure or impure. We have verified its effectiveness by (re)discovering some symmetric and asymmetric codes, e.g., $((n,2^{n-6},3))_2$ for $n$ from 7 to 14. We also found new $((6,2,3))_2$ and $((7,2,3))_2$ codes that are not equivalent to any stabilizer code, and extensive numerical evidence with VarQEC suggests that a $((7,3,3))_2$ code does not exist. Furthermore, we found many new channel-adaptive codes for error models involving nearest-neighbor correlated errors. Our work sheds new light on the understanding of QECC in general, which may also help to enhance near-term device performance with channel-adaptive error-correcting codes.
	\end{abstract}

	\section{Introduction}\label{Introduction}
	
	Fault-tolerant quantum computers promise to solve some computational problems much faster than classical machines, such as quantum chemistry simulation~\cite{jones2012faster}, prime factorization~\cite{shor1999polynomial}, solving linear systems of equations~\cite{harrow2009quantum}. However, quantum information carried by current noisy intermediate-scale quantum (NISQ) systems is highly fragile and can be easily altered by the environment. The aforementioned tasks are so far out of reach.
	
	The most promising technique to maintain coherence and protect the quantum information from noise is \textit{quantum error-correcting codes}~\cite{shor1995scheme,gottesman1997stabilizer, lidar2013quantum, zeng2019quantum, girvin2021introduction}. The main idea of quantum error correction is to encode the low-dimensional quantum state in a larger system such that errors occurring during the computation can be corrected due to the physical redundancy. As long as the noise rate $p$ is below a specific threshold, QECCs can correct the error and reduce the error probability from $\mathcal{O}(p)$ to higher orders. In recent years, the intrinsic connections between QECCs and other areas of physics, such as quantum gravity~\cite{pastawski2015holographic}, have also been noticed. 
	
	Knill and Laflamme devised sufficient and necessary conditions (known as the \textit{Knill-Laflamme conditions}) for quantum error correction~\cite{knill1997theory}. In principle, we can find any QECC as long as we find solutions to the Knill-Laflamme conditions. However, solving these systems of equations is extremely difficult in the general case. Therefore many open problems in this field remain unsolved, e.g., do all degenerate QECCs obey the Hamming bound? Which QECC has the highest error threshold? In practice, researchers usually analyze QECCs under the Pauli framework and have developed various QECC families, such as surface codes~\cite{kitaev1997quantum, fowler2012surface}, Calderbank-Shor-Steane (CSS) codes~\cite{calderbank1996good, steane1996multiple}, stabilizer codes~\cite{gottesman1997stabilizer}, codeword stabilized (CWS) codes~\cite{cross2008codeword,chuang2009codeword}, quantum low-density parity-check codes~\cite{breuckmann2021quantum, panteleev2021asymptotically}.

	Up till now, no logical qubit/operation with useful fidelity was realized in experiments since current gate noise rates are still much larger than the requirements. Very recently, Egan $\textit{et al.}$~\cite{egan2021fault} prepared a Bacon–Shor logical qubit with 13 trapped-ion qubits and demonstrated a logical single-qubit Clifford gate. Further, Postler $\textit{et al.}$~\cite{postler2021demonstration} demonstrated a logical $T$-gate based on the 7-qubit color code. However, the fidelities of these state-of-the-art logical qubits are even lower than those of the physical qubits. From a theoretical perspective, the codes used in those experiments are not device-tailored and may not be optimal for the system. The noise channels on different physical platforms differ significantly~\cite{dawson2006noise, wilen2021correlated, guo2021testing}. Symmetric QECC constructions under the Pauli framework can not be directly adapted to non-Hermitian/non-unitary noise channels. It is highly desirable to design asymmetric or channel-adaptive QECCs with a hardware-efficient encoder. Such device-tailored codes can protect logical information more efficiently.

	Besides analytical constructions, researchers have been trying to find QECCs with computational methods for a long time. Refs.~\cite{yu2007graphical, hu2008graphical, chuang2009codeword, PhysRevA.101.042307} designed classical algorithms for finding quantum codes associated with graphs. Ref.~\cite{li2017fault} used numerical greedy search for finding stabilizer codes. These algorithms, however, cannot find arbitrary codes and are extremely time-consuming. With the popularity of artificial intelligence, researchers also started to design and optimize quantum codes with neural networks~\cite{fosel2018reinforcement, baireuther2018machine, andreasson2019quantum, nautrup2019optimizing}. These classical black-box models perform pretty well for certain problems. In this work, we add a new general method to this toolbox. We devise a hybrid quantum-classical algorithm called VarQEC for finding quantum error-correcting codes. The cost functions therein are based on the Knill-Laflamme conditions. We iteratively update the parameters in a variational quantum circuit (VQC) with stochastic gradient descent. If the final cost functions are sufficiently small, we obtain an approximate quantum code whose inaccuracy is bounded. Compared with the classical iterative algorithm introduced in Ref.~\cite{reimpell2005iterative}, our method yields the encoding circuit, not merely the encoding isometry. After finding a QECC and its encoder, the decoding operation can be found via various methods like semidefinite programming~\cite{fletcher2007optimum}, convex optimization~\cite{fletcher2007channel}, or classical/quantum machine learning~\cite{sweke2020reinforcement, liu2019neural, locher2022quantum}.

	VarQEC allows for non-Hermitian or non-unitary errors and is surprisingly effective. We numerically verify its effectiveness up to 14 qubits. For symmetric Pauli errors, we successfully rediscover many good quantum codes, e.g., $((5,2,3))_2$, $((5,6,2))_2$, $((6,2,3))_2$, $((7,2,3))_2$, $((8,8,3))_2$, $((9,8,3))_2$, $((10,2^4,3))_2$, $((11,2^5,3))_2$, $((12,2^6,3))_2$, $((13,2^7,3))_2$, $((14,2^8,3))_2$, $((10,4,4))_2$. Some of the $((6,2,3))_2$, $((7,2,3))_2$ codes we find are not locally equivalent to any CWS code. It is an open question of whether there is a quantum code with parameters $((7,3,3))_2$, our numerical evidence suggests that it is non-existent. Then we apply VarQEC to search for asymmetric codes (which detect more Pauli-$X$/$Y$ errors than Pauli-$Z$ errors or vice versa) and make new discoveries. Furthermore, we search for channel-adaptive codes for nearest-neighbor collective amplitude damping and nearest-neighbor collective phase-flips, and find eligible new codes with a hardware-efficient encoding circuit for various connectivity graphs. Since VarQEC is capable to find a QECC with the shallowest possible encoding circuit, it is promising to design codes with sufficient fidelity that can be tested and implemented on near-term devices. Although only relatively small systems were investigated in this paper, hierarchical concatenation can construct good quantum codes with large code lengths and distances~\cite{gottesman1997stabilizer, knill1996concatenated, grassl2009generalized}.

	The paper is organized as follows. In Sec.~\ref{Sec:QEC}, we introduce some background of quantum error correction. In Sec.~\ref{Sec:theory}, we introduce our cost functions and present propositions to support our definitions. In Sec.~\ref{Sec:algorithm}, we explain the VarQEC algorithm in detail. In Sec.~\ref{Sec:results}, we show quantum codes (re)discovered thereby, including symmetric, asymmetric, and channel-adaptive codes for nearest-neighbor collective amplitude damping and nearest-neighbor collective phase-flips. Sec.~\ref{Sec:noise-resilience} discusses the noise resilience feature of VarQEC. Sec.~\ref{Sec:BP} discusses the barren plateaus and the noise-induced barren plateaus in VarQEC optimization. In Sec.~\ref{Sec:Experiment}, we verify our algorithm by an experiment on an IBM quantum device. The conclusions and future directions are summarized and discussed in Sec.~\ref{Conclusion}. The appendices give some proof details, a discussion on overparameterization, an alternative variational ansatz, some non-CWS quantum codes, and a list of the quantum weight enumerators of quantum codes discovered by VarQEC.

	\section{Preliminaries}\label{Sec:QEC}
	In classical computation and communication, redundancy is added when encoding a message such that the errors can be detected and corrected. Although each bit may flip with some probability, the encoded message can be recovered with high probability. The philosophy behind quantum error correction is the same. We use several low-fidelity physical qudits (e.g., qubits) to encode the logical quantum information redundantly and nonlocally. Then quantum errors can be detected through syndrome measurements and corrected through a unitary operation. A $q$-ary QECC $\mathcal{C}$ is a $K$-dimensional subspace of the $q^n$-dimensional Hilbert space $(\mathbb{C}^{q})^{\otimes n}$, where $n$ is the number of physical qudits (referred to as the \textit{code length}). For qubit systems, $q=2$, $\mathcal{C} \subset (\mathbb{C}^{2})^{\otimes n}$. When $K=1$, the code is a fixed quantum state without computational use. Throughout this paper, we only discuss $K \geq 2$.

	Knill and Laflamme developed a general theory of quantum error correction. They obtained the sufficient and necessary conditions for an exact QECC~\cite{knill1997theory}: a quantum code with orthonormal basis states $\{|\psi_j\rangle\}$ corrects the error set $\mathcal{E}=\{E_{\alpha}\}$ if and only if
	\begin{equation}\label{eq: KL}
		P_c E_{\alpha}^{\dagger} E_{\beta} P_c=\lambda_{\alpha \beta} P_c,
	\end{equation}
	holds for all $E_{\alpha}, E_{\beta} \in \mathcal{E}$. Here, $P_c = \sum_j |\psi_j\rangle\langle\psi_j|$ is the orthogonal projector onto the code space, and each $\lambda_{\alpha \beta}$ is a complex number. Moreover, we say the quantum code is \textit{non-degenerate} if the matrix $\lambda_{\alpha \beta}$ has full rank~\cite{gottesman2002introduction}.
	
	We can understand these conditions intuitively. When $i \neq j$, $\langle\psi_i|E_{\alpha}^{\dagger} E_{\beta}|\psi_j\rangle = 0$ for any error product $E_{\alpha}^{\dagger} E_{\beta}$.  This means orthogonal logical states remain orthogonal after the noise channel, the logical information is not corrupted. When $i = j$, $\langle\psi_j|E_{\alpha}^{\dagger} E_{\beta}|\psi_j\rangle = \lambda_{\alpha \beta}$ with $\lambda_{\alpha \beta}$ being a constant only determined by the error product. This indicates that the projections between subspaces induced by different errors are information-preserving, the errors have an orthogonal decomposition. Therefore, we can correct the error without knowing or destroying the quantum superposition state.

	The \textit{quantum error detection conditions} have a similar form: a quantum code with code space projector $P_c$ can detect the error set $\mathcal{E} = \{E_{\mu}\}$ if and only if
	\begin{equation}\label{eq: qec_detection}
		P_c E_{\mu} P_c=\lambda_{\mu} P_c
	\end{equation}
	holds for all $E_{\mu} \in \mathcal{E}$. 
	
	In experiments, most quantum errors are uncorrelated single-qudit errors. A natural measure of the capability of a QECC is the number of single-qudit errors that it can detect. This motivated the concept of ``code distance'': the distance of a QECC is the largest possible integer $d$ such that the code can detect any error non-trivially acting on at most $d-1$ qudits. Researchers usually denote the code parameters of a $q$-ary QECC with code length $n$, code dimension $K$, and code distance $d$ as $((n,K,d))_q$.
	
	Comparing the Knill-Laflamme conditions and quantum error detection conditions, we know that a distance-$d$ QECC can correct any error set $\mathcal{E}=\{E_{\alpha}\}$ with each $E_{\alpha}$ non-trivially acting on at most $\lfloor (d-1)/2 \rfloor$ qudits.
	
	For convenience, $2$-ary quantum codes are usually constructed and analyzed in the Pauli framework. Consider an $n$-fold Pauli tensor product
	\begin{equation}
		O_{\alpha} \in \{X,Y,Z,I\}^{\otimes n}.
	\end{equation}
	Denote the number of $X$ factors, $Y$ factors and $Z$ factors in $O_{\alpha}$ as $\operatorname{wt}_{\mathrm{X}}(O_{\alpha})$, $\operatorname{wt}_{\mathrm{Y}}(O_{\alpha})$, and $\operatorname{wt}_{\mathrm{Z}}(O_{\alpha})$. The weight of $O_{\alpha}$ is 
	\begin{equation}
		\operatorname{wt}(O_{\alpha}) = \operatorname{wt}_{\mathrm{X}}(O_{\alpha}) + \operatorname{wt}_{\mathrm{Y}}(O_{\alpha}) + \operatorname{wt}_{\mathrm{Z}}(O_{\alpha}).
	\end{equation}
	An equivalent definition of the code distance of a QECC with projector $P_c$ is the largest possible integer $d$ such that 
	\begin{equation}
		P_c O_{\alpha} P_c=\lambda_{\alpha} P_c
	\end{equation}
	holds for all Pauli tensor product $O_{\alpha}$ with $\operatorname{wt}(O_{\alpha})<d$.
	
	In practical scenarios, Pauli-$Z$ errors are usually more prevalent than Pauli-$X$ and Pauli-$Y$~\cite{aliferis2009fault}. Accordingly, we use a parameter $c_{\text{Z}}$ to characterize this noise bias and define the following $c_{\text{Z}}$-effective weight and $c_{\text{Z}}$-effective distance. 
	
	\begin{definition}
		The $c_{\text{Z}}$-effective weight of a Pauli tensor product $O_{\alpha}$ is 
		\begin{equation}
			\operatorname{wt}_e(O_{\alpha}, c_{\text{Z}}) = \operatorname{wt}_{\mathrm{X}}(O_{\alpha}) + \operatorname{wt}_{\mathrm{Y}}(O_{\alpha}) + c_{\text{Z}} \operatorname{wt}_{\mathrm{Z}}(O_{\alpha}),
		\end{equation}
		where $c_{\text{Z}} > 0$. The $c_{\text{Z}}$-effective distance of a quantum code with projector $P_c$ is the largest possible integer $d_e(c_{\text{Z}})$ such that 
		\begin{equation}
			P_c O_{\alpha} P_c=\lambda_{\alpha} P_c
		\end{equation}
		holds for all Pauli tensor product $O_{\alpha}$ with $\operatorname{wt}_e(O_{\alpha}, c_{\text{Z}})<d_e(c_{\text{Z}})$.
	\end{definition}
	This definition is a generalization of the concept of ``effective distance'' introduced in Ref.~\cite{jackson2016concatenated}. An asymmetric code with code parameters $((n,K,d_e(c_{\text{Z}})))_2$ can correct arbitrary Pauli error with $c_{\text{Z}}$-effective weight smaller than $d_e(c_{\text{Z}})/2$, and detect arbitrary Pauli error with $c_{\text{Z}}$-effective weight smaller than $d_e(c_{\text{Z}})$. When Pauli-$Z$ errors occur more frequently than Pauli-$X$/$Y$ errors, $0<c_{\text{Z}}<1$; when the relaxation times ($T_1$) are much smaller than the dephasing times ($T_2$), Pauli-$X$/$Y$ errors occur more frequently, $c_{\text{Z}}>1$.

	Quantum codes with relatively small distances can be concatenated to construct a code with large code length and distance, as illustrated in Fig.~\ref{fig: schematic_canc}. Suppose we have an outermost code with parameters $((n_1,K,d_1))_q$, other outer codes with parameters $((n_2,q,d_2))_q$, $((n_3,q,d_3))_q$, $\dots$, $((n_{l-1},q,d_{l-1}))_q$, and an inner code with parameters $((n_{l},q,d_{l}))_q$. We can construct a large code through several levels of concatenation: the logical data is first encoded using the outermost code, each physical qudit therein is further encoded using the $((n_2,q,d_2))_q$ code, and so forth. The hierarchically concatenated quantum code has parameters
	\begin{equation}
		((\prod_{j}n_j,K,\prod_{j}d_j))_q.
	\end{equation} 
	
	Likewise, we can concatenate asymmetric codes. A distance lower bound is given as follows. 
	\begin{theorem}\label{theorem1}
		Consider asymmetric outer codes with parameters $((n_1,K,d_e(c_{\text{Z}})=\delta_1))_2$, $((n_2,2,d_e(c_{\text{Z}})=\delta_2))_2$, $((n_3,2,d_e(c_{\text{Z}})=\delta_3))_2$, $\dots$, $((n_{l-1},2,\delta_{l-1}))_2$, and an inner code with parameters $((n_{l},2,\delta_{l}))_2$. Concatenating these codes yields a new code with parameters
		\begin{equation}
			((\prod_{j}n_j,K,d_e(c_{\text{Z}})=\delta))_2
		\end{equation}
		where
		\begin{equation}
			\delta \geq \min\{1,c_{\text{Z}}\}\prod_{j} \left \lceil \frac{\delta_j}{\max\{1, c_{\text{Z}}\}} \right \rceil.
		\end{equation}
	\end{theorem}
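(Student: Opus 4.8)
The plan is to prove the bound by induction on the number of levels $l$, reducing the whole statement to a single \textbf{outer--inner concatenation step}. The step to establish is: if an outer code $C_{\mathrm{out}}=((m,K,a))_2$ of $c_{\text{Z}}$-effective distance $a$ is concatenated with an inner code $C_{\mathrm{in}}=((M,2,\Delta))_2$ of $c_{\text{Z}}$-effective distance $\Delta$ (so $C_{\mathrm{in}}$ encodes one logical qubit and is applied blockwise to the $m$ physical qubits of $C_{\mathrm{out}}$), then the resulting code $C$ on $mM$ qubits of dimension $K$ has $c_{\text{Z}}$-effective distance at least $\lceil a/\max\{1,c_{\text{Z}}\}\rceil\cdot\Delta$. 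Let $V\colon\mathbb{C}^2\to(\mathbb{C}^2)^{\otimes M}$ be the encoding isometry of $C_{\mathrm{in}}$, so $V^{\dagger}V=I_2$ and $P_{\mathrm{in}}=VV^{\dagger}$; since $V^{\otimes m}$ is an isometry, the projector onto $C$ is $P_c=V^{\otimes m}P_{\mathrm{out}}(V^{\otimes m})^{\dagger}$, where $P_{\mathrm{out}}$ is the projector onto $C_{\mathrm{out}}$.

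Given a Pauli tensor product $E$ on the $mM$ qubits I would split it blockwise, $E=\bigotimes_{i=1}^{m}E_i$ with $E_i$ on the $i$-th inner block, so $\sum_i\operatorname{wt}_e(E_i,c_{\text{Z}})=\operatorname{wt}_e(E,c_{\text{Z}})$. Call block $i$ \emph{light} if $\operatorname{wt}_e(E_i,c_{\text{Z}})<\Delta$ and \emph{heavy} otherwise. For a light block, the effective-distance condition for $C_{\mathrm{in}}$ gives $P_{\mathrm{in}}E_iP_{\mathrm{in}}=\lambda_iP_{\mathrm{in}}$; sandwiching by $V^{\dagger}(\cdot)V$ and using $V^{\dagger}V=I_2$ yields $V^{\dagger}E_iV=\lambda_iI_2$. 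If some light block has $\lambda_i=0$, then $(V^{\otimes m})^{\dagger}EV^{\otimes m}=\bigotimes_i(V^{\dagger}E_iV)=0$, so $P_cEP_c=0=0\cdot P_c$ and we are done; hence we may assume every light block contributes a nonzero scalar $\lambda_i I_2$. Expanding each heavy-block $2\times2$ matrix $V^{\dagger}E_iV$ in the single-qubit Pauli basis, $(V^{\otimes m})^{\dagger}EV^{\otimes m}$ becomes a linear combination of Pauli tensors $\tilde\sigma$ on $m$ qubits, each supported only on the heavy-block positions.

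The arithmetic then closes the step. Suppose $\operatorname{wt}_e(E,c_{\text{Z}})<\lceil a/\max\{1,c_{\text{Z}}\}\rceil\cdot\Delta$. Since each heavy block has $\operatorname{wt}_e(E_i,c_{\text{Z}})\geq\Delta$, the number $b$ of heavy blocks satisfies $b\Delta\leq\operatorname{wt}_e(E,c_{\text{Z}})<\lceil a/\max\{1,c_{\text{Z}}\}\rceil\cdot\Delta$, so $b\leq\lceil a/\max\{1,c_{\text{Z}}\}\rceil-1$. Each single-qubit factor of any occurring $\tilde\sigma$ has $c_{\text{Z}}$-effective weight at most $\max\{1,c_{\text{Z}}\}$, and $\tilde\sigma$ is supported on at most $b$ qubits, so $\operatorname{wt}_e(\tilde\sigma,c_{\text{Z}})\leq b\max\{1,c_{\text{Z}}\}\leq(\lceil a/\max\{1,c_{\text{Z}}\}\rceil-1)\max\{1,c_{\text{Z}}\}<a$. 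The effective-distance-$a$ property of $C_{\mathrm{out}}$ now gives $P_{\mathrm{out}}\tilde\sigma P_{\mathrm{out}}=\mu_{\tilde\sigma}P_{\mathrm{out}}$ for each such $\tilde\sigma$, hence $P_cEP_c=V^{\otimes m}P_{\mathrm{out}}\big[(V^{\otimes m})^{\dagger}EV^{\otimes m}\big]P_{\mathrm{out}}(V^{\otimes m})^{\dagger}$ is a scalar multiple of $P_c$, proving the step.

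To finish I would iterate the step from the innermost code outward: starting from effective distance $\delta_l$ for $C_l$ and adjoining $C_{l-1},\dots,C_1$ as successive outer codes multiplies the lower bound by $\lceil\delta_{l-1}/\max\{1,c_{\text{Z}}\}\rceil,\dots,\lceil\delta_1/\max\{1,c_{\text{Z}}\}\rceil$, giving $\delta\geq\big(\prod_{j=1}^{l-1}\lceil\delta_j/\max\{1,c_{\text{Z}}\}\rceil\big)\delta_l$. Finally, a one-line case split on whether $c_{\text{Z}}\leq1$ shows $\delta_l\geq\min\{1,c_{\text{Z}}\}\lceil\delta_l/\max\{1,c_{\text{Z}}\}\rceil$, which converts this into the stated (slightly weaker) symmetric bound. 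I expect the crux to be the worst-case bookkeeping inside the step: a heavy inner block can realize an \emph{arbitrary} single-qubit logical Pauli on its outer-level qubit, so it must be charged the maximal per-qubit effective weight $\max\{1,c_{\text{Z}}\}$, and it is the interaction of this charge with the per-block threshold $\Delta$ and the ceilings that produces the constants in the bound; it is also the reason the result is not a clean product of effective distances.
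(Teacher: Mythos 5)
Your proof is correct and rests on the same core counting argument as the paper's: an undetected error must corrupt at least $\lceil \delta_j/\max\{1,c_{\text{Z}}\}\rceil$ blocks (resp.\ qubits) at each level of the concatenation. Your write-up is the more careful of the two --- the explicit light/heavy block split, the observation that a detected inner-block error acts as a scalar $V^{\dagger}E_iV=\lambda_i I_2$ on the code space, and the Pauli expansion of the induced logical operators all fill in steps the paper's terse contrapositive leaves implicit --- and it yields the slightly sharper intermediate bound $\delta \geq \bigl(\prod_{j=1}^{l-1}\lceil\delta_j/\max\{1,c_{\text{Z}}\}\rceil\bigr)\delta_l$ before weakening to the stated symmetric form.
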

	\begin{proof}
		Assume the concatenated code cannot detect a Pauli tensor product $O_{\alpha}$. For the outer code, errors occur on at least  $\left\lceil\delta_1/\max\{1, c_{\text{Z}}\} \right\rceil$ qubits. Each of these qubits is connected to a block of the first inner code $((n_2,2,d_e(c_{\text{Z}})=\delta_2))_2$ and for every such block, errors occur on at least $\left\lceil\delta_2\max\{1, c_{\text{Z}}\} \right\rceil$ qubits. From similar arguments, errors occur on at least $\left\lceil\delta_{j}\max\{1, c_{\text{Z}}\} \right\rceil$ qubits in the $j$-th block. The weight of $O_{\alpha}$ is bounded by
		\begin{equation}
			\operatorname{wt}(O_{\alpha}) \geq \prod_{j} \left \lceil \delta_j/\max\{1, c_{\text{Z}}\} \right \rceil.
		\end{equation}
		Hence, the $c_{\text{Z}}$-effective weight of $O_{\alpha}$ is at least
		\begin{equation}
			\min\{1,c_{\text{Z}}\}\prod_{j} \left \lceil \delta_j/\max\{1, c_{\text{Z}}\} \right \rceil.
		\end{equation}
		The concatenated code can detect any Pauli tensor product with $c_{\text{Z}}$-effective weight smaller than this value, we conclude
		\begin{equation}
			\delta \geq \min\{1,c_{\text{Z}}\}\prod_{j} \left \lceil \frac{\delta_j}{\max\{1, c_{\text{Z}}\}} \right \rceil.
		\end{equation}
	\end{proof}
	
	\begin{figure}[tb]
		\centering
		\includegraphics[width=6cm]{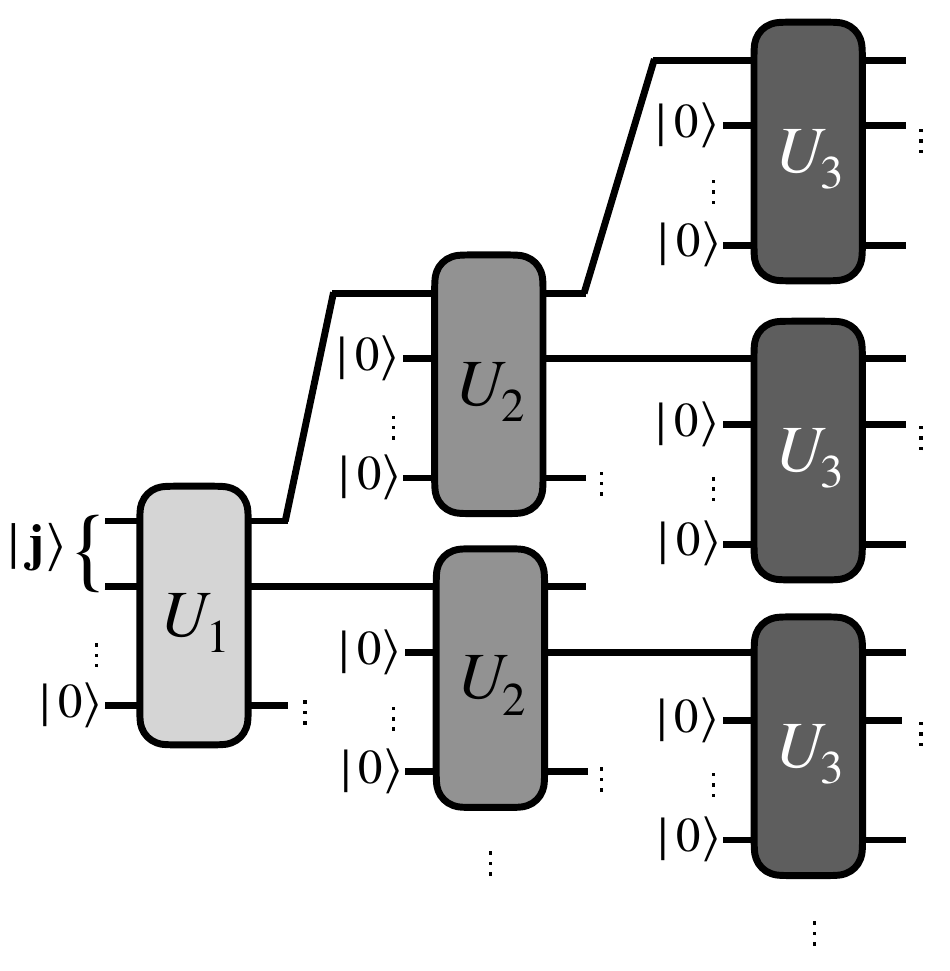}
		\caption{Schematic illustration of quantum code concatenation. After finding quantum codes with encoders $U_1, U_2, U_3,\dots$, we hierarchically concatenate these encoders to obtain a large-distance quantum code.}
		\label{fig: schematic_canc}
	\end{figure}
	
	\section{Theoretical Basis}\label{Sec:theory}
	
	A lot of methods for constructing QECCs are using the stabilizer formalism, but there are not that many outside the Pauli framework. This work aims to search for quantum codes based on the most fundamental principle, i.e., the Knill-Laflamme conditions and the quantum error detection conditions. A crucial tool in our scheme is the variational quantum circuit which consists of multiple layers of parameterized quantum gates. 
	
	The primary ingredient of a variational algorithm is the cost function(s). We define the cost functions of VarQEC as follows.
	
	\begin{definition}[Cost functions]\label{def_cost}
		Consider an error set $\mathcal{E} = \{E_{\mu}\}$ and a length-$n$ quantum code with parameterized orthogonal basis states 
		\begin{equation}
			\{|\psi_1(\boldsymbol{\theta})\rangle,|\psi_2(\boldsymbol{\theta})\rangle,\dots,|\psi_K(\boldsymbol{\theta})\rangle\}.
		\end{equation}
		We define the $\ell_{1}$-norm cost function
		\begin{equation}\label{l1norm}
			\begin{aligned}
				C^{\ell_{1}}_{n, K,\mathcal{E}}(\boldsymbol{\theta})\equiv&\sum_{E_{\mu} \in \mathcal{E}} \Big(\sum_{1 \leq i<j \leq K} \big|\langle\psi_i|E_{\mu} |\psi_j\rangle\big| \\
				&+\sum_{j=1}^K \big|\langle\psi_j|E_{\mu}|\psi_j\rangle-\overline{\langle E_{\mu} \rangle}\big|/2 \Big)
			\end{aligned}
		\end{equation}
		and the $\ell_{2}$-norm cost function
		\begin{equation}\label{l2norm}
			\begin{aligned}
				C^{\ell_{2}}_{n, K,\mathcal{E}}(\boldsymbol{\theta})\equiv&\sum_{E_{\mu} \in \mathcal{E}}  \Big(\sum_{1 \leq i<j \leq K} \big|\langle\psi_i|E_{\mu} |\psi_j\rangle\big|^2 \\
				&+\sum_{j=1}^K \big|\langle\psi_j|E_{\mu}|\psi_j\rangle-\overline{\langle E_{\mu} \rangle}\big|^2/4 \Big)
			\end{aligned}
		\end{equation}
		where $\overline{\langle E_{\mu} \rangle} = \sum_{j=1}^{K}\langle\psi_j|E_{\mu}|\psi_j\rangle/K$.
	\end{definition}
	Clearly, $C^{\ell_{1}}_{n, K,\mathcal{E}}$ and $C^{\ell_{2}}_{n, K,\mathcal{E}}$ are always non-negative and have the same zero-points. When $C^{\ell_{1}}_{n, K,\mathcal{E}} \leq 1$, $C^{\ell_{2}}_{n, K,\mathcal{E}} \leq (C^{\ell_{1}}_{n, K,\mathcal{E}})^2$. When $C^{\ell_{1}}_{n, K,\mathcal{E}}=0$, the quantum code can perfectly detect the error set $\mathcal{E}$.
	
	To find symmetric codes with code parameters $((n,K,d))_2$, we use the Pauli error model and choose
	\begin{equation}
		\mathcal{E}= \{O_{\alpha}|\operatorname{wt}(O_{\alpha})<d \},
	\end{equation}
	where $O_{\alpha}$s are Pauli tensor products. Likewise, when searching for asymmetric codes with code parameters $((n,K,d_e(c_{\text{Z}})))_2$, we choose
	\begin{equation}
		\mathcal{E}= \{O_{\alpha}|\operatorname{wt_e}(O_{\alpha}, c_{\text{Z}})<d_e(c_{\text{Z}}) \}.
	\end{equation}
	To find channel-adaptive codes for a general noise channel $\mathcal{N}(\rho)=\sum_{\alpha} E_{\alpha} \rho E_{\alpha}^{\dagger}$, we choose
	\begin{equation}
		\mathcal{E}= \{E^{\dagger}_{\alpha}E_{\beta}| E_{\alpha}, E_{\beta} \text{ are Kraus operators of } \mathcal{N}\}.
	\end{equation}
	Note that the error set $\mathcal{E}$ in principle can include non-unitary and non-Hermitian errors. For such errors, we can either twirl them to Pauli errors or simulate them directly by adding ancilla qubits and performing \textit{positive-operator valued measures} (POVMs). 
	
	In practice, due to the inexact realization of an encoding isometry, quantum error correction/detection conditions are not exactly satisfied, and QECCs cannot protect the information from errors perfectly. Nevertheless, QECCs can still detect and correct most errors. Such approximate quantum error correction schemes hold great promise~\cite{leung1997approximate,schumacher2002approximate}. A parameter $\varepsilon$ characterizes the \textit{inaccuracy} of an approximate code. If a QECC is $\varepsilon$-correctable for a noise channel $\mathcal{N}$, its worst-case entanglement fidelity is greater than $1-\varepsilon$ with appropriate recovery~\cite{brandao2019quantum}. \text {Bény} $\textit{et al.}$ proposed an approximate version of the Knill-Laflamme conditions for such approximate codes.
	
	\begin{lemma}[Corollary 2, Ref.~\cite{beny2010general}]\label{lemma1}
		A code defined by the projector $P_c$ is $\varepsilon$-correctable under a noise channel $\mathcal{N}(\rho)=\sum_{\alpha} E_{\alpha} \rho E_{\alpha}^{\dagger}$, if and only if
		\begin{equation}\label{eq: PBP}
			P_c E_{\alpha}^{\dagger} E_{\beta} P_c=\lambda_{\alpha \beta} P_c+P_c B_{\alpha \beta} P_c,
		\end{equation}
		where $\lambda_{\alpha \beta}$ are the components of a non-negative Hermitian operator with trace one, $B_{\alpha \beta}$ is a Hermitian operator, and the Bures distance~\cite{bures1969extension} between two channels 
		$\Lambda(\rho)=\sum_{\alpha \beta} \lambda_{\alpha \beta} \operatorname{Tr}(\rho)|\alpha\rangle\langle \beta|$ and $(\Lambda+\mathcal{B})(\rho)=\Lambda(\rho)+\sum_{\alpha \beta} \operatorname{Tr}\left(\rho B_{\alpha \beta}\right)|\alpha\rangle\langle \beta|$ satisfies
		\begin{equation}
			d(\Lambda+\mathcal{B}, \Lambda) \leq \varepsilon.
		\end{equation}
	\end{lemma}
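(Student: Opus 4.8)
Since this lemma is quoted verbatim as Corollary~2 of Ref.~\cite{beny2010general}, rather than reproving it in full I will sketch the structure of an argument. The plan is to pass from correctability of the code to a \emph{decoupling} statement about the complementary channel. First I would fix a Stinespring dilation $V=\sum_{\alpha}E_{\alpha}\otimes|\alpha\rangle$ of $\mathcal{N}$ into system~$\otimes$~environment and write down the induced complementary channel $\hat{\mathcal{N}}$. A direct computation shows that, restricted to inputs supported on the code space, $\hat{\mathcal{N}}(\rho)=\sum_{\alpha\beta}\operatorname{Tr}(P_{c}E_{\beta}^{\dagger}E_{\alpha}P_{c}\,\rho)\,|\alpha\rangle\langle\beta|$; comparing this with the definitions of $\Lambda$ and $\Lambda+\mathcal{B}$, the claimed identity $P_{c}E_{\alpha}^{\dagger}E_{\beta}P_{c}=\lambda_{\alpha\beta}P_{c}+P_{c}B_{\alpha\beta}P_{c}$ is \emph{exactly} the decomposition of $\hat{\mathcal{N}}$ restricted to the code into a constant, information-free channel $\Lambda$ (whose fixed output is the density matrix $(\lambda_{\alpha\beta})$ on the environment) plus a residual $\mathcal{B}$, up to an innocuous relabeling of the environment that leaves the Bures distance invariant. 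Hence the hypothesis $d(\Lambda+\mathcal{B},\Lambda)\le\varepsilon$ merely says that the code is $\varepsilon$-decoupled from the environment.

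For the ``if'' direction I would invoke the information--disturbance tradeoff in its sharp, Bures-metric form: complete both $\hat{\mathcal{N}}$ restricted to the code and the constant channel $\Lambda$ to Stinespring isometries, use Uhlmann's theorem together with the characterization of the Bures distance between channels as the minimal output Bures distance over common dilations to obtain dilations that are $\varepsilon$-close, and note that a constant channel has (up to isometry) the identity as its complement, so it is the signature of a perfectly correctable channel. The system part of the chosen dilation is then $\varepsilon$-close to an isometric encoding, and inverting that encoding and discarding the environment yields a recovery achieving worst-case entanglement fidelity at least $1-\varepsilon$, i.e.\ $\varepsilon$-correctability in the sense quoted before Lemma~\ref{lemma1}. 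The ``only if'' direction runs the same chain of estimates in reverse: given any recovery with entanglement fidelity $\ge 1-\varepsilon$, one shows $\hat{\mathcal{N}}$ must be within Bures distance $\varepsilon$ of \emph{some} constant channel on the code, then \emph{defines} $(\lambda_{\alpha\beta})$ to be the output state of the optimal such channel and $B_{\alpha\beta}$ to be the forced residual $P_{c}E_{\alpha}^{\dagger}E_{\beta}P_{c}-\lambda_{\alpha\beta}P_{c}$ compressed to the code space; the non-negativity, trace-one and Hermiticity properties then follow from $\sum_{\alpha}E_{\alpha}^{\dagger}E_{\alpha}=I$ together with complete positivity of $\hat{\mathcal{N}}$.

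I expect the main obstacle to be \emph{tightness}: making the constant in front of $\varepsilon$ equal to $1$ on both sides, rather than letting $\varepsilon$ degrade to $2\varepsilon$ or $\sqrt{\varepsilon}$. This is precisely why the statement is phrased with the Bures distance, for which Uhlmann's theorem supplies a clean purification picture and which composes well, and why one needs the sharp continuity of the Stinespring dilation following Kretschmann--Schlingemann--Werner; handling the worst-case input in the definition of entanglement fidelity, and checking that the optimization over the free parameter $(\lambda_{\alpha\beta})$ is attained, are the remaining delicate points. For the purposes of this paper one can simply take the equivalence from Ref.~\cite{beny2010general}; the consequence we actually use downstream is the elementary one that the cost functions of Definition~\ref{def_cost} control the size of the residuals $P_{c}B_{\alpha\beta}P_{c}$, hence $d(\Lambda+\mathcal{B},\Lambda)$, hence $\varepsilon$.
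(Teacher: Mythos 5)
The paper does not actually prove this lemma; it is imported verbatim as Corollary~2 of Ref.~\cite{beny2010general}, so your decision to defer to that citation is exactly what the paper does. Your sketch of the underlying argument --- identifying $\Lambda+\mathcal{B}$ with the complementary channel restricted to the code space and invoking the information--disturbance tradeoff together with the Bures-metric continuity of the Stinespring dilation --- faithfully reflects the proof strategy of the cited reference, so there is nothing to correct.
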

	
	Based on this lemma, we modify Corollary 5 of Ref.~\cite{brandao2019quantum} and give a proposition to support our definitions. 
	
	\begin{proposition}\label{prop2}
		Consider an $n$-qubit noise channel $\mathcal{N}(\rho)=\sum_{\alpha} E_{\alpha} \rho E_{\alpha}^{\dagger}$, and a quantum error-correcting code
		\begin{equation}
			\mathcal{C} = \text{span}\{|\psi_1\rangle,|\psi_2\rangle,\dots,|\psi_K\rangle\}.
		\end{equation}
		We choose the error product set\\ $\mathcal{E}= \{E^{\dagger}_{\alpha}E_{\beta}| E_{\alpha}, E_{\beta} \text{ are Kraus operators of } \mathcal{N}\}$.
		Denote the cost function Eq.~\eqref{l1norm} of the basis states as $C^{\ell_{1}}_{n, K,\mathcal{E}}$. Then the code $\mathcal{C}$ is $\varepsilon$-correctable under $\mathcal{N}$ with $\varepsilon$ bounded by
		\begin{equation}
			\varepsilon \leq  K\sqrt{2C^{\ell_{1}}_{n, K,\mathcal{E}}}.
		\end{equation}
	\end{proposition}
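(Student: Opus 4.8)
The plan is to feed an explicit choice of $\lambda_{\alpha\beta}$ and $B_{\alpha\beta}$ into Lemma~\ref{lemma1} and then bound the resulting Bures distance by the cost function. First I would set $\lambda_{\alpha\beta}:=\overline{\langle E_\alpha^\dagger E_\beta\rangle}=\tfrac1K\sum_{j=1}^K\langle\psi_j|E_\alpha^\dagger E_\beta|\psi_j\rangle$ and $B_{\alpha\beta}:=P_c E_\alpha^\dagger E_\beta P_c-\lambda_{\alpha\beta}P_c$, so that Eq.~\eqref{eq: PBP} holds tautologically. One then checks the hypotheses of Lemma~\ref{lemma1}: $(\lambda_{\alpha\beta})$ is Hermitian and positive semidefinite because $\lambda_{\alpha\beta}=\tfrac1K\sum_j\langle E_\alpha\psi_j|E_\beta\psi_j\rangle$ is an average of Gram matrices of the vectors $E_\alpha|\psi_j\rangle$, and $\sum_\alpha\lambda_{\alpha\alpha}=\tfrac1K\sum_j\langle\psi_j|\bigl(\sum_\alpha E_\alpha^\dagger E_\alpha\bigr)|\psi_j\rangle=\tfrac1K\sum_j 1=1$ by trace preservation of $\mathcal N$; and $B_{\beta\alpha}=B_{\alpha\beta}^\dagger$, so the associated map $\mathcal B$ is Hermiticity preserving (indeed $\Lambda+\mathcal B$ is, up to relabelling the environment, the complementary channel of $\mathcal N$ restricted to the code, hence completely positive).

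With this choice $\varepsilon$ equals $d(\Lambda+\mathcal B,\Lambda)$, and the conceptual core is to reduce this to a trace-norm estimate. I would invoke the Fuchs--van de Graaf inequality $d(\rho,\sigma)^2\le\|\rho-\sigma\|_1$ for the Bures distance between states, then tensor with an ancilla and take the supremum over pure inputs to get $d(\Lambda+\mathcal B,\Lambda)^2\le\|\mathcal B\|_\diamond$. Since $\mathcal B(\cdot)=\sum_{\alpha\beta}\mathrm{Tr}(\,\cdot\,B_{\alpha\beta})\,|\alpha\rangle\langle\beta|$, for any pure input $|\Phi\rangle$ on code $\otimes$ ancilla the triangle inequality for the trace norm together with $\bigl\|\mathrm{Tr}_{\mathrm{code}}\bigl[(B_{\alpha\beta}\otimes I)|\Phi\rangle\langle\Phi|\bigr]\bigr\|_1\le\|B_{\alpha\beta}\|_\infty$ yields $\|\mathcal B\|_\diamond\le\sum_{\alpha\beta}\|B_{\alpha\beta}\|_\infty\le\sum_{\alpha\beta}\|B_{\alpha\beta}\|_1$.

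It remains to identify $\sum_{\alpha\beta}\|B_{\alpha\beta}\|_1$ with (a constant times) $C^{\ell_1}_{n,K,\mathcal E}$. On the code space $B_{\alpha\beta}$ has entries $\langle\psi_i|E_\alpha^\dagger E_\beta|\psi_j\rangle$ for $i\neq j$ and $\langle\psi_j|E_\alpha^\dagger E_\beta|\psi_j\rangle-\overline{\langle E_\alpha^\dagger E_\beta\rangle}$ on the diagonal, so $\|B_{\alpha\beta}\|_1\le\sum_{i\neq j}|\langle\psi_i|E_\alpha^\dagger E_\beta|\psi_j\rangle|+\sum_j|\langle\psi_j|E_\alpha^\dagger E_\beta|\psi_j\rangle-\overline{\langle E_\alpha^\dagger E_\beta\rangle}|$. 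Summing over the ordered Kraus pairs $(\alpha,\beta)$, the symmetry $(\alpha,\beta,i,j)\leftrightarrow(\beta,\alpha,j,i)$ folds the off-diagonal sum into $2\sum_{i<j}|\langle\psi_i|E_\alpha^\dagger E_\beta|\psi_j\rangle|$ and the diagonal terms supply the factor $2$ against the $1/2$ weighting in Eq.~\eqref{l1norm}, matching $\sum_{\alpha\beta}\|B_{\alpha\beta}\|_1$ to $2C^{\ell_1}_{n,K,\mathcal E}$ when the products $E_\alpha^\dagger E_\beta$ are distinct; the multiplicity with which a product can recur, together with the (at most $K$-fold) loss in passing from operator to trace norms on the $K$-dimensional code, is where the conservative factor $K$ in the statement is absorbed. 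Chaining the inequalities then gives $\varepsilon\le K\sqrt{2C^{\ell_1}_{n,K,\mathcal E}}$.

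I expect the verification of the Lemma~\ref{lemma1} hypotheses and the Fuchs--van de Graaf reduction to be routine (this is essentially the argument behind Corollary~5 of Ref.~\cite{brandao2019quantum}); the delicate part is the last step — pinning down the exact Bures-distance convention of Ref.~\cite{beny2010general}, handling the ancilla in the diamond-norm estimate, and bookkeeping the multiplicities and dimension factors carefully enough that the bound lands on the claimed constant $K\sqrt2$ rather than something larger.
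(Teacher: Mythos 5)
Your proposal is correct and follows the same skeleton as the paper's proof: the same choice $\lambda_{\alpha\beta}=\overline{\langle E_\alpha^\dagger E_\beta\rangle}$, the same $B_{\alpha\beta}=P_cE_\alpha^\dagger E_\beta P_c-\lambda_{\alpha\beta}P_c$, a reduction of the Bures distance to $\sum_{\alpha\beta}\|B_{\alpha\beta}\|_1$, and the entrywise bound on each trace norm to recover $2C^{\ell_1}_{n,K,\mathcal{E}}$. The one genuine difference is the middle step: the paper simply imports the chain $d(\Lambda+\mathcal{B},\Lambda)\le K\|\mathcal{B}\|_1^{1/2}\le K\bigl(\sum_{\alpha\beta}\|B_{\alpha\beta}\|_1\bigr)^{1/2}$ from the machinery behind Corollary~5 of Ref.~\cite{brandao2019quantum} --- that imported inequality is the sole source of the factor $K$ --- whereas you re-derive this step from first principles via Fuchs--van de Graaf and a diamond-norm estimate, obtaining $d(\Lambda+\mathcal{B},\Lambda)^2\le\|\mathcal{B}\|_\diamond\le\sum_{\alpha\beta}\|B_{\alpha\beta}\|_\infty\le\sum_{\alpha\beta}\|B_{\alpha\beta}\|_1$. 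Note that your chain contains no factor of $K$ at all, so the final paragraph's worry about ``absorbing'' $K$ is unnecessary: since $K\ge1$, your bound $\varepsilon\le\sqrt{2C^{\ell_1}_{n,K,\mathcal{E}}}$ implies the stated one a fortiori (and is in fact sharper). Your approach buys a self-contained and tighter estimate at the cost of having to verify the diamond-norm step and pin down the Bures convention (which works out for all the standard normalizations, since each satisfies $d(\rho,\sigma)^2\le\|\rho-\sigma\|_1$); you also correctly observe that the lemma's hypothesis should be read as $B_{\alpha\beta}^\dagger=B_{\beta\alpha}$ rather than each $B_{\alpha\beta}$ being Hermitian, a point the paper passes over. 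The only residual bookkeeping caveat --- that $\sum_{E_\mu\in\mathcal{E}}$ over the \emph{set} of products can undercount $\sum_{\alpha,\beta}$ over ordered Kraus pairs when products coincide --- is shared by the paper's own final equality and is not specific to your argument.
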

	\begin{proof}
		Let $\lambda_{\alpha \beta} = \sum_{j}\langle\psi_j|E^{\dagger}_{\alpha}E_{\beta}|\psi_j\rangle/K$. 
		
		To satisfy Eq.~\eqref{eq: PBP}, we set
		\begin{equation}
			\begin{aligned}
				B_{\alpha \beta}=&\sum_{i \neq j}\langle\psi_{i}|E^{\dagger}_{\alpha} E_{\beta}| \psi_{j}\rangle|\psi_{i}\rangle\langle\psi_{j}|\\&+\sum_{j}\big(\langle\psi_{j}|E^{\dagger}_{\alpha}E_{\beta}| \psi_{j}\rangle-\lambda_{\alpha \beta}\big)| \psi_{j}\rangle\langle\psi_{j}|.
			\end{aligned}
		\end{equation}
		
		Then
		\begin{equation}
			\begin{aligned}
				d(\Lambda+\mathcal{B}, \Lambda) &\leq K\|\mathcal{B}\|_{1}^{1 / 2} \\&\leq K\Big(\sum_{\alpha,\beta}\left\|B_{\alpha \beta}\right\|_{1}\Big)^{1 / 2} \\
				&\leq K\Big(\sum_{\alpha,\beta}\Big(\sum_{i \neq j} |\langle\psi_i|E^{\dagger}_{\alpha}E_{\beta} |\psi_j\rangle| +\\ &\quad\quad\sum_{j} |\langle\psi_j|E^{\dagger}_{\alpha}E_{\beta}|\psi_j\rangle-\lambda_{\alpha \beta}|\Big)\Big)^{1 / 2}\\
				&= K\sqrt{2C^{\ell_{1}}_{n, K,\mathcal{E}}}.
			\end{aligned}
		\end{equation}
		According to Lemma \ref{lemma1}, the inaccuracy $\varepsilon$ of code $\mathcal{C}$ is upper bounded by $K\sqrt{2C^{\ell_{1}}_{n,K,\mathcal{E}}}$. 
	\end{proof}
	
	In short, given a noise channel $\mathcal{N}$, as long as we minimize the channel-adaptive cost function to a sufficiently small value, we rigorously find an approximate channel-adaptive code with small inaccuracy. Similar bounds for symmetric or asymmetric quantum codes are given as follows.
	
	\begin{proposition}\label{prop3}
		Consider an $n$-qubit noise channel $\mathcal{N}(\rho)=\sum_{\alpha} E_{\alpha} \rho E_{\alpha}^{\dagger}$ where each $E_{\alpha}$ non-trivially acts on no more than $\lfloor (d-1)/2 \rfloor$ qubits, and a quantum error-correcting code
		\begin{equation}
			\mathcal{C} = \text{span}\{|\psi_1\rangle,|\psi_2\rangle,\dots,|\psi_K\rangle\}.
		\end{equation}
		We choose $\mathcal{E}= \{O_{\alpha}|\operatorname{wt}(O_{\alpha})<d \}$, where $O_{\alpha}$s are Pauli tensor products. Denote the cost function Eq.~\eqref{l1norm} of the basis states as $C^{\ell_{1}}_{n, K, \mathcal{E}}$, the number of Kraus operators of $\mathcal{N}$ as $m$. Then the code $\mathcal{C}$ is $\varepsilon$-correctable under $\mathcal{N}$ with $\varepsilon$ bounded by
		\begin{equation}
			\varepsilon \leq  2^{n/4 + d/2}K\sqrt{mC^{\ell_{1}}_{n, K,\mathcal{E}}}.
		\end{equation}
	\end{proposition}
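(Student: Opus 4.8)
The plan is to reduce the statement to Proposition~\ref{prop2} by passing from the Kraus operators of $\mathcal N$ to the Pauli basis. Applying Proposition~\ref{prop2} to $\mathcal N$ with the Kraus-product error set $\mathcal E':=\{E_\alpha^{\dagger}E_\beta\}$ (writing $\mathcal E'$ for what that proposition calls $\mathcal E$, to avoid a clash with the Pauli error set of the present statement) already yields $\varepsilon\le K\sqrt{2\,C^{\ell_1}_{n,K,\mathcal E'}}$, so it suffices to establish the comparison
\begin{equation*}
C^{\ell_1}_{n,K,\mathcal E'}\ \le\ 2^{\,n/2+d-1}\,m\;C^{\ell_1}_{n,K,\mathcal E},
\end{equation*}
because $\sqrt{2\cdot 2^{\,n/2+d-1}}=2^{\,n/4+d/2}$ turns this into the claimed bound.

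For the comparison I would use the locality hypothesis. Since each $E_\alpha$ acts non-trivially on at most $\lfloor(d-1)/2\rfloor$ qubits, each product $E_\alpha^{\dagger}E_\beta$ is supported on at most $2\lfloor(d-1)/2\rfloor\le d-1$ qubits; expanding it in the Pauli basis, $E_\alpha^{\dagger}E_\beta=\sum_\gamma c^{(\alpha\beta)}_\gamma O_\gamma$ with $c^{(\alpha\beta)}_\gamma=2^{-n}\operatorname{Tr}(O_\gamma^{\dagger}E_\alpha^{\dagger}E_\beta)$, every Pauli tensor product $O_\gamma$ with non-zero coefficient satisfies $\operatorname{wt}(O_\gamma)\le d-1<d$, hence $O_\gamma\in\mathcal E$, and at most $4^{\,d-1}$ of them occur for each pair $(\alpha,\beta)$. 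Plugging this expansion into Eq.~\eqref{l1norm}, using $\overline{\langle E_\alpha^{\dagger}E_\beta\rangle}=\sum_\gamma c^{(\alpha\beta)}_\gamma\,\overline{\langle O_\gamma\rangle}$ and the triangle inequality term by term, the contribution of $E_\alpha^{\dagger}E_\beta$ to $C^{\ell_1}_{n,K,\mathcal E'}$ is at most $\big(\sum_\gamma|c^{(\alpha\beta)}_\gamma|\big)\max_\gamma g(O_\gamma)$, where $g(O)$ denotes the bracketed per-error quantity in Eq.~\eqref{l1norm}; and since $\sum_{O_\gamma\in\mathcal E}g(O_\gamma)=C^{\ell_1}_{n,K,\mathcal E}$ with $g\ge0$ we have $\max_\gamma g(O_\gamma)\le C^{\ell_1}_{n,K,\mathcal E}$. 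Summing over the $m^2$ Kraus pairs gives $C^{\ell_1}_{n,K,\mathcal E'}\le C^{\ell_1}_{n,K,\mathcal E}\,\sum_{\alpha,\beta}\sum_\gamma|c^{(\alpha\beta)}_\gamma|$.

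It then remains to bound $\sum_{\alpha,\beta}\sum_\gamma|c^{(\alpha\beta)}_\gamma|$. Cauchy--Schwarz over the at most $4^{\,d-1}$ active Paulis together with Parseval's identity $\sum_\gamma|c^{(\alpha\beta)}_\gamma|^2=2^{-n}\|E_\alpha^{\dagger}E_\beta\|_F^2$ give $\sum_\gamma|c^{(\alpha\beta)}_\gamma|\le 2^{\,d-1-n/2}\|E_\alpha^{\dagger}E_\beta\|_F$. A second Cauchy--Schwarz, now over the $m^2$ Kraus pairs, gives $\sum_{\alpha,\beta}\|E_\alpha^{\dagger}E_\beta\|_F\le m\big(\sum_{\alpha,\beta}\|E_\alpha^{\dagger}E_\beta\|_F^2\big)^{1/2}=m\,\|\mathcal N(I)\|_F\le m\operatorname{Tr}\mathcal N(I)=m\,2^{n}$, where I used $\sum_{\alpha,\beta}\|E_\alpha^{\dagger}E_\beta\|_F^2=\operatorname{Tr}\!\big(\mathcal N(I)^2\big)$ (a consequence of $\sum_\alpha E_\alpha E_\alpha^{\dagger}=\mathcal N(I)$), positivity of $\mathcal N(I)$, and trace preservation. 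Multiplying the two estimates yields $\sum_{\alpha,\beta}\sum_\gamma|c^{(\alpha\beta)}_\gamma|\le 2^{\,d-1-n/2}\cdot m\,2^{n}=2^{\,n/2+d-1}m$, hence the comparison, and therefore $\varepsilon\le 2^{\,n/4+d/2}K\sqrt{m\,C^{\ell_1}_{n,K,\mathcal E}}$.

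The structural parts --- invoking Proposition~\ref{prop2}, the observation that locality forces every Pauli in the expansion to have weight $<d$ (so it lies in $\mathcal E$), and the term-by-term triangle inequality --- are routine. The only delicate point, and the place I would expect to spend the most care, is the constant bookkeeping in the final step: there are several admissible ways to split the sums (Hölder with the $\ell_1$--$\ell_\infty$ pairing over the Paulis versus the $\ell_2$--$\ell_2$ pairing, per-pair versus aggregate Frobenius-norm bounds, and the order in which the two Cauchy--Schwarz steps are applied), and one must choose the combination that reproduces exactly the stated prefactor $2^{\,n/4+d/2}$ and the $\sqrt m$ scaling rather than a differently-shaped constant.
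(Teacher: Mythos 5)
Your proposal is correct and follows essentially the same route as the paper's proof in Appendix~\ref{ap:proof7}: reduce to Proposition~\ref{prop2} via the error-product set $\{E_\alpha^{\dagger}E_\beta\}$, expand each product in the Pauli basis (locality forcing weight $<d$ and at most $4^{d-1}$ active Paulis per pair), and bound $\sum_{\alpha,\beta,\gamma}|\chi_\gamma^{\alpha\beta}|$ by $2^{n/2+d-1}m$ using Parseval together with the estimate $\operatorname{Tr}\bigl(\sum_{\alpha,\beta}(E_\alpha^{\dagger}E_\beta)^{\dagger}E_\alpha^{\dagger}E_\beta\bigr)\le 2^{2n}$. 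The only differences are organizational --- you split the Cauchy--Schwarz into a per-pair step followed by a sum over Kraus pairs and phrase the trace bound as $\|\mathcal N(I)\|_F\le\operatorname{Tr}\mathcal N(I)$, where the paper applies one Cauchy--Schwarz over the joint index set and invokes von Neumann's trace inequality --- and both yield the identical constant.
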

	
	\begin{proof}
		The proof is given in Appendix ~\ref{ap:proof7}.
	\end{proof}

	\begin{proposition}\label{prop4}
		Consider an $n$-qubit noise channel $\mathcal{N}(\rho)=\sum_{\alpha} E_{\alpha} \rho E_{\alpha}^{\dagger}$ with each $E_{\alpha}$ proportional to a Pauli error with $c_{\text{Z}}$-effective weight smaller than $d_e(c_{\text{Z}})/2$, and a quantum error-correcting code
		\begin{equation}
			\mathcal{C} = \text{span}\{|\psi_1\rangle,|\psi_2\rangle,\dots,|\psi_K\rangle\}.
		\end{equation}
		We choose $\mathcal{E}= \{O_{\alpha}|\operatorname{wt_e}(O_{\alpha}, c_{\text{Z}})<d_e(c_{\text{Z}}) \}$, where $O_{\alpha}$s are Pauli tensor products.
		Denote the cost function Eq.~\eqref{l1norm} of the basis states as $C^{\ell_{1}}_{n, K, \mathcal{E}}$, the number of Kraus operators of $\mathcal{N}$ as $m$. Then the code $\mathcal{C}$ is $\varepsilon$-correctable under $\mathcal{N}$ with $\varepsilon$ bounded by
		\begin{equation}
			\varepsilon \leq  K\sqrt{2mC^{\ell_{1}}_{n, K,\mathcal{E}}}.
		\end{equation}
	\end{proposition}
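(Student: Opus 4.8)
The plan is to follow the proof of Proposition~\ref{prop2} essentially verbatim, with two adjustments: here the cost function is built from the fixed set $\mathcal{E}$ of Paulis of small $c_{\text{Z}}$-effective weight rather than from the channel's own error products, so I first need to check that every Kraus product lands in $\mathcal{E}$; and the bound will pick up an extra factor $m$ because several Kraus products may collapse onto the same Pauli.

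First I would set up the algebra of the Kraus operators. Write each as $E_\alpha = c_\alpha P_\alpha$ with $P_\alpha \in \{I,X,Y,Z\}^{\otimes n}$ and $\operatorname{wt}_e(P_\alpha, c_{\text{Z}}) < d_e(c_{\text{Z}})/2$. Since every such $P_\alpha$ is Hermitian and squares to the identity, trace preservation $\sum_\alpha E_\alpha^\dagger E_\alpha = I$ forces $\sum_\alpha |c_\alpha|^2 = 1$, and each product factorizes as $E_\alpha^\dagger E_\beta = \gamma_{\alpha\beta}\,O_{\alpha\beta}$ with $O_{\alpha\beta}\in\{I,X,Y,Z\}^{\otimes n}$ (again Hermitian) and $|\gamma_{\alpha\beta}| = |c_\alpha|\,|c_\beta|$. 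The crucial preliminary point is that $\operatorname{wt}_e(\cdot,c_{\text{Z}})$ is subadditive under Pauli multiplication, which I would verify position by position, the only nontrivial single-qubit inequality being $\operatorname{wt}_e(XY,c_{\text{Z}}) = c_{\text{Z}} \le \operatorname{wt}_e(X,c_{\text{Z}}) + \operatorname{wt}_e(Y,c_{\text{Z}}) = 2$; hence $\operatorname{wt}_e(O_{\alpha\beta},c_{\text{Z}}) \le \operatorname{wt}_e(P_\alpha,c_{\text{Z}}) + \operatorname{wt}_e(P_\beta,c_{\text{Z}}) < d_e(c_{\text{Z}})$, so $O_{\alpha\beta}\in\mathcal{E}$.

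Next I would reuse the constructions of Proposition~\ref{prop2}: put $\lambda_{\alpha\beta} = \tfrac{1}{K}\sum_j \langle\psi_j|E_\alpha^\dagger E_\beta|\psi_j\rangle$ (trace preservation again makes $(\lambda_{\alpha\beta})$ a unit-trace positive semidefinite array) and take the same $B_{\alpha\beta}$, so that Eq.~\eqref{eq: PBP} holds. Just as in Proposition~\ref{prop2} one has $d(\Lambda+\mathcal{B},\Lambda) \le K\|\mathcal{B}\|_1^{1/2} \le K\big(\sum_{\alpha,\beta}\|B_{\alpha\beta}\|_1\big)^{1/2}$ and $\|B_{\alpha\beta}\|_1 \le \sum_{i\neq j}|\langle\psi_i|E_\alpha^\dagger E_\beta|\psi_j\rangle| + \sum_j |\langle\psi_j|E_\alpha^\dagger E_\beta|\psi_j\rangle - \lambda_{\alpha\beta}|$. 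I would then pull $\gamma_{\alpha\beta}$ out of every term and use the Hermiticity of $O_{\alpha\beta}$ to rewrite $\sum_{i\neq j}$ as $2\sum_{i<j}$; the right-hand side becomes $2|\gamma_{\alpha\beta}|$ times the $O_{\alpha\beta}$-summand of $C^{\ell_1}_{n,K,\mathcal{E}}$, and since every summand of $C^{\ell_1}_{n,K,\mathcal{E}}$ is non-negative and $O_{\alpha\beta}\in\mathcal{E}$, that summand is at most $C^{\ell_1}_{n,K,\mathcal{E}}$. Therefore $\sum_{\alpha,\beta}\|B_{\alpha\beta}\|_1 \le 2\,C^{\ell_1}_{n,K,\mathcal{E}}\sum_{\alpha,\beta}|\gamma_{\alpha\beta}| = 2\,C^{\ell_1}_{n,K,\mathcal{E}}\big(\sum_\alpha |c_\alpha|\big)^2 \le 2m\,C^{\ell_1}_{n,K,\mathcal{E}}$, the last inequality by Cauchy--Schwarz together with $\sum_\alpha|c_\alpha|^2 = 1$. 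Substituting back gives $d(\Lambda+\mathcal{B},\Lambda) \le K\sqrt{2m\,C^{\ell_1}_{n,K,\mathcal{E}}}$, and Lemma~\ref{lemma1} yields the claimed $\varepsilon \le K\sqrt{2m\,C^{\ell_1}_{n,K,\mathcal{E}}}$.

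The step I expect to require the most care — really the only departure from Proposition~\ref{prop2} — is this ``double conversion'': because the cost here sums over a channel-independent set of Paulis, distinct Kraus pairs $(\alpha,\beta)$ generally map to the same $O_{\alpha\beta}$ and drag along scalar weights $\gamma_{\alpha\beta}$, so I must bound each $O_{\alpha\beta}$-summand crudely by the whole cost and control $\sum_{\alpha,\beta}|\gamma_{\alpha\beta}|$ separately via trace preservation, which is exactly where the factor $m$ appears. (In contrast to Proposition~\ref{prop3}, no $2^{n/4+d/2}$ overhead shows up, because the Kraus operators are already proportional to Paulis and need no expansion in the Pauli basis.) A secondary, essentially routine, point is the subadditivity of $\operatorname{wt}_e(\cdot,c_{\text{Z}})$ under multiplication, which is what certifies $O_{\alpha\beta}\in\mathcal{E}$ in the first place.
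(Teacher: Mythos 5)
Your proposal is correct and follows essentially the same route as the paper: the paper's Appendix proof likewise writes $E_{\alpha}^{\dagger}E_{\beta}=\chi^{\alpha\beta}O^{\alpha\beta}$ with $O^{\alpha\beta}\in\mathcal{E}$, derives $\sum_{\alpha,\beta}|\chi^{\alpha\beta}|^{2}=1$ from trace preservation, bounds each single-Pauli summand crudely by the full sum over $\mathcal{E}$, and obtains the factor $m$ from the same Cauchy--Schwarz step $\sum_{\alpha,\beta}|\chi^{\alpha\beta}|\leq m$. The only difference is organizational --- the paper packages the argument as $C^{\ell_{1}}_{n,K,\mathcal{E}'}\leq m\,C^{\ell_{1}}_{n,K,\mathcal{E}}$ for the error-product set $\mathcal{E}'$ and then invokes Proposition~\ref{prop2}, whereas you inline the Bény-type bound on $\sum_{\alpha,\beta}\|B_{\alpha\beta}\|_{1}$ directly (and you are in fact slightly more careful than the paper in making explicit the subadditivity of $\operatorname{wt}_e$ needed to certify $O^{\alpha\beta}\in\mathcal{E}$).
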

	\begin{proof}
		The proof is given in Appendix ~\ref{ap:proof8}.
	\end{proof}
	
	Note that Propositions~\ref{prop3} and~\ref{prop4} give pretty loose bounds. The true code inaccuracy, which depends on the particular noise channel, is usually significantly smaller.

	\section{Algorithm}\label{Sec:algorithm}
	
	Variational quantum circuits (VQCs) have been widely used in near-term quantum algorithms for various tasks~\cite{cerezo2021variational, bharti2022noisy}, such as ground state preparation~\cite{peruzzo2014variational, kandala2017hardware}, eigenenergy estimation~\cite{nam2020ground, cao2021energy}, quantum data compression~\cite{romero2017quantum, cao2021noise}, quantum circuit compiling~\cite{sharma2020noise, xu2021variational}. Give a pure product state as input, one iteratively updates the circuit parameters based on measurement results, and finally outputs the desired state. In VarQEC, the output states serve as the basis states of a quantum code, and its encoder is given by the quantum circuit. The structure of our algorithm is illustrated in Fig.~\ref{fig: schematic_varqec}.

	\begin{figure}[tb]
		\centering
		\includegraphics[width=6.5cm]{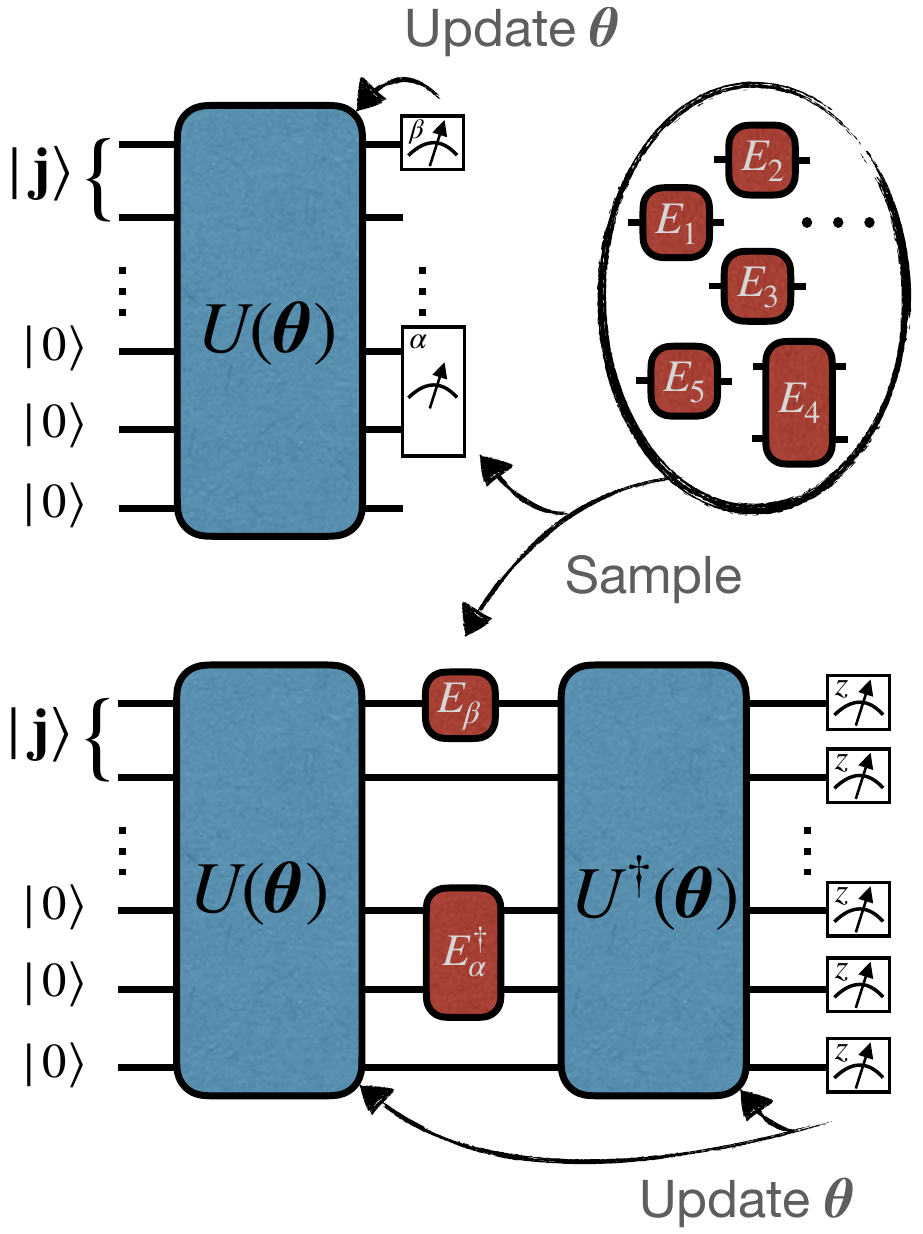}
		\caption{Schematic illustration of VarQEC. The encoder is trained via mini-batch learning: we iteratively sample errors from an error set, run the variational quantum circuit $U(\boldsymbol{\theta})$ and do measurements, then update $\boldsymbol{\theta}$.}
		\label{fig: schematic_varqec}
	\end{figure}
	
	Suppose we have a NISQ device with a \textit{hardware connectivity graph} $G$. The vertices denote qubits, and the edges denote adjacent qubit pairs. One can apply single-qubit rotations to each qubit and two-qubit gates to adjacent qubits. We aim to find a $K$-dimensional QECC that can detect an error set $\mathcal{E} = \{E_{\mu}\}$, and the encoding circuit should be as shallow as possible. 
	
	Before running the algorithm, we design a multilayered VQC which is hardware-efficient for the connectivity graph. Denote the number of VQC layers as $L$, the maximum acceptable number of layers as $L_{\mathrm{max}}$, the evolution of the VQC as $U(\boldsymbol{\theta})$ where $\boldsymbol{\theta}$ are the circuit parameters. We start from $L=1$ and sample the initial $\boldsymbol{\theta}$ randomly. Also, we delicately select $k = \lceil\log(K)\rceil$ physical qubits to prepare the logical data, where the logarithm is with respect to base 2. These $k$ qubits should be scattered instead of concentrated since we hope the remaining qubits are connected to them by very few edges.
	
	First, we initialize the selected qubits to one of the $K$ binary strings $|\mathbf{0}\rangle$, $|\mathbf{1}\rangle$, $\dots$, $|\mathbf{K-1}\rangle$, and initialize the remaining qubits to $|0\rangle^{\otimes(n-k)}$. These product states span the input code space
	\begin{equation}
		\mathcal{C}_{\mathrm{in}} = \text{span}\{|\mathbf{0}\rangle|0\rangle^{\otimes(n-k)},\dots,|\mathbf{K-1}\rangle|0\rangle^{\otimes(n-k)}\}.
	\end{equation}
	$\mathcal{C}_{\mathrm{in}}$ is a QECC with distance $d = 1$. 
	
	The cost functions can be estimated by running specific circuits and doing measurements. To estimate $\langle\psi_j|E_{\mu}|\psi_j\rangle$, we prepare the initial state $|\mathbf{j-1}\rangle|0\rangle^{\otimes(n-k)}$, evolve the system with the VQC $U(\boldsymbol{\theta})$, then measure the local observable $E_{\mu}$. If errors $E_{\mu1}$, $E_{\mu2}$, $\dots$ commute, they can be measured simultaneously in a single shot. To estimate $\big|\langle\psi_i|E_{\mu} |\psi_j\rangle\big|$, we start from $|\mathbf{j-1}\rangle|0\rangle^{\otimes(n-k)}$, then sequentially evolve the system with  $U(\boldsymbol{\theta})$, $E_{\mu}$, and $U^{\dagger}(\boldsymbol{\theta})$, then measure the final state in the computational basis. The measurements are assisted by \textit{post-selection}: we first measure the $n-k$ auxiliary qubits, and if the result is $|0\rangle^{n-k}$,  we measure the remaining $k$ qubits. Denote the probability of obtaining the binary string $|\mathbf{i-1}\rangle|0\rangle^{\otimes(n-k)}$ as $p_{ij}$, $\big|\langle\psi_i|E_{\mu} |\psi_j\rangle\big|=\sqrt{p_{ij}}$. Theoretically, this step will also yield $\langle\psi_j|E_{\mu}|\psi_j\rangle$. However, since VarQEC is a NISQ algorithm, we prefer to use a shallower circuit to estimate cost function terms whenever possible.
	
	In the above description, we assume the error set $\mathcal{E}$ only consists of Pauli errors. It does not matter if $\mathcal{E}$ includes non-unitary or non-Hermitian terms. Adding ancilla qubits or Pauli twirling can handle it. See Sec. \ref{Sec:collective-ad} for a detailed example.
	
	The optimization of $\boldsymbol{\theta}$ consists of two stages. The first and the main stage is \textit{mini-batch learning}. After sampling the initial $\boldsymbol{\theta}$, we minimize $C^{\ell_{2}}_{n, K,\mathcal{E}}$ with mini-batch gradient descent. The schematic is shown in Fig.~\ref{fig: schematic_varqec}. Within each iteration, we sample a subset $\mathcal{E}_{S} \subset  \mathcal{E}$, estimate the corresponding partial $\ell_{2}$-norm cost function
	\begin{equation}
		\begin{aligned}
			C^{\ell_{2}}_{n, K,\mathcal{E}_{S}}(\boldsymbol{\theta})\equiv&\sum_{E_{\mu} \in \mathcal{E}_{S}} \Big(\sum_{1 \leq i<j \leq K} \big|\langle\psi_i|E_{\mu} |\psi_j\rangle\big|^2 \\
			&+\sum_{j=1}^K \big|\langle\psi_j|E_{\mu}|\psi_j\rangle-\overline{\langle E_{\mu} \rangle}\big|^2/4 \Big)
		\end{aligned}
	\end{equation}
	and its gradient 
	\begin{equation}
		\nabla \boldsymbol{\theta} =\frac{\partial C^{\ell_{2}}_{n, K,\mathcal{E}_{S}}(\boldsymbol{\theta})}{\partial \boldsymbol{\theta}}
	\end{equation}
	via measurements, then perform a single gradient descent step with a learning rate $\eta$:
	\begin{equation}
		\boldsymbol{\theta} \leftarrow \boldsymbol{\theta} - \eta\nabla\boldsymbol{\theta}.
	\end{equation}
	The required number of measurements to estimate $C^{\ell_{2}}_{n, K,\mathcal{E}_{S}}(\boldsymbol{\theta})$ up to additive error $\epsilon$ is of order $\mathcal{O}(K^2|\mathcal{E}_{S}|^2/\epsilon^2)$. The gradient can be estimated through finite-differencing or by combining the chain rule and the parameter shift rule~\cite{mitarai2018quantum}. Mini-batch gradient descent allows for a more robust convergence and avoids being trapped in a local minimum. We repeat sampling and gradient descent until convergence. The reason that we minimize $C^{\ell_{2}}_{n, K,\mathcal{E}}$ first is because it converges much faster than $C^{\ell_{1}}_{n, K,\mathcal{E}}$. In addition, $C^{\ell_{2}}_{n, K,\mathcal{E}}$ is differentiable but $C^{\ell_{1}}_{n, K,\mathcal{E}}$ is not. 
	
	If the error set $\mathcal{E}$ consists of too many terms, a promising alternative method is to construct ``classical shadows''~\cite{huang2020predicting} for each basis state $|\psi_j\rangle$, then use the shadows to estimate the cost functions classically. The shadow tomography technique can help us implement large-batch optimization with a smaller measurement overhead.

	After adequate mini-batch learning, if $C^{\ell_{2}}_{n, K, \mathcal{E}}$ is relatively small (e.g., $C^{\ell_{2}}_{n, K, \mathcal{E}} < 0.01$), we estimate $C^{\ell_{1}}_{n, K,\mathcal{E}}$ and \textit{fine-tune} the parameters $\boldsymbol{\theta}$ with respect to it since $C^{\ell_{1}}_{n, K,\mathcal{E}}$ is directly related to the inaccuracy of the code (see Propositions~\ref{prop2},~\ref{prop3},~\ref{prop4}). In this work, we use Powell's method~\cite{powell1964efficient}, a gradient-free optimizer, for fine-tuning. If $C^{\ell_{1}}_{n, K,\mathcal{E}}$ is smaller than an acceptable cost tolerance $C^{\ell_{1}}_{\mathrm{tol}}$, we stop the optimization and output the final parameters $\boldsymbol{\theta}_{\mathrm{opt}}$. Throughout this paper, we set the tolerance as 
	\begin{equation}
		C^{\ell_{1}}_{\mathrm{tol}} \equiv 1 \times 10^{-6}.
	\end{equation}
	In the ideal case, we obtain the optimal parameters
	\begin{equation}
		\boldsymbol{\theta}_{\mathrm{opt}} =  \arg \min _{\boldsymbol{\theta}} C^{\ell_{1}}_{n, K,\mathcal{E}}(\boldsymbol{\theta}).
	\end{equation}
	The output QECC
	\begin{equation}
		\begin{aligned}
			\mathcal{C}_{\mathrm{out}}(\boldsymbol{\theta}_{\mathrm{opt}})=  & \text{span}\{ \\ &|\psi_1\rangle = U(\boldsymbol{\theta}_{\mathrm{opt}})|\mathbf{0}\rangle|0\rangle^{\otimes(n-k)},
			\\&  |\psi_2\rangle = U(\boldsymbol{\theta}_{\mathrm{opt}})|\mathbf{1}\rangle|0\rangle^{\otimes(n-k)},
			\\& \dots,
			\\&|\psi_K\rangle = U(\boldsymbol{\theta}_{\mathrm{opt}})|\mathbf{K-1}\rangle|0\rangle^{\otimes(n-k)}\\ &\qquad \quad \}.
		\end{aligned}
	\end{equation}
	is the target approximate quantum code with small inaccuracy. The variational quantum circuit $U(\boldsymbol{\theta}_{\mathrm{opt}})$ serves as the encoding circuit. Further, we can remove redundant gates from the VQC.
	
	If $C^{\ell_{1}}_{n, K,\mathcal{E}}$ is greater than $C^{\ell_{1}}_{\mathrm{tol}}$, we increase the circuit depth $L$ and repeat the optimization steps. If $C^{\ell_{1}}_{n, K,\mathcal{E}}$ is always greater than the tolerance even when $L = L_{\mathrm{max}}$, we fail to find an eligible code. The detailed procedure is illustrated in Algorithm~\ref{alg:qnn}.

	A natural question arises: can a fixed-depth VQC find any $((n, K))_2$ quantum code? \text {Haug} $\textit{et al.}$ used the quantum Fisher information matrix to assess the expressive power of a VQC with a fixed input state $|0\rangle^{\otimes n}$~\cite{QFIM_1}. We generalize this notion to multiple inputs to assess the expressive power of a VQC in VarQEC. If a VQC is capable of finding any $((n, K))_2$ quantum code, we say it is \textit{overparameterized} with respect to code parameters $((n, K))_2$. See Appendix~\ref{ap:effect quantum dimension} for more details.
	
	When the VQC $U(\boldsymbol{\theta})$ is underparameterized for $((n,K))_2$, the set of reachable output codes forms a low-dimensional submanifold of the complex Grassmannian $\mathbf{G r}(K, 2^n)$,
	\begin{equation}
		\{\mathcal{C}_{\mathrm{out}}(\boldsymbol{\theta})|\boldsymbol{\theta}\} \subseteq \mathbf{G r}(K, 2^n).
	\end{equation}
	\removelatexerror
	\begin{algorithm*}[H]
		\caption{VarQEC}
		\label{alg:qnn}
		\SetKwInOut{Return}{Return}
		\KwIn{Error set $\mathcal{E}$, hardware-efficient VQC $U(\boldsymbol{\theta})$ with $L$ layers, acceptable number of layers $L_{\mathrm{max}}$, acceptable cost tolerance $C^{\ell_{1}}_{\mathrm{tol}}$.}
		\KwOut{An approximate quantum code with a hardware-efficient encoder that detects $\mathcal{E}$.}
		$L \leftarrow 1$.\\
		\While{$L \leq L_{\mathrm{max}}$ and $C^{\ell_{1}}_{n, K,\mathcal{E}}(\boldsymbol{\theta}) > C^{\ell_{1}}_{\mathrm{tol}}$}{
			\While{$C^{\ell_{2}}_{n, K,\mathcal{E}}(\boldsymbol{\theta})$ has not converged}{
				Sample a subset $\mathcal{E}_S \subset \mathcal{E}$.\\
				Prepare the $K$ input strings.\\
				Run $U(\boldsymbol{\theta})$, output $\{|\psi_j\rangle\}$.\\
				Measure observables $E_{\mu} \in \mathcal{E}_S$.\\
				Prepare the $K$ input strings.\\
				Run $U^{\dagger}(\boldsymbol{\theta})E_{\mu} U(\boldsymbol{\theta})$ for $E_{\mu} \in \mathcal{E}_S$.\\
				Do projective measurements. \\
				Estimate $C^{\ell_{2}}_{n, K,\mathcal{E}_S}(\boldsymbol{\theta})$.\\
				Vary $\boldsymbol{\theta}$, repeat the above steps to estimate $\partial{C^{\ell_{2}}_{n, K,\mathcal{E}}(\boldsymbol{\theta})}/\partial{\boldsymbol{\theta}}$.\\
				Perform a gradient descent step, update $\boldsymbol{\theta}$.\\
			}
			\If{$C^{\ell_{2}}_{n, K, \mathcal{E}}(\boldsymbol{\theta}) < 0.01$}{\While{$C^{\ell_{1}}_{n, K, \mathcal{E}}(\boldsymbol{\theta})$ has not converged}{
					Prepare the $K$ input strings.\\
					Run $U(\boldsymbol{\theta})$, output $\{|\psi_j\rangle\}$.\\
					Measure observables $E_{\mu} \in \mathcal{E}$.\\
					Prepare the $K$ input strings.\\
					Run $U^{\dagger}(\boldsymbol{\theta})E_{\mu}U(\boldsymbol{\theta})$ for $E_{\mu} \in \mathcal{E}$.\\
					Do projective measurements. \\
					Estimate $C^{\ell_{1}}_{n, K,\mathcal{E}_S}(\boldsymbol{\theta})$.\\
				Vary $\boldsymbol{\theta}$, repeat the above steps to estimate $\partial{C^{\ell_{1}}_{n, K,\mathcal{E}}(\boldsymbol{\theta})}/\partial{\boldsymbol{\theta}}$.\\
					Minimize $C^{\ell_{1}}_{n, K,\mathcal{E}}(\boldsymbol{\theta})$, update $\boldsymbol{\theta}$.\\

			}}
			$L \leftarrow L+1$.\\
			
		}
		$\boldsymbol{\theta}_{\mathrm{opt}} \leftarrow \boldsymbol{\theta}$.\\
		\Return{$\boldsymbol{\theta}_{\mathrm{opt}}$, $C^{\ell_{1}}_{n, K,\mathcal{E}}(\boldsymbol{\theta}_{\mathrm{opt}})$.}
	\end{algorithm*}
	\noindent The VarQEC algorithm searches this submanifold for an eligible code. When $U(\boldsymbol{\theta})$ is overparameterized for $((n,K))_2$, it can explore all relevant directions and the set of reachable output codes is equivalent to $\mathbf{G r}(K, 2^n)$, i.e., VarQEC is capable to find arbitrary $((n,K))_2$ quantum code. The required number of periodic bounded real parameters to overparameterize a VQC is at least $2K(2^n-K)$ since the complex dimension of $\mathbf{G r}(K, 2^n)$ is $K(2^n-K)$.
	
	In Ref.~\cite{johnson2017qvector}, Johnson $\textit{et al.}$ proposed a related algorithm named QVECTOR, which samples a random 2-design unitary and optimizes parameterized encoding and decoding circuits simultaneously to improve the quantum average fidelity. Compared with QVECTOR, VarQEC can find not only channel-adaptive codes but also quantum codes with specific code parameters. VarQEC does not need a deep random circuit, which is a daunting challenge on NISQ devices, to sample a bunch of input states.  We train the encoder without considering the decoder. The optimization is less likely to be trapped in a local minimum. The cost functions are estimated by measuring some local observables. We can rigorously obtain an $\varepsilon$-correctable approximate QECC with arbitrarily small $\varepsilon$. The noise models in our methods are also flexible and can be artificially assigned.

	\section{Results}\label{Sec:results}
	
	\subsection{Symmetric codes}
	We verify the validity of our algorithm by rediscovering some symmetric codes with well-known code parameters. The cost functions are defined in Eqs.~\eqref{l1norm},~\eqref{l2norm}. For code parameters $((n,K,d))_2$, the total number of Pauli errors $O_\alpha$ to consider is
	\begin{equation}\label{Pauli_number}
		|\{O_{\alpha}\}| = \sum_{j=0}^{d-1}\left(\begin{array}{c}
			n \\
			j
		\end{array}\right) 3^{j}.
	\end{equation}

	Without loss of generality, we use the complete bipartite connectivity graph: denote the qubits selected for the input as $\{Q_0, Q_1, \dots, Q_{k-1}\}$ ($k = \lceil\log(K)\rceil$), the unselected qubits as $\{Q_{k}, Q_{k+1}, \dots, Q_{n-1}\}$, the graph consists of $k(n-k)$ edges that connect every selected qubit and every unselected qubit, qubits in the same set are not directly connected. For such graphs, the initial logical data can spread to each qubit rapidly since the graph diameter is only $2$. The variational quantum circuit has alternating layers of single-qubit rotations $R_x$-$R_z$ acting on all qubits and Ising-type interactions $R_{zz}$ acting on adjacent qubits. Denote the number of layers as $L$. The VQC evolution is of the form
	\begin{equation}
		U(\boldsymbol{\theta}) = U_{E}(\boldsymbol{\theta}_E)\prod_{l=1}^L U_l(\boldsymbol{\theta}_l),
	\end{equation}
	where $\boldsymbol{\theta}_l$ and $\boldsymbol{\theta}_E$ are elements in $\boldsymbol{\theta}$, $U_l(\boldsymbol{\theta}_l)$ denotes the $l$-th layer evolution, $U_{E}(\boldsymbol{\theta}_E)$ denotes the rightmost $R_x$-$R_z$ rotations which are used to search the manifold of locally equivalent quantum codes. Since $R_z$ and $R_{zz}$ gates in the last layer commute, and $R_z$-$R_x$-$R_z$ rotations can realize arbitrary single-qubit unitary, locally equivalent QECCs can be found by the same VQC. In principle, any $n$-qubit unitary evolution can be realized by this ansatz with a sufficiently large number of layers since $\{R_x, R_z, R_{zz}\}$ is a universal quantum gate set. The connectivity graph and the periodic-structured VQC ansatz for $n=5, k =2$ are shown in Fig.~\ref{fig:qnn_str}(a,b). In general, with the increase of $L$, the achievable quantum codes form a higher dimensional submanifold of $\mathbf{G r}(K, 2^n)$, as shown in Fig.~\ref{fig:qnn_str}(c). When the VQC is overparameterized ($L$ is no less than a critical number $L_{\mathrm{crit}}$) for code parameters $((n,K))_2$, VarQEC can explore the whole $\mathbf{G r}(K, 2^n)$ manifold. 
	
	An alternative variational circuit for finding additive quantum codes is discussed in Appendix \ref{ap:QFT-ansatz}.

	\begin{figure}[t]
		\centering
		\includegraphics[width=7.5cm]{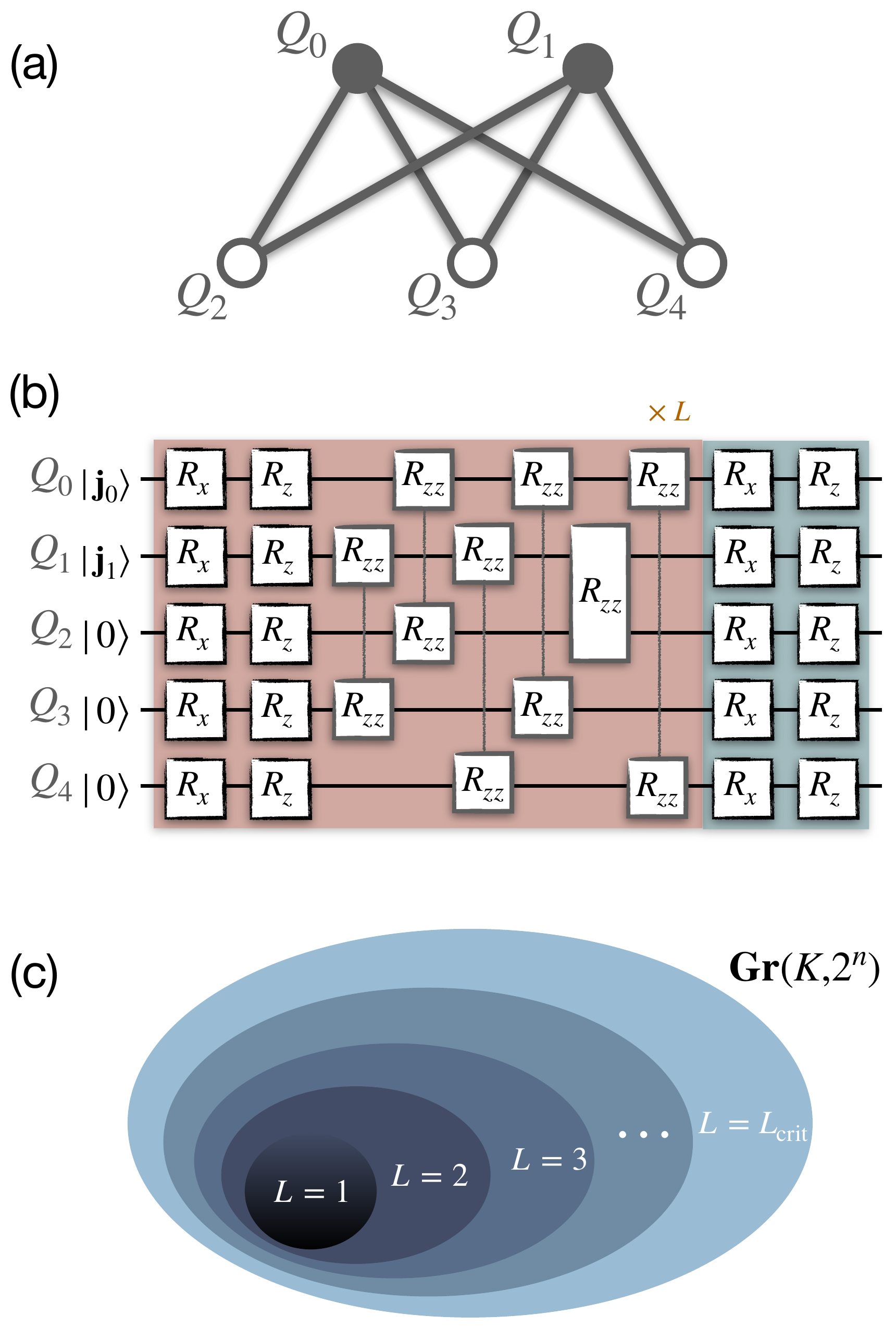}
		\caption{Schematic of a connectivity graph, the periodic-structured variational ansatz, and the achievable quantum codes. (a) The bipartite connectivity graph with five physical qubits. Gray lines connect adjacent qubits. $\{Q_0,Q_1\}$ are the selected qubits to prepare the logical data. (b) The corresponding variational quantum circuit (VQC) with $L$ layers. Within each layer, we apply $R_x$-$R_z$ rotations to each qubit and apply $R_{zz}$ gates to adjacent qubits. (c) With the increase of $L$, VarQEC is capable to find quantum codes in a higher-dimensional manifold until overparameterization ($L =L_{\mathrm{crit}}$).}
		\label{fig:qnn_str}
	\end{figure}

	We apply our algorithm to search for $((n,K,d))_2$ codes where the code length $n$ ranges from $3$ to $12$, the code dimension $K$ ranges from $2$ to $8$, the code distance $d$ ranges from $2$ to $4$. Fig.~\ref{fig:5Kd} shows $C^{\ell_{1}}_{n, K, \mathcal{E}}(\boldsymbol{\theta}_{\mathrm{opt}})$ as a function of code parameters $n$, $K$, $d$. We rediscover quantum codes with parameters
	\begin{equation}
		\begin{aligned}
			&((4,4,2))_2, ((5,6,2))_2,\\
			&((5,2,3))_2, ((6,2,3))_2,\\
			&((7,2,3))_2, ((8,8,3))_2,\\
			&((9,8,3))_2, ((10,4,4))_2,\\
			&((11,4,4))_2.
		\end{aligned}
	\end{equation}
	The $((5,2,3))_2$ and  $((8,8,3))_2$ codes are non-degenerate, the $((6,2,3))_2$ codes are degenerate, and both cases are possible for $((7,2,3))_2$.  There is an obvious phase transition between achievable and (probably) non-achievable code parameters. Further, we fix $d=3$ and rediscover larger codes with parameters
	\begin{equation}
		\begin{aligned}
			&((10,2^4,3))_2, ((11,2^5,3))_2, \\
			&((12,2^6,3))_2, ((13,2^7,3))_2,\\
			&((14,2^8,3))_2.
		\end{aligned}
	\end{equation}

	All these codes can be found by a shallow VQC, e.g., a $5$-layer VQC can encode $((5,2,3))_2$, $((12,2^6,3))_2$, $((14,2^8,3))_2$; a $4$-layer VQC can encode $((6,2,3))_2$, $((8,8,3))_2$; a $3$-layer VQC suffices to encode $((7,2,3))_2$.  In our experiments, either $C^{\ell_{1}}_{n,K,\mathcal{E}}(\boldsymbol{\theta}_{\mathrm{opt}}) < 1 \times 10^{-6}$ or $C^{\ell_{1}}_{n,K,\mathcal{E}}(\boldsymbol{\theta}_{\mathrm{opt}}) \geq 1$ holds. Fig.~\ref{fig:learning_curves} shows some cost curves at the mini-batch learning stage. Within each iteration, we sample $20\%$ of $\{O_{\alpha}\}$ as the batch and perform a stochastic gradient descent with learning rate $\eta =1 \times 10^{-2}$.

	\begin{figure}
		\centering
		\includegraphics[width=\columnwidth]{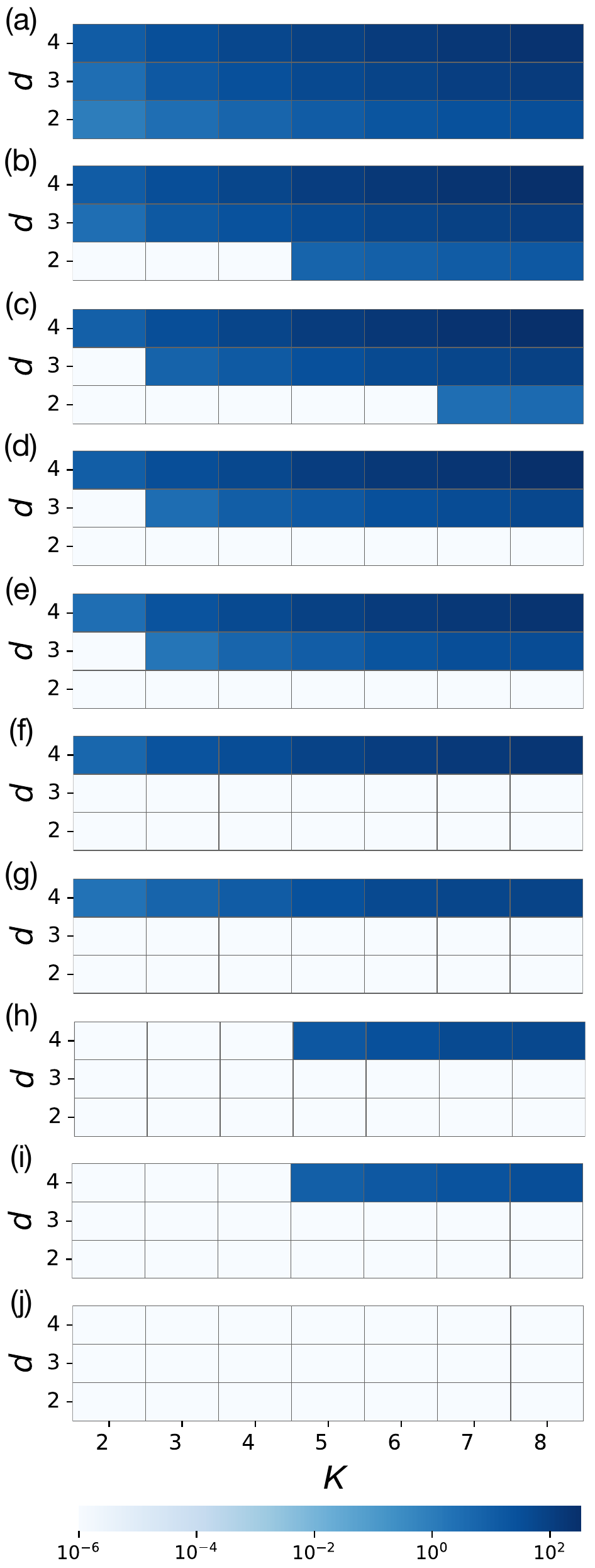}
		\caption{Minimum cost function $C^{\ell_{1}}_{n,K,\mathcal{E}}(\boldsymbol{\theta}_{\mathrm{opt}})$ for different code length $n$, code dimension $K$, and code distance $d$. (a) $n=3$. (b)$n=4$. (c)$n=5$. (d)$n=6$. (e)$n=7$. (f)$n=8$. (g)$n=9$. (h)$n=10$. (i)$n=11$. (j)$n=12$. }\label{fig:5Kd}
	\end{figure}
	\begin{figure*}[t]
		\centerline{\includegraphics[width=1\textwidth]{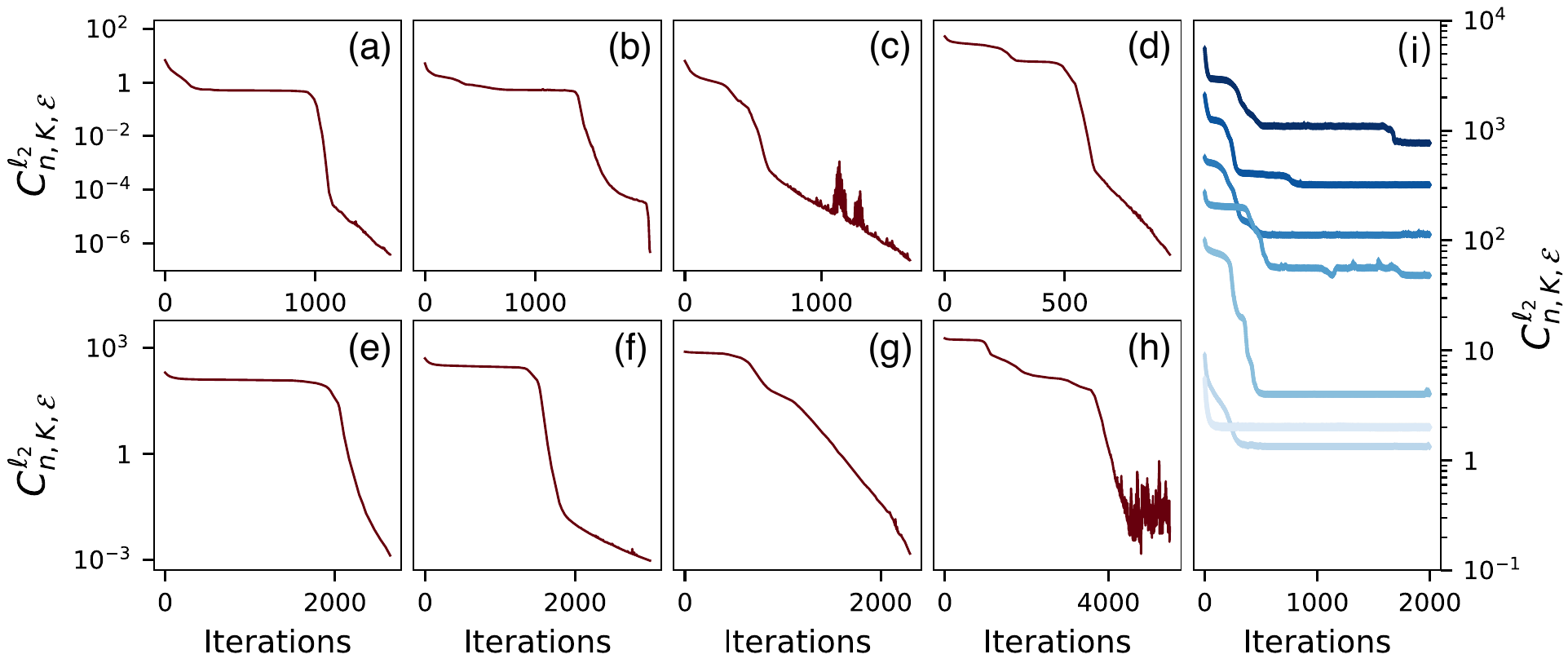}}
		\caption{Learning curves of VarQEC for finding symmetric codes with achievable code parameters (a-h)$((5,2,3))_2$, $((6,2,3))_2$, $((7,2,3))_2$, $((8,8,3))_2$, $((11,2^5,3))_2$, $((12,2^6,3))_2$, $((13, 2^7,3))_2$, $((14,2^8,3))_2$, and (probably) non-achievable code parameters (i) from bottom to top: $((7,3,3))_2$, $((4,2,3))_2$,$((9,2^4,3))_2$,$((10,2^5,3))_2$,$((11,2^6,3))_2$,$((12,2^7,3))_2$,$((13,2^8,3))_2$.}
		\label{fig:learning_curves}
	\end{figure*}

	In the following, we verify the \textit{local equivalence} (LE) between two quantum codes with $K$-dimensional projectors $P_{c}$ and $P'_{c}$ by sampling permutations of qubits $\Pi_q$ and numerically minimizing the LE-cost function
	\begin{equation}
		C_{LE} = |\operatorname{Tr}(U(\boldsymbol{\phi})\Pi_qP_{c}\Pi^{-1}_qU^{\dagger}(\boldsymbol{\phi})P'_{c}) - K|^2,
	\end{equation}
	where 
	\begin{equation}
		\begin{aligned}
			U(\boldsymbol{\phi}) = &\exp({-i\sum_{j=1}^n \phi_{j,1}Z_j})\exp({-i\sum_{j=1}^n \phi_{j,2}X_j})\\& \cdot \exp({-i\sum_{j=1}^n \phi_{j,3}Z_j})
		\end{aligned}
	\end{equation}
	is a product of single-qubit unitaries with $3n$ parameters $\boldsymbol{\phi}$. If there exist $\Pi_q$ and $\boldsymbol{\phi}$ such that $C_{LE} < 1 \times 10^{-10}$, we say $P_{c}$ and $P'_{c}$ are (locally) equivalent.

	The minimum code length that protects a logical qubit against arbitrary one-qubit errors is $n=5$. The $((5,2,3))_2$ code we rediscover is equivalent to the perfect code devised in Ref.~\cite{laflamme1996perfect}. This code is unique and translational invariant. The $((5,6,2))_2$ code we rediscover is equivalent to the original non-additive CWS code devised in Ref.~\cite{rains1997nonadditive}. For parameters $((6,2,3))_2$, we sample different initial VQC parameters $\boldsymbol{\theta}$ and find a mass of non-additive codes that are not mutually equivalent. This is consistent with our observation that an infinite family of non-equivalent $((6,2,3))_2$ codes exist. For parameters $((7,2,3))_2$, we find non-equivalent quantum codes and some of them are not equivalent to CWS codes. See Appendix~\ref{ap:623} for more discussions on $((6,2,3))_2$ and $((7,2,3))_2$. The $((8,8,3))_2$ code we rediscover is equivalent to the additive code stabilized by
	\begin{equation}
		\begin{array}{llllllllll}
			g_{1} & = & X & X & X & X & X & X & X & X\\
			g_{2} & = & Z & Z & Z & Z & Z & Z & Z & Z\\
			g_{3} & = & I & X & Y & Z & Z & Y & X & I\\
			g_{4} & = & Z & Y & Z & Y & X & I & X & I \\
			g_{5} & = & X & Y & Y & X & I & Z & Z & I
		\end{array}
	\end{equation}
	up to permutation of qubits.

	It is an open question whether a $((7,3,3))_2$ QECC exists. We have not yet found such a code with VarQEC, even if using an overparameterized VQC ($L=31$) that is capable of finding any $((7,3))_2$ quantum code and sampling 20000 optimization starting points. This strongly indicates that a quantum code with parameters $((7,3,3))_2$ is nonexistent.

	\subsection{Asymmetric codes}
	In quantum experiments, the decoherence time of a physical qubit is mainly influenced by two factors: the relaxation time $T_1$ and the dephasing time $T_2$. Relaxation leads to all Pauli errors, whereas dephasing only leads to phase-flips (Pauli-$Z$ errors). Denote the probabilities of $X$, $Y$, and $Z$ errors as $p_x$, $p_y$, $p_z$ respectively. Usually, $p_x = p_y \neq p_z$. The asymmetry between $X/Y$ and $Z$ errors motivates people to construct \textit{asymmetric} QECCs that handle them differently~\cite{steane1996simple, ioffe2007asymmetric}. Asymmetric codes are more resource-efficient since they can detect/correct more Pauli-$X$/$Y$ errors than Pauli-$Z$ errors or vice versa in response to demand. Researchers have extended several constructions from symmetric codes to asymmetric codes~\cite{ioffe2007asymmetric, sarvepalli2008asymmetric, sarvepalli2009asymmetric, ezerman2011additive, ezerman2013css, jackson2016codeword, bonilla2021xzzx}. Note that the classification of symmetric and asymmetric codes depends on the error detecting/correcting capability instead of the code construction method.

	For a system with $X/Y$-error probabilities $p_x=p_y$ and $Z$-error probability $p_z$. We set the bias parameter $c_{\text{Z}}$ as
	\begin{equation}
		c_{\text{Z}} = \frac{\log p_z}{\log p_x}.
	\end{equation}
	such that $p_z = p_x^{c_{\text{Z}}}$. 
	
	In most scenarios, dephasing is dominating and phase-flip errors are more prevalent than $X/Y$ errors. Accordingly, $0<c_{\text{Z}}<1$. First, we fix $c_{\text{Z}} = 1/2$ (i.e., $p_z \approx p_x^{1/2}$) and apply VarQEC to find $((n,K,d_e(1/2)))_2$ codes that encodes one logical qubit ($K=2$) or one logical qutrit ($K=3$). We discover asymmetric codes
	\begin{equation}
		\begin{aligned}
			&((6,2,d_e(\frac{1}{2})=2))_2, \\&((7,3,d_e(\frac{1}{2})=2))_2.
		\end{aligned}
	\end{equation}
	They can detect more $Z$ errors than $X/Y$ errors, specifically, detect the error set
	\begin{equation}
		\begin{aligned}
			\mathcal{E}_{1/2}^{\{2\}}=\{I\} \cup\{&X_{j}, Y_{j}, Z_{j}, X_{i} Z_{j}, Y_{i} Z_{j}, Z_iZ_jZ_k \}
		\end{aligned}
	\end{equation}
	with indices $i,j,k \in [1,n]$.

	We now consider the opposite situation where $X/Y$ errors are more prevalent than $Z$. In the extreme case, $T_2 \to +\infty$, the only source of decoherence is qubit relaxation. This process at finite temperature is modeled by the generalized amplitude damping channel. Its Kraus representation has operators
        \allowdisplaybreaks
        \begin{alignat}{5}
			A_{0}=&\sqrt{p}\left(\begin{array}{cc}
				1 & 0 \\
				0 & \sqrt{1-\gamma}
			\end{array}\right)\nonumber\\=&\sqrt{p}I-\frac{\sqrt{p}\gamma}{4}(I-Z)+\mathcal{O}\left(\sqrt{p}\gamma^{2}\right),\nonumber\\
			A_{1}=&\sqrt{p}\left(\begin{array}{cc}
				0 & \sqrt{\gamma} \\
				0 & 0
			\end{array}\right)=\frac{\sqrt{p\gamma}}{2}(X+i Y), \nonumber\\
			\quad A_{2}=&\sqrt{1-p}\left(\begin{array}{cc}
				\sqrt{1-\gamma} & 0 \\
				0 & 1
			\end{array}\right)\nonumber\\=&\sqrt{1-p}I-\frac{\sqrt{1-p}\gamma}{4}(I+Z)\nonumber\\&+\mathcal{O}\left(\sqrt{1-p}\gamma^{2}\right),\nonumber\\
			A_{3}=&\sqrt{1-p}\left(\begin{array}{cc}
				0 & 0 \\
				\sqrt{\gamma} & 0
			\end{array}\right)=\frac{\sqrt{\gamma-p\gamma}}{2}(X-i Y),
	 \end{alignat}
	where $\gamma$ is the damping rate, $p$ is a constant determined by the temperature. $A_{0}$ and $A_{2}$ introduce Pauli-$Z$ errors of order $\mathcal{O}(\gamma)$, $A_{1}$ and $A_{3}$ introduce Pauli-$X$ and -$Y$ errors of order $\mathcal{O}(\sqrt{\gamma})$. When $\gamma$ is small, $c_{\text{Z}} = \log p_z/\log p_x \approx 2$. 
	
	Now we fix $c_{\text{Z}} = 2$ and apply VarQEC to find asymmetric codes with $2$-effective distance $3$. We rediscover codes with parameters 
	\begin{equation}
		\begin{aligned}
			&((5,2,d_e(2)=3))_2, \\&((6,4,d_e(2)=3))_2, \\&((7,8,d_e(2)=3))_2.
		\end{aligned}
	\end{equation}
	These codes were introduced in Ref.~\cite{jackson2016concatenated}. They can detect more $X/Y$ errors than $Z$ errors, i.e., the error set
	\begin{equation}
		\mathcal{E}_2^{\{3\}}=\{I\} \cup\left\{X_{j}, Y_{j}, Z_{j}, X_{i} X_{j}, X_{i} Y_{j}, Y_{i} Y_{j}\right\}
	\end{equation}
	with indices $i,j \in [1,n]$.
	
	Furthermore, we find new codes with $2$-effective distance $4$ with $K=2$ or $K=3$, i.e.,
	\begin{equation}
		\begin{aligned}
			&((6,2,d_e(2)=4))_2, \\&((8,3,d_e(2)=4))_2.
		\end{aligned}
	\end{equation}
	Some $((6,2,d_e(2)=4))_2$ codes are equivalent to the additive $((6,2,3))_2$ code stabilized by
	\begin{equation}
		\begin{array}{llllllllllllll}
			g_{1} & = &X&I&X&Y&Z&X\\
			g_{2} & = &Z&I&I&I&I&Z\\
			g_{3} & = &I&X&X&X&X&I\\
			g_{4} & = &I&Z&I&Y&X&Z\\
			g_{5} & = &I&I&Z&X&Y&Z.
		\end{array}
	\end{equation}
	They can detect the error set
	\begin{equation}
		\begin{aligned}
			\mathcal{E}_2^{\{4\}}=\mathcal{E}_2^{\{3\}} \cup\{&X_{i} Z_{j}, Y_{i} Z_{j}, X_iX_jX_k, X_iX_jY_k, \\&X_iY_jY_k, Y_iY_jY_k \}.
		\end{aligned}
	\end{equation}
	with indices $i,j,k \in [1,n]$. For the generalized amplitude damping channel, $((6,2,d_e(2)=4))_2$ and $((8,3,d_e(2)=4))_2$ can detect up to three $A_{1}$/$A_{3}$ errors or one $A_0$/$A_2$ error, and correct one $A_{1}$/$A_{3}$ error. Assisted by post-selection~\cite{prabhu2021distancefour}, these codes hold the promise to achieve lower logical error rate than codes with $d_e(2)=3$.

	\subsection{Channel-adaptive codes}
	In the previous sections, we have only discussed uncorrelated errors, symmetric or asymmetric. This section considers quantum channels with correlated noise. We apply VarQEC to find the corresponding channel-adaptive codes.
	
	Correlated errors are ubiquitous in quantum computing experiments. When two adjacent qubits are not sufficiently separated, the errors occurring on them can be highly correlated~\cite{wilen2021correlated}. These spatially correlated errors invalidate many well-known quantum codes and dim the hope of fault-tolerant quantum computing. Suppose we ignore the exact connectivity graph and the noise type. In that case, we need at least $11$ physical qubits to protect one qubit of information from general correlated errors (i.e., the double error-correcting $((11,2,5))_2$ code)~\cite{calderbank1998quantum}. Even so, the encoding isometry may not be hardware-efficient. In the following, we investigate two correlated noise channels in detail and introduce channel-adaptive codes discovered by VarQEC.

	\subsubsection{Nearest-neighbor collective amplitude damping}\label{Sec:collective-ad}
	The first testbed is the \textit{nearest-neighbor collective amplitude damping channel}. Suppose we have $n$ qubits in a ring, as shown in Fig.~\ref{fig:ring}. Every two neighboring qubits collectively interact with a single environment and exhibit collective dynamics of amplitude damping~\cite{hama2020quantum, grassl2018quantum}. The corresponding Kraus operators are 
	\begin{equation} 
		\begin{aligned}
			K_{0}=&\sqrt{\frac{\gamma_{01}}{2}} |00\rangle (\langle 01| + \langle10|)\\&+\sqrt{\frac{\gamma_{12}}{2}} (|01\rangle + |10\rangle) \langle 11|,\\
			K_{1}=&\sqrt{\gamma_{02}} |00\rangle \langle 11|, \\
			K_2 =  & \sqrt{\frac{1-\gamma_{01}}{4}} (|01\rangle + |10\rangle)(\langle 01| + \langle10|)\\
			&+ \sqrt{1 - \gamma_{02} - \gamma_{12}} |11\rangle\langle11| \\& + \frac{1}{2} (|01\rangle - |10\rangle)(\langle 01| - \langle10|)\\ & + |00\rangle\langle00|,
		\end{aligned}
	\end{equation}
	where $\gamma_{01}$, $\gamma_{02}$, $\gamma_{12}$ are damping rates. For a short decay time $\tau$, $\gamma_{01}$ and $\gamma_{12}$ are of order $\mathcal{O}(\tau)$, $\gamma_{02}$ is of order $\mathcal{O}(\tau^2)$. Each error acts on two neighbouring qubits $Q_j$-$Q_{j+1}$.
	To find quantum codes that approximately correct one nearest-neighbor collective amplitude damping error, we expand the above Kraus operators with respect to $\tau$, abandon trivial/high-order terms and obtain
	\begin{equation} 
		\begin{aligned}
			K'_{0}&=  \frac{1}{\sqrt{2}}|00\rangle (\langle 01| + \langle10|) + \frac{1}{\sqrt{2}}(|01\rangle + |10\rangle) \langle 11|,\\
			K'_{1}&=  |00\rangle\langle 11|,\\
			K'_{2}&= \frac{1}{2}(|01\rangle + |10\rangle)(\langle 01| + \langle10|) +  |11\rangle\langle 11|.
		\end{aligned}
	\end{equation}
	Each $K'_{0}$ contributes one factor of $\sqrt{\tau}$, each $K'_{1}$/$K'_{2}$  contributes one factor of $\tau$. Suppose $E_{\alpha}$ and $E_{\beta}$ are products of the identity, $K'_{0}$, $K'_{1}$, and $K'_{2}$. The target error set (in VarQEC)
	\begin{equation}
		\mathcal{E} = \left\{E^{\dagger}_{\alpha} E_{\beta}\right\}
	\end{equation}
	consists of terms with total order less than $\tau^{3/2}$.
	\begin{figure}[t]
		\centering
		\includegraphics[width=8.05cm]{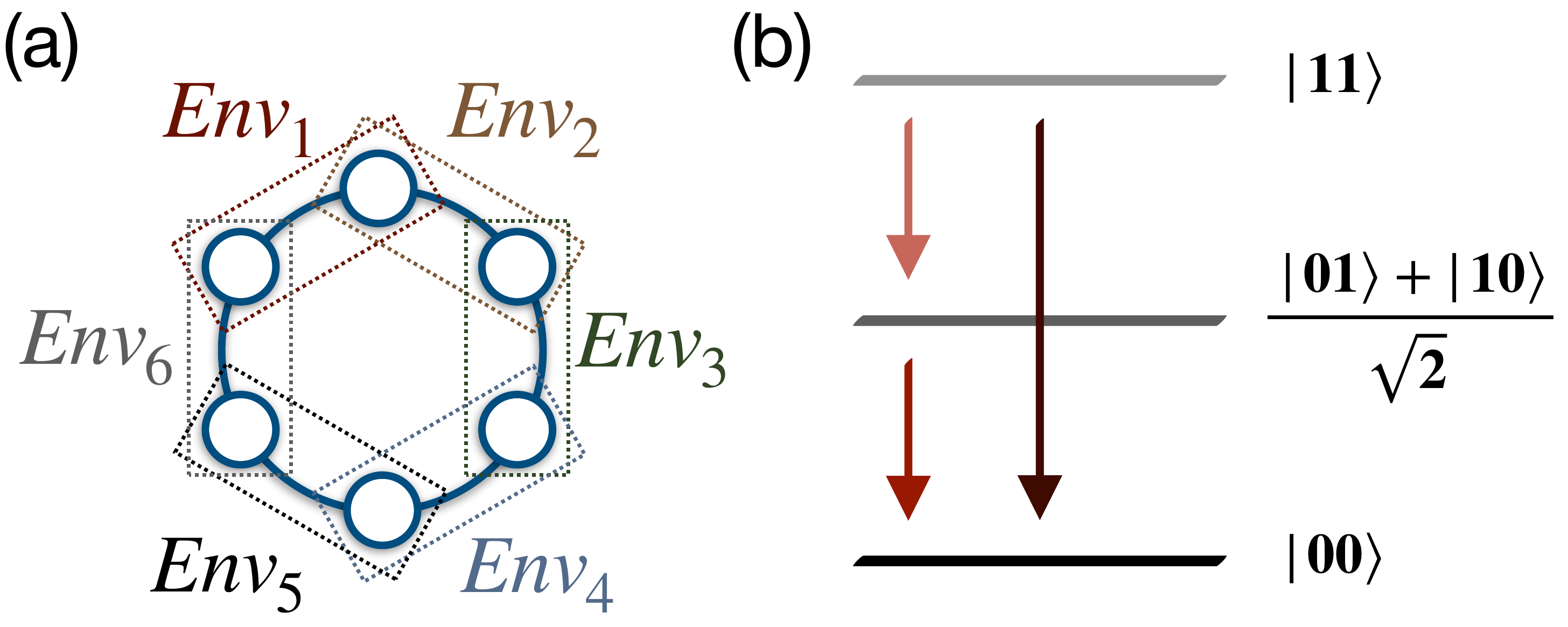}
		\caption{Schematic of nearest-neighbor collective amplitude damping. (a) Qubits in a ring, neighboring two qubits collectively interact with a single environment. (b) Decay processes.}
		\label{fig:ring}
	\end{figure}
	Note that here, some error products $E^{\dagger}_{\alpha} E_{\beta}$ are non-unitary and non-Hermitian. To compute cost functions $C^{\ell_{1}}_{n, K,\mathcal{E}}$ and $C^{\ell_{2}}_{n, K,\mathcal{E}}$ (Eqs.~\eqref{l1norm},~\eqref{l2norm}), we need to estimate
	\begin{equation}
		\langle\psi_j|E^{\dagger}_{\alpha} E_{\beta}\ |\psi_j\rangle
	\end{equation}
	and
	\begin{equation}
		\big|\langle\psi_i|E^{\dagger}_{\alpha} E_{\beta}\ |\psi_j\rangle\big|
	\end{equation}
	for various $i,j,\alpha,\beta$ ($i\neq j$). 
	
	$\langle\psi_j|E^{\dagger}_{\alpha} E_{\beta}\ |\psi_j\rangle$ is a complex number that can be obtained as follows: we prepare the state $|\psi_j\rangle$ and measure two Hermitian observables $(E^{\dagger}_{\alpha} E_{\beta}+E^{\dagger}_{\beta} E_{\alpha})/2$ and $(E^{\dagger}_{\alpha} E_{\beta}-E^{\dagger}_{\beta} E_{\alpha})/2i$. The first expectation value gives the real part of $\langle\psi_j|E^{\dagger}_{\alpha} E_{\beta}\ |\psi_j\rangle$ and the second expectation value gives its imaginary part.
	
	We estimate $\big|\langle\psi_i|E^{\dagger}_{\alpha} E_{\beta}\ |\psi_j\rangle\big|$ with $i \neq j$ using POVMs. Specifically, we prepare the state $|\psi_j\rangle$, add an ancilla qubit and implement the operation
	\begin{equation}
		\begin{aligned}
			\Lambda_{\alpha\beta}(|\psi_j\rangle) = &E^{\dagger}_{\alpha} E_{\beta}|\psi_j\rangle\langle\psi_j| E^{\dagger}_{\beta} E_{\alpha}\otimes |0\rangle\langle0|_{\text{anc}}\\& + E_{\text{aux}}|\psi_j\rangle\langle\psi_j| E^{\dagger}_{\text{aux}}\otimes |1\rangle\langle1|_{\text{anc}},
		\end{aligned}
	\end{equation}
	where $E_{\text{aux}}$ is an auxiliary Kraus operator such that $E^{\dagger}_{\beta} E_{\alpha}E^{\dagger}_{\alpha} E_{\beta} + E^{\dagger}_{\text{aux}}E_{\text{aux}} = I$. Then we measure the ancilla qubit in the computational basis and post-select the cases of $|0\rangle$. Measuring the ancilla qubit in the $|0\rangle$ state indicates that the error $E^{\dagger}_{\alpha} E_{\beta}$ has occurred.  The corresponding probability is 
	\begin{equation}
		p_0 = \operatorname{Tr}(E^{\dagger}_{\alpha} E_{\beta}|\psi_j\rangle\langle\psi_j| E^{\dagger}_{\beta} E_{\alpha})
	\end{equation}
	and the corresponding state is
	\begin{equation}
		\frac{E^{\dagger}_{\alpha} E_{\beta}|\psi_j\rangle\langle\psi_j| E^{\dagger}_{\beta} E_{\alpha}}{p_0}.
	\end{equation}
	For these postselected states, we apply the inverse of the VQC and do projective measurements. The conditional probability of obtaining the binary string $|\mathbf{i-1}\rangle|0\rangle^{\otimes(n-k)}$ is
	\begin{equation}
		\begin{aligned}
			p_{ij} = &\frac{\big|\langle\psi_i|E^{\dagger}_{\alpha} E_{\beta}|\psi_j\rangle\big|^2}{p_0}.
		\end{aligned}
	\end{equation}
	Therefore, we can estimate $\big|\langle\psi_i|E^{\dagger}_{\alpha} E_{\beta}|\psi_j\rangle\big|$ by $\sqrt{p_{ij}p_0}$.
	
	Suppose 0 or 1 error occurs during a short decay time $\tau$, we apply VarQEC and find quantum codes with length and dimension
	\begin{equation}
		\begin{aligned}
			&((4,3))_2, ((5,2))_2\\
			&((6,5))_2, ((7,8))_2\\
			&((8,9))_2, ((9,16))_2.
		\end{aligned}
	\end{equation}
	These codes can reduce the error from $\mathcal{O}(\sqrt{\tau})$ to $\mathcal{O}(\tau)$.

	\subsubsection{Nearest-neighbor collective phase-flips}
	The second noise channel we consider is a combined channel of \textit{nearest-neighbor collective phase-flips} and \textit{single-qubit errors}. The channel consists of two stages. In the first stage, a local depolarizing error with noise rate $p$
	\begin{equation}\label{N_DP}
		\mathcal{N}_{\mathrm{DP}_j}(\rho) = (1-\frac{3p}{4})\rho + \frac{p}{4}(X_j\rho X_j + Y_j\rho Y_j + Z_j\rho Z_j)
	\end{equation} 
	occurs on each qubit. In other words, Pauli errors $X,Y, Z$ occur on each qubit with probability $p/4$. Different local errors act independently. We denote the corresponding global noise channel as 
	\begin{equation}\label{N_1}
		\mathcal{N}_1 = \prod_{j=1}^n \mathcal{N}_{\mathrm{DP}_j}
	\end{equation}
	In the second stage, nearest-neighbor collective phase-flip errors $ZZ$ with noise rate $p_{zz}$
	\begin{equation}\label{N_ZZ}
		\mathcal{N}_{\mathrm{Z}_i\mathrm{Z}_j}(\rho) = (1-p)\rho + p_{\mathrm{zz}}Z_iZ_j \rho Z_iZ_j
	\end{equation}
	occur on adjacent qubit pairs $Q_i$-$Q_j$. We denote the corresponding global noise channel as 
	\begin{equation}\label{N_2}
		\mathcal{N}_2 = \prod_{\langle i,j \rangle}\mathcal{N}_{\mathrm{Z}_i\mathrm{Z}_j}.
	\end{equation}
	The overall process is
	\begin{equation}\label{overall_N}
		\mathcal{N} = \mathcal{N}_2 \circ \mathcal{N}_1.
	\end{equation}
	
	Directly applying VarQEC to $\mathcal{N}$ is not resource efficient since the Kraus representation of $\mathcal{N}$ consists of $\mathcal{O}(\exp(n))$ operators. For practical purposes, we apply our algorithm to the following channel instead,
	\begin{equation}
		\begin{aligned}
			\mathcal{N}'(\rho) = &(1-\sum_j \frac{3p}{4}-\sum_{\langle i,j \rangle}p_{\mathrm{zz}})\rho\\&+\sum_j \frac{p}{4}(X_j\rho X_j +  Y_j\rho Y_j +  Z_j\rho Z_j)\\&+\sum_{\langle i,j \rangle}p_{\mathrm{zz}}Z_iZ_j \rho Z_iZ_j.
		\end{aligned}
	\end{equation}
	The second term takes summation over all qubits. The last term takes summation over all adjacent qubit pairs $\langle i,j \rangle$. Its Kraus operators are
	\begin{equation}
		\begin{aligned}\label{eq:E_p}
			\mathcal{E}' = \Big\{&\sqrt{(1-\sum_j \frac{3p}{4}-\sum_{\langle i,j \rangle}p_{\mathrm{zz}})}I, \\&\sqrt{\frac{p}{4}}X_j, \sqrt{\frac{p}{4}}Y_j, \sqrt{\frac{p}{4}}Z_j, \sqrt{p_{\mathrm{zz}}}Z_iZ_j\Big\}.
		\end{aligned}
	\end{equation}
	where qubit-$i$ and qubit-$j$ are adjacent. $\mathcal{N}'$ is the first-order approximation of $\mathcal{N}$ with respect to the error parameters $p$ and $p_{zz}$. The Kraus representation of $\mathcal{N}'$ consists of only poly($n$) operators. $\mathcal{N}$ and $\mathcal{N}'$ are equivalent in the zero-noise limit,
	\begin{equation}
		\lim_{p,p_{\mathrm{zz}}\rightarrow 0} \mathcal{N}' = \mathcal{N}.
	\end{equation}

	\begin{figure*}[t]
		\centerline{\includegraphics[width=0.7\textwidth]{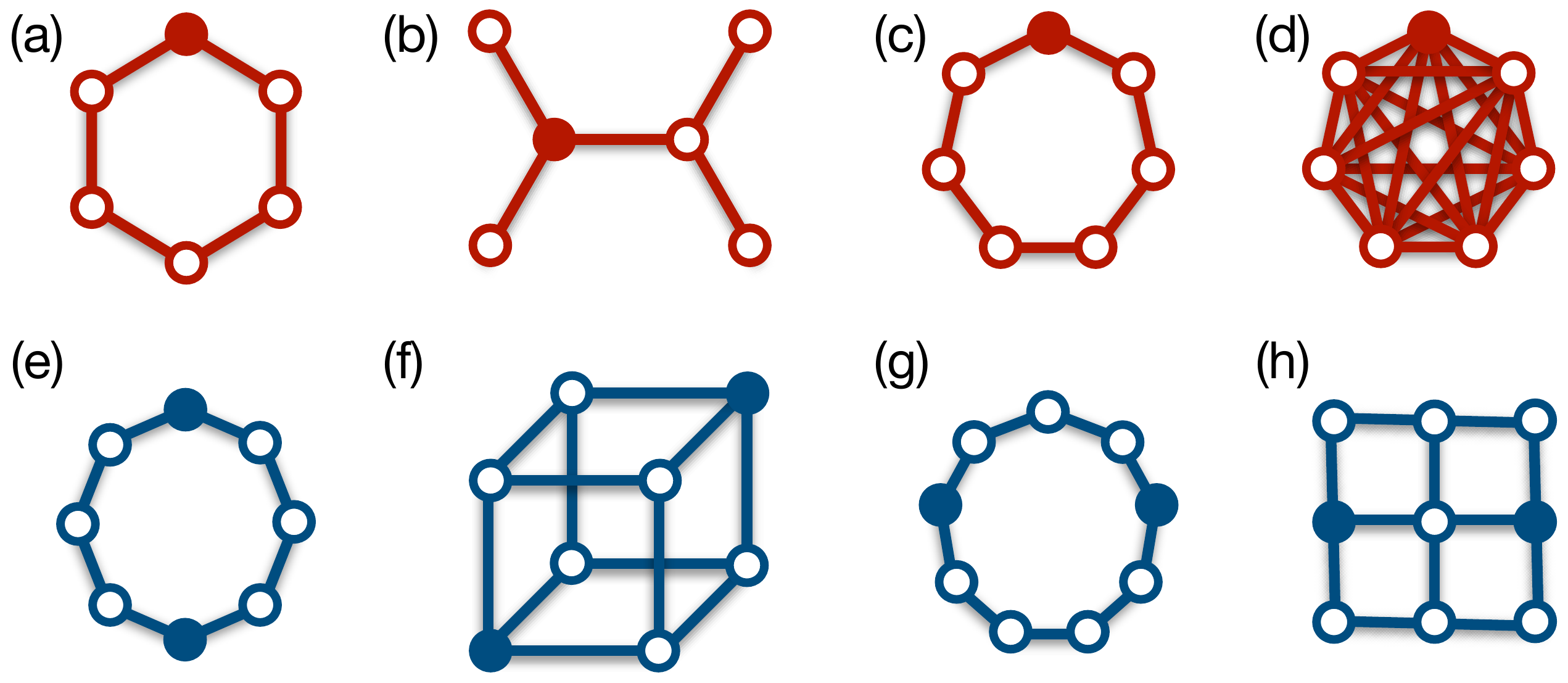}}
		\caption{Hardware connectivity graphs with $6 \sim 9$ vertices (physical qubits). Filled circles represent the initial $k$ qubits to prepare the logical data. For these graphs, there exist channel-adaptive codes to protect $k$ qubit(s) of information from general one-qubit errors and nearest-neighbor collective phase-flips.  The code length $n$, the code dimension $K$, and the number of VQC layers $L$ are (a) $n=6, K = 2, L = 5$; (b) $n=6,K = 2, L = 6$; (c) $n=7,K = 2, L = 2$; (d) $n=7,K = 2, L = 2$; (e) $n=8,K = 4, L = 4$; (f) $n=8,K = 4, L = 4$; (g) $n=9,K = 4, L = 3$; (h) $n=9,K = 4, L = 2$.}
		\label{fig:code_str}
	\end{figure*}

	Suppose for $\mathcal{N}'$, we find an $\varepsilon$-correctable approximate code with  $\varepsilon \ll 1$. Namely, with appropriate recovery $\mathcal{R}$, the entanglement fidelity is
	\begin{equation}
		F_e(\mathcal{R}\mathcal{N}') \geq 1- \varepsilon.
	\end{equation}
	Then for the original noise channel $\mathcal{N}$, the entanglement fidelity naturally has the form
	\begin{equation}
		\begin{aligned}
			F_e(\mathcal{R}\mathcal{N}) = &1- \mathcal{O}(\varepsilon p) -\mathcal{O}(\varepsilon p_{\mathrm{zz}}) \\ &-\mathcal{O}(p^2)-\mathcal{O}(p_{\mathrm{zz}}^2).
		\end{aligned}
	\end{equation}
	The QECC can push the first-order errors down to an extremely small level. To find quantum codes that correct multiple errors, we can choose a higher-order approximation and similarly implement VarQEC.

	Given a connectivity graph $G$ with edge number $|E(G)|$ and maximum vertex degree $\Delta(G)$. In the following, we set
	\begin{equation}
		p_{\mathrm{zz}}=0.99/(3n+|E(G)|),\quad p = 4p_{\mathrm{zz}}.
	\end{equation}
	For a generic input state $\rho$, the probability of receiving the same state after going through the noise channel is about $\sim 0.01$. The target error list to detect in VarQEC is 
	\begin{equation}
		\mathcal{E}=\left\{E^{\dagger}_{\alpha} E_{\beta}| E_{\alpha}, E_{\beta} \in \mathcal{E}'\right\}.
	\end{equation}
	Still, we use the VQC ansatz with alternating layers of single-qubit rotations $R_x$-$R_z$ acting on all qubits and Ising-type interactions $R_{zz}$ acting on adjacent qubits. The circuit depth of a VQC with $L$ layers is of order $\mathcal{O}(L\Delta(G))$.

	After adequate optimization, we find approximate channel-adaptive codes for $\mathcal{N} = \mathcal{N}_2 \circ \mathcal{N}_1$ with hardware connectivity graphs shown in Fig.~\ref{fig:code_str}. The codes for graphs (a,b,h) are degenerate, and the others are non-degenerate. Six physical qubits suffice to encode one logical qubit, and eight physical qubits suffice to encode two logical qubits.

	Note that up to a local unitary transformation, these codes can correct an arbitrary single-qubit error followed by an adjacent $U \otimes U$ error for any fixed $U \in \operatorname{U}(2)$ with eigenvalues $\{-1,1\}$.
	
	We investigated the codes for graphs (c,d) in more detail. Clearly, they have code parameters $((7,2,3))_2$. We calculate their \textit{quantum weight enumerators}~\cite{shor1997quantum}, which were defined by
	\begin{equation}
		A(z) = \sum_{j=0}^{n} A_j z^j, \quad B(z) = \sum_{j=0}^{n} B_j z^j
	\end{equation}
	with coefficients
	\begin{equation}
		A_j = \frac{1}{K^2}\sum_{\operatorname{wt}(O_{\alpha})=j} \operatorname{Tr}(O_{\alpha}P_c)\operatorname{Tr}(O_{\alpha}^{\dagger}P_c),
	\end{equation}
	\begin{equation}
		B_j = \frac{1}{K}\sum_{\operatorname{wt}(O_{\alpha})=j} \operatorname{Tr}(O_{\alpha}P_cO_{\alpha}^{\dagger}P_c).\mkern44mu
	\end{equation}
	These two codes are locally equivalent and therefore have the same weight enumerators, i.e.,
	
	\begin{equation}
		A(z) = 1 + 2z^3 + 9z^4 + 24z^5+22z^6+6z^7,\mkern26mu
	\end{equation}
	\begin{equation}
		B(z) = 1 + 17z^3 + 45z^4 + 78z^5+82z^6+33z^7.
	\end{equation}
	
	Further, we verified that they are locally equivalent to a non-degenerate additive code stabilized by
	\begin{equation}
		\begin{array}{llllllllllllll}
			g_{1} & = &X&I&Z&X&X&I&X\\
			g_{2} & = &Z&I&I&X&X&X&Z\\
			g_{3} & = &I&X&Z&X&Z&Z&Z\\
			g_{4} & = &I&Z&Z&I&Z&Y&Z\\
			g_{5} & = &I&I&Y&X&Z&I&X\\
			g_{6} & = &I&I&I&Z&Y&Y&X
		\end{array}
	\end{equation}
	up to permutation of qubits. This additive code can correct arbitrary single-qubit errors and 2-qubit collective phase-flips occurring on any qubit pairs, i.e., the error set
	\begin{equation}
		\begin{aligned}\label{eq:E_add_ZZ}
			\mathcal{E} = \Big\{I, X_j, Y_j, Z_j, Z_iZ_j\Big\}.
		\end{aligned}
	\end{equation}
	with indices $i,j \in [1,n]$. 
	
	According to the quantum Hamming bound, for one logical qubit, no non-degenerate quantum code with code length $n < 7$ can correct arbitrary single-qubit errors as well as 2-qubit collective phase-flips since
	\begin{equation}
		2^n \geq K(3n + \frac{n(n-1)}{2})
	\end{equation}
	with $K=2$ only holds when $n \geq 7$.
	
	Two $((7,2,3))_2$ stabilizer codes were investigated in detail. One is the famous Steane code~\cite{steane1996multiple} based on the Calderbank-Shor-Steane (CSS) construction. The other is a non-CSS code found by numerical greedy search, called the bare code~\cite{li2017fault}. Their weight enumerators are as follows,
	\begin{alignat}{5}
		A^{\{\mathrm{Steane}\}}(z) &= 1 + 21z^4 + 42z^6,\nonumber\\
		B^{\{\mathrm{Steane}\}}(z) &= 1 + 21z^3 + 21z^4 \nonumber\\&\quad+ 126z^5+ 42z^6+ 45z^7,
  	\end{alignat}
        and
	\begin{alignat}{5}
        A^{\{\mathrm{bare}\}}(z) = &1 + 5z^2+ 11z^4+47z^6,\nonumber\\
		B^{\{\mathrm{bare}\}}(z) = &1 + 5z^2+ 36z^3\nonumber\\&\quad+ 11z^4+ 96z^5+ 47z^6.
	\end{alignat}
	QECCs with different weight enumerators are not locally and translationally equivalent. Our code is different from the Steane and the bare $((7,2,3))_2$ codes. See Appendix~\ref{ap:weight enumerators} more weight enumerators. 
	
	The Steane and the bare codes cannot correct nearest-neighbor collective phase-flips. For the combined channel of nearest-neighbor collective phase-flips with noise rate $p_{\text{zz}}$ and single-qubit errors with noise rate $p$, the entanglement fidelity of our code is of the form 
	\begin{equation}
		\begin{aligned}
			F_e(\mathcal{R}\mathcal{N}) = &1 -\mathcal{O}(p^2)-\mathcal{O}(p_{\mathrm{zz}}^2)
		\end{aligned}
	\end{equation}
	whereas the entanglement fidelity of the Steane and the bare codes is of the form
	\begin{equation}
		\begin{aligned}
			F_e(\mathcal{R}\mathcal{N}) = &1 -\mathcal{O}(p_{\mathrm{zz}})  -\mathcal{O}(p^2) -\mathcal{O}(p_{\mathrm{zz}}^2).
		\end{aligned}
	\end{equation}
	
	Although our code was written in a quantum simulator that has not been open-sourced yet, we rewrote some example implementations with Qiskit, an open-source software development kit. They are available on GitHub~\cite{VarQECcode}.

	\section{Noise Resilience}\label{Sec:noise-resilience}
	Although the previously introduced results are obtained by numerical simulation, VarQEC is a hybrid quantum-classical algorithm meant to be run on NISQ devices where quantum gates are inevitably noisy. In this section, we demonstrate that VarQEC is pretty resilient to random gate errors. As long as the error rate $p_{\mathrm{gate}}$ is below a reasonable threshold, VarQEC can find an efficient encoding circuit that prepares the correct code. This resilience is essentially analogous to the noise resilience in variational quantum compiling~\cite{sharma2020noise}.
	
	We start from the simplest noise model, global depolarizing, and introduce the following theorem.
	\begin{theorem}\label{noise_resilience}
		Suppose the variational quantum circuit in VarQEC is accompanied by global depolarizing noise acting continuously throughout the circuit. If the ideal circuit is capable of finding an eligible quantum code, after adequate optimization with the noisy circuit, the output parameters $\boldsymbol{\theta}'_\mathrm{opt}$ are still correct.
	\end{theorem}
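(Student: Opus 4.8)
The plan is to show that global depolarizing noise merely relabels the cost landscape by a fixed, $\boldsymbol{\theta}$-independent rescaling together with a $\boldsymbol{\theta}$-independent additive constant, so that the set of global minimizers --- which by hypothesis is the nonempty set of parameters encoding an eligible code --- is left invariant. First I would record the two algebraic properties of the global depolarizing channel $\mathcal{D}_q(\rho)=q\rho+(1-q)\operatorname{Tr}(\rho)\,I/2^n$: it commutes with every unitary channel, $\mathcal{D}_q\circ\mathcal{U}=\mathcal{U}\circ\mathcal{D}_q$, and it composes multiplicatively, $\mathcal{D}_{q_1}\circ\mathcal{D}_{q_2}=\mathcal{D}_{q_1q_2}$. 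Modelling the noisy VQC as the ideal gate sequence interleaved with depolarizing channels of fixed per-gate strength, these two facts collapse all the noise into a single channel acting \emph{after} the ideal circuit: the noisy preparation of a basis state is $\mathcal{D}_{q^\star}\!\big(|\psi_j(\boldsymbol\theta)\rangle\langle\psi_j(\boldsymbol\theta)|\big)$, and the noisy realization of the ``echo'' circuit $U^\dagger(\boldsymbol\theta)\,E_\mu\,U(\boldsymbol\theta)$ used to read off the off-diagonal overlaps is $\mathcal{D}_{\tilde q}$ composed with the corresponding ideal channel, where $q^\star,\tilde q\in(0,1)$ depend only on the circuit architecture and the per-gate error rate, not on $\boldsymbol\theta$.

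Next I would propagate this through Definition~\ref{def_cost}. Writing $\rho_j(\boldsymbol\theta)=|\psi_j\rangle\langle\psi_j|$, for any error $E_\mu$ one has $\operatorname{Tr}\!\big(E_\mu\,\mathcal{D}_{q^\star}\rho_j\big)=q^\star\langle\psi_j|E_\mu|\psi_j\rangle+(1-q^\star)\operatorname{Tr}(E_\mu)/2^n$, and the $\boldsymbol\theta$-independent second term cancels in the centered quantity $\langle\psi_j|E_\mu|\psi_j\rangle-\overline{\langle E_\mu \rangle}$, which is therefore simply multiplied by $q^\star$. For the off-diagonal terms, the joint probability returned by the measurement procedure of Sec.~\ref{Sec:algorithm} becomes $p'_{ij}=\tilde q\,|\langle\psi_i|E_\mu|\psi_j\rangle|^2+(1-\tilde q)/2^n$, so $\sqrt{p'_{ij}}$ is a fixed strictly increasing function of $|\langle\psi_i|E_\mu|\psi_j\rangle|$. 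Substituting into Eqs.~\eqref{l1norm},~\eqref{l2norm}, the noisy $\ell_2$ cost reads
\begin{equation}
\begin{aligned}
\widetilde C^{\ell_{2}}_{n,K,\mathcal E}(\boldsymbol\theta)={}&\tilde q\sum_{E_\mu\in\mathcal E}\sum_{1\le i<j\le K}\big|\langle\psi_i|E_\mu|\psi_j\rangle\big|^2\\
&+(q^\star)^2\sum_{E_\mu\in\mathcal E}\sum_{j=1}^K\frac{\big|\langle\psi_j|E_\mu|\psi_j\rangle-\overline{\langle E_\mu \rangle}\big|^2}{4}+\kappa,
\end{aligned}
\end{equation}
with $\kappa=|\mathcal E|\binom{K}{2}(1-\tilde q)/2^n$ independent of $\boldsymbol\theta$, while the noisy $\ell_1$ cost is a sum of terms, each a fixed strictly increasing function of the matching term of $C^{\ell_1}_{n,K,\mathcal E}$.

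The conclusion is then immediate. After subtracting $\kappa$, $\widetilde C^{\ell_2}_{n,K,\mathcal E}$ is a sum of non-negative $\boldsymbol\theta$-dependent terms with strictly positive coefficients, each vanishing iff the matching term of $C^{\ell_2}_{n,K,\mathcal E}$ vanishes; the same holds for $\widetilde C^{\ell_1}_{n,K,\mathcal E}$ by monotonicity. Hence either noisy cost attains its global minimum exactly on the common zero set $Z=\{\boldsymbol\theta:C^{\ell_1}_{n,K,\mathcal E}(\boldsymbol\theta)=0\}$, which coincides with the zero set of $C^{\ell_2}_{n,K,\mathcal E}$. By hypothesis the ideal circuit can encode an eligible code, so $Z\neq\varnothing$; therefore $\arg\min\widetilde C=Z$ and any minimizer $\boldsymbol\theta'_\mathrm{opt}$ of the noisy objective lies in $Z$, i.e.\ running $U(\boldsymbol\theta'_\mathrm{opt})$ noiselessly prepares the target code. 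For the algorithm as implemented, which only drives $C^{\ell_1}_{n,K,\mathcal E}$ below $C^{\ell_1}_{\mathrm{tol}}$ rather than to $0$, the same conclusion follows from the continuity of this fixed, term-by-term monotone reparametrization.

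The step I expect to be the main obstacle is justifying the parameter-independence of $q^\star$ and $\tilde q$ --- that ``global depolarizing acting continuously throughout the circuit'' genuinely collapses to one fixed depolarizing channel after the ideal unitary. This is precisely where one uses that the noise is \emph{global} (a local depolarizing model would not commute through the entangling layers) and adopts the standard modelling assumption that each gate contributes a fixed-strength depolarizing channel independent of its rotation angle, so the total strength depends only on the gate count; one must also exclude the degenerate total-decoherence limit $q^\star=0$, in which the cost is flat and the statement is vacuous. Once this modelling step is nailed down, propagating it through the cost and comparing minimizers is routine; the mechanism parallels the noise resilience of variational quantum compiling~\cite{sharma2020noise}.
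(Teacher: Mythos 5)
Your proof is correct and follows essentially the same route as the paper's: model the noisy state preparations as the ideal pure states mixed with $I/2^n$ at $\boldsymbol{\theta}$-independent rates, observe that the centered diagonal terms are merely rescaled and the off-diagonal probabilities become fixed monotone functions of the ideal overlaps, and conclude that the (nonempty) zero set of the ideal cost is exactly the minimizer set of the noisy cost. The only difference is that you explicitly justify the collapse of the interleaved depolarizing noise into a single parameter-independent channel via commutation and multiplicativity (and flag the degenerate $q^\star=0$ case), whereas the paper simply asserts the depolarized form of $\rho_j$ and $\rho_{j,\mu}$ with constants determined by circuit depth.
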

	\begin{proof}
		Consider the cost functions in Eqs.~\eqref{l1norm},~\eqref{l2norm}.
		Due to the global depolarizing noise, when we run the VQC $U(\boldsymbol{\theta})$ to prepare a basis state $|\psi_j\rangle$, we instead obtain
		\begin{equation}
			\rho_j = (1-\epsilon_1)|\psi_j\rangle\langle\psi_j| + \epsilon_1\frac{I}{2^n};
		\end{equation}
		when we apply $U^{\dagger}(\boldsymbol{\theta})E_{\mu} U(\boldsymbol{\theta})$ to an initial binary string $|\mathbf{j-1}\rangle|\mathbf{0}\rangle$ to prepare the output state $|\psi_{j,\mu}\rangle$, we instead obtain
		\begin{equation}
			\rho_{j,\mu} = (1-\epsilon_2)|\psi_{j,\mu}\rangle\langle\psi_{j,\mu}| +  \epsilon_2\frac{I}{2^n}.
		\end{equation}
		Noise rates $\epsilon_1$ and $\epsilon_2$ are unknown constants determined by the circuit depth. After adequate optimization with the noisy VQC, we obtain the pseudo-optimal parameters
		\begin{equation}\label{pseuso_params}
			\begin{aligned}
				\boldsymbol{\theta}'_\mathrm{opt}  = &\arg \min _{\boldsymbol{\theta}} \\& \sum_{E_{\mu} \in \mathcal{E}}  \Big(\sum_{1 \leq i<j \leq K}\sqrt{\langle\mathbf{i-1}|\langle\mathbf{0}|\rho_{j,\mu}|\mathbf{i-1}\rangle|\mathbf{0}\rangle} \\
				&+\sum_{j=1}^K  \big|\operatorname{Tr}(\rho_jE_{\mu})-  \sum_{i=1}^K\operatorname{Tr}(\rho_iE_{\mu})/K  \big|/2 \Big)\\
				=&\arg \min_{\boldsymbol{\theta}} \\& \sum_{E_{\mu} \in \mathcal{E}}   \Big(\sum_{1 \leq i<j \leq K}\sqrt{(1-\epsilon_2)\big|\langle\psi_i|E_{\mu} |\psi_j\rangle\big|^2 + \frac{\epsilon_2}{2^n}}\\
				&+\sum_{j=1}^K \frac{1-\epsilon_1}{2} \big|\langle\psi_j|E_{\mu}|\psi_j\rangle-\overline{\langle E_{\mu} \rangle}\big| \Big).
			\end{aligned}
		\end{equation}
		Since the ideal variational quantum circuit is capable of finding an eligible quantum code, each term in the cost function Eq.~\eqref{l1norm} can be minimized to 0 (i.e., $\big|\langle\psi_i|E_{\mu} |\psi_j\rangle\big|=0$, $\big|\langle\psi_j|E_{\mu}|\psi_j\rangle-\overline{\langle E_{\mu} \rangle}\big|=0$). Comparing Eq.~\eqref{l1norm} and Eq.~\eqref{pseuso_params}, we conclude that
		\begin{equation}
			\boldsymbol{\theta}'_\mathrm{opt} = \boldsymbol{\theta}_\mathrm{opt}.
		\end{equation}
	\end{proof}
	
	To sum up, VarQEC is perfectly resilient to global depolarizing noise, i.e., it can find the correct encoding circuit in the presence of global depolarizing.
	
	\begin{figure}
		\centering
		\includegraphics[width=7cm]{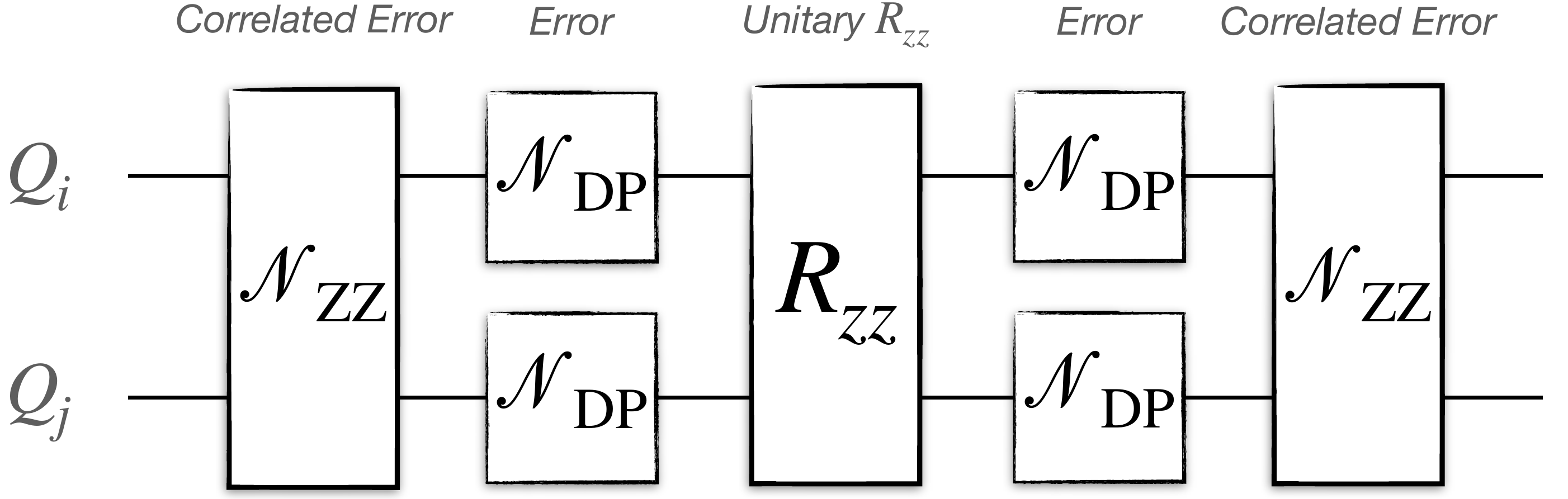}
		\caption{Gate error model. $R_{zz}$ is accompanied by local depolarizing noise $\mathcal{N}_{\mathrm{DP}}$ and collective phase-flip error $\mathcal{N}_{\mathrm{ZZ}}$ both before and after.}\label{fig:noise_gate}
	\end{figure}
	
	In practical scenarios, circuit noise is more complicated and single-qubit errors dominate. Now we consider a more realistic model. Suppose each 2-qubit $R_{zz}$ gate in the VQC is accompanied by local depolarizing noise and collective phase flips, as illustrated in Fig.~\ref{fig:noise_gate}. Before the ideal unitary $R_{zz}$, the two qubits goes through $\mathcal{N}_{\mathrm{DP}}^{\otimes2} \circ \mathcal{N}_{\mathrm{ZZ}}$, after the ideal $R_{zz}$, the system goes through $\mathcal{N}_{\mathrm{ZZ}} \circ \mathcal{N}_{\mathrm{DP}}^{\otimes2}$. In the following, for gate error rate $p_{\mathrm{gate}}$, we set the error rate of each $\mathcal{N}_{\mathrm{DP}}$ as $p_{\mathrm{gate}}/2$, the error rate of each $\mathcal{N}_{\mathrm{ZZ}}$ as $p_{\mathrm{gate}}/8$.

	Still, we use VarQEC to find channel-adaptive codes for noise channel $\mathcal{N}$ (Eq.~\eqref{overall_N}) with hardware connectivity graphs (c,d) shown in Fig.~\ref{fig:code_str}. The difference is that this time the VQC is noisy. After optimization, we obtain the pseudo-optimal parameters $\boldsymbol{\theta}'_\mathrm{opt}$. It is interesting to note that if we transfer $\boldsymbol{\theta}'_\mathrm{opt}$ to an ideal VQC, the corresponding cost function $C^{\ell_{1}}_{n, K,\mathcal{N}}(\boldsymbol{\theta}'_{\mathrm{opt}})$ can be much smaller than the one we estimated with the noisy VQC. Namely, we find a roughly correct encoder even if we use a noisy VQC in our algorithm. The comparison of cost functions for different gate error rates is given in Fig.~\ref{fig:noisy_enc}(a). The cost reduction for both graphs is obvious. Two-qubit gate error rates on state-of-the-art NISQ computers are about $\sim 10^{-2}$~\cite{ai2021exponential}. One can run our algorithm on current hardware directly.

	\begin{figure}[!htb]
		\centering
		\includegraphics[width=7.8cm]{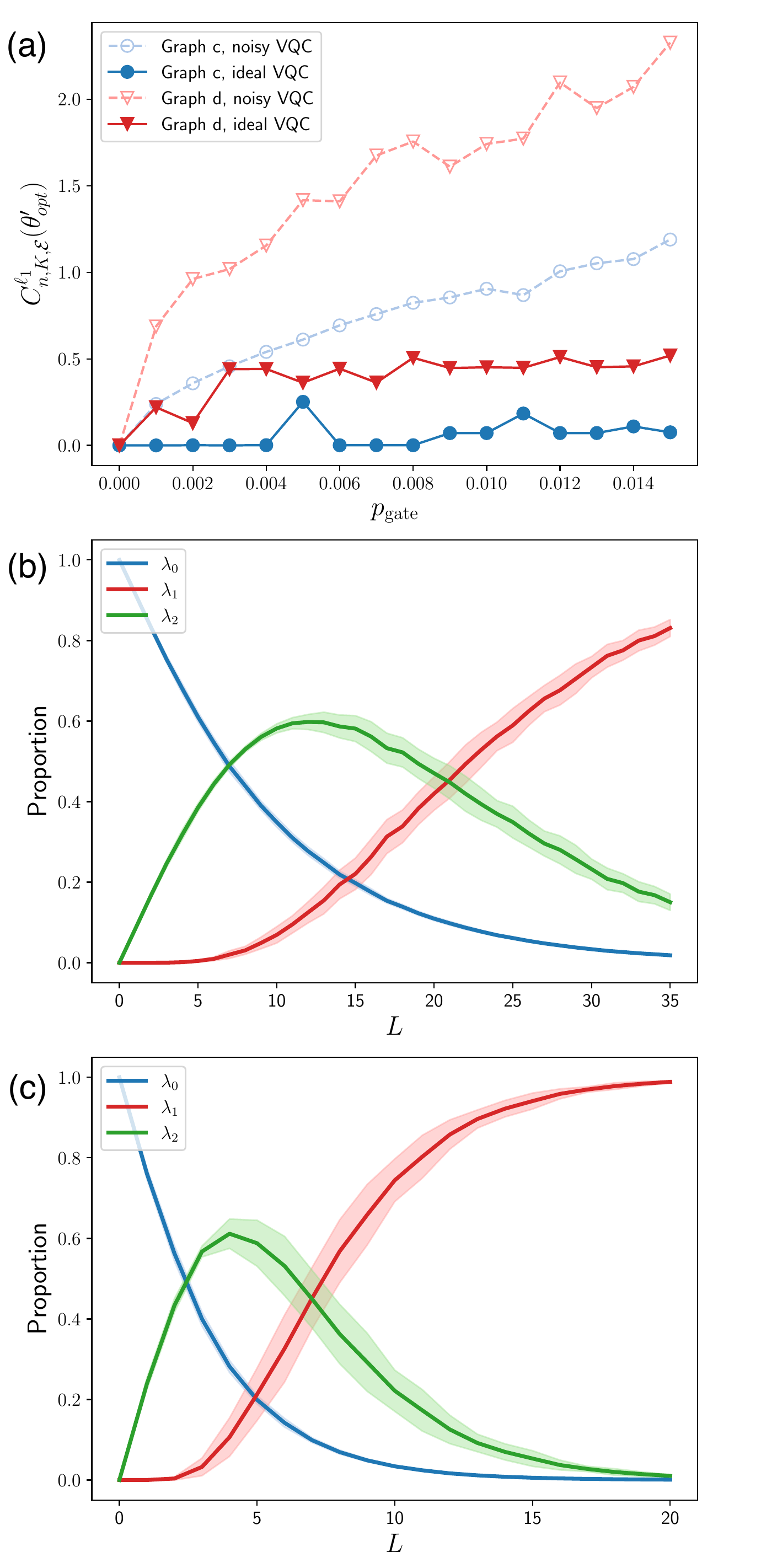}
		\caption{Noise resilience of VarQEC using VQCs with connectivity graphs (c,d) in Fig.~\ref{fig:code_str}. (a) The noisy- and ideal-VQC cost functions versus gate error rate. $\boldsymbol{\theta}'_\mathrm{opt}$ were obtained by optimizing a noisy VQC. (b) Average $\lambda$s (see Eq.~\eqref{noisy_output}) versus the number of VQC layers for graph (c). (c) Average $\lambda$s versus the number of VQC layers for graph (d). The shaded areas represent the standard deviation of 100 samples.}
		\label{fig:noisy_enc}
	\end{figure}

	Suppose the input state of a quantum circuit is $|\psi_{\mathrm{in}}\rangle$, the target unitary evolution is $U_{\mathrm{ideal}}$. The ideal output state is
	\begin{equation}
		|\psi_{\mathrm{ideal}}\rangle = U_{\mathrm{ideal}}|\psi_{\mathrm{in}}\rangle.
	\end{equation}
	However, due to quantum gate errors, the output state $\rho_{\mathrm{out}}$ is a mixed state. We express $\rho_{\mathrm{out}}$ as a summation of three terms,
	\begin{equation}\label{noisy_output}
		\begin{aligned}
			\rho_{\mathrm{out}} &= \mathcal{N}_{\mathrm{circuit}}(|\psi_{\mathrm{in}}\rangle)\\&= \lambda_0 |\psi_{\mathrm{ideal}}\rangle\langle\psi_{\mathrm{ideal}}| + \lambda_1\frac{I}{2^n} + \lambda_2\rho_2,
		\end{aligned}
	\end{equation}
	where $\mathcal{N}_{\mathrm{circuit}}$ denotes that channel of the noisy quantum circuit, $\lambda_1$ is the smallest eigenvalue of $\rho_{\mathrm{out}}$ multiplied by $2^n$, $I/2^n$ is the maximally mixed state, $\rho_2$ is a density operator orthogonal to $|\psi_{\mathrm{ideal}}\rangle$, i.e.,
	\begin{equation}
		\operatorname{Tr}(\rho_1|\psi_{\mathrm{ideal}}\rangle\langle\psi_{\mathrm{ideal}}|) = 0.
	\end{equation}
	The latter two terms of Eq.~\eqref{noisy_output} are both induced by gate errors, but they have different effects on the noise resilience of our algorithm. The second term is a global white noise, as we analyzed in Theorem~\ref{noise_resilience}, it does not affect the optimal parameters. However, the third term $\lambda_2\rho_2$ non-trivially alters the optimization landscape and introduces some local minima. Usually, both the second term and the third term are not negligible. Nevertheless, we are certain about the trend: with the increase of circuit depth, the second term will dominate the third term eventually~\cite{dalzell2021random, deshpande2021tight}. 
	
	For VQCs corresponding to graphs (c,d), we fix gate error rate $0.01$, and try different numbers of layers with randomly sampled $\boldsymbol{\theta}$. The average value of $\lambda'$s are shown in Fig.~\ref{fig:noisy_enc}(b,c). Each point is averaged over 100 samples. Compared with the one-dimensional ring (graph (c)), vertices in the complete graph (graph (d)) are more tightly connected, local errors can be transformed into global white noise more rapidly. For both graphs, $\lambda_1 \ll \lambda_2$ when $L$ is relatively small and $\lambda_1 \gg \lambda_2$ when $L$ is relatively large. With the decrease of gate error rate and the increase of circuit depth, VarQEC will become more resilient to noise. Additionally, one might also consider estimating cost functions in VarQEC more precisely with error mitigation techniques like virtual distillation~\cite{huggins2021virtual, koczor2021exponential}.

	\section{Barren Plateaus}\label{Sec:BP}
	The \textit{barren plateau} (BP)~\cite{mcclean2018barren, cerezo2021cost} and the \textit{noise-induced barren plateau} (NIBP)~\cite{wang2021noise} are two daunting challenges in variational quantum optimization. In this section, we numerically investigate their effects in the VarQEC algorithm. 
	
	The barren plateau is a phenomenon where the gradients vanish exponentially with the increasing number of qubits~\cite{mcclean2018barren}. It occurs when the VQCs form a unitary 2-design, regardless of whether the VQC is noisy or noiseless. Ref.~\cite{cerezo2021cost} connected the locality of the cost function and the trainability of the corresponding VQC. If the cost function is local and the circuit depth is of order $\mathcal{O}(\log(n))$, the BP does not occur (i.e., the VQC is trainable). However, if the cost function is global or the circuit depth is of order $\mathcal{O}(\text{poly}(n))$, a BP occurs in the optimization landscape, and the VQC is untrainable.
	
	In near-term quantum computation, dominating errors always only act on several local qubits. Accordingly, the cost functions $C^{\ell_{1}}_{n, K,\mathcal{E}}$ (Eq.~(\ref{l1norm})) and $C^{\ell_{2}}_{n, K,\mathcal{E}}$ (Eq.~(\ref{l2norm})) are merely influenced by local errors $E_\mu$. Therefore, we expect the same conclusion to hold for VarQEC: the BP does not occur when the circuit depth is of order $\mathcal{O}(\log(n))$ and occurs when the circuit depth is of order $\mathcal{O}(\text{poly}(n))$. 
	
	Without loss of generality, here we use the star connectivity graph $S_{n-1}$ ($S_{7}$ is illustrated in Fig.~\ref{fig:bp}(b) inset) and focus on $C^{\ell_{2}}_{n, 2,\mathcal{E}}$ with $\mathcal{E}=\left\{O_\alpha \mid \mathrm{wt}\left(O_\alpha\right)<3 \right\}$, i.e., searching for QECCs that encode one logical qubit information and correct an arbitrary single-qubit error. Fig.~\ref{fig:bp}(a) plots the partial derivative of the off-diagonal cost
	\begin{equation}
		\begin{aligned}
			\sum_{E_{\mu} \in \mathcal{E}}  \sum_{1 \leq i<j \leq K} \big|\langle\mathbf{i}|\langle\mathbf{0}|U^\dagger(\boldsymbol{\theta}) E_{\mu} U(\boldsymbol{\theta})|\mathbf{j}\rangle|\mathbf{0}\rangle \big|^2
		\end{aligned}
	\end{equation}
	and the diagonal cost
	\begin{equation}
		\begin{aligned}
			\sum_{E_{\mu} \in \mathcal{E}}\sum_{j=1}^K \big|\langle\mathbf{j}|\langle\mathbf{0}|U^\dagger(\boldsymbol{\theta}) E_{\mu} U(\boldsymbol{\theta})|\mathbf{j}\rangle|\mathbf{0}\rangle-\overline{\langle E_{\mu} \rangle}\big|^2/4. 
		\end{aligned}
	\end{equation}
	with respect to a randomly selected circuit parameter $\theta_j$. When the number of VQC layers is $L = 3$ or $L = \lceil \log(n) \rceil$, the circuit is trainable. However, when $L=n$, both off-diagonal and diagonal gradients decay exponentially with the increasing number of qubits. 
	
	\begin{figure}[tb]
		\centering
		\includegraphics[width=7.8cm]{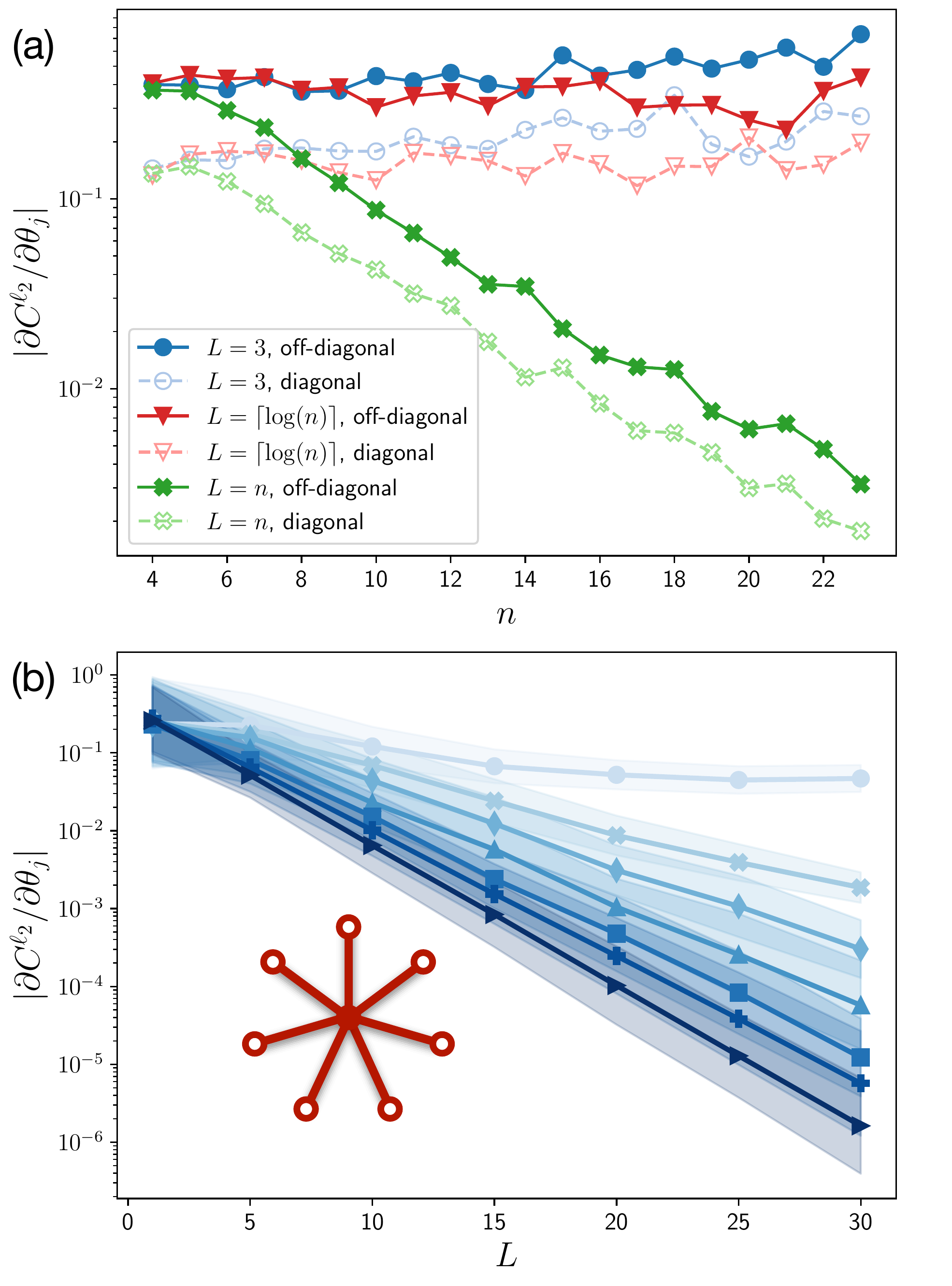}
		\caption{Barren plateaus and noise-induced barren plateaus in VarQEC. (a) Partial derivatives of the off-diagonal and diagonal parts of $C^{\ell_{2}}_{n, K,\mathcal{E}}$ with respect to a random circuit parameter for different system sizes and circuit depths. Each point is averaged over 1000 samples. (b) Partial derivatives of $C^{\ell_{2}}_{n, K,\mathcal{E}}$ with respect to a random parameter for gate noise rates (from top to bottom) $p = 0, 5 \times 10^{-3}, 0.01, 0.02, 0.03, 0.04, 0.05$. Each point is averaged over 1000 samples, and the shaded areas represent the standard deviations. Inset: connectivity graph of $S_{7}$.}
		\label{fig:bp}
	\end{figure}

	The noise-induced barren plateau refers to a conceptually different phenomenon where cost gradients vanish exponentially with $L$ due to hardware noise accumulation~\cite{wang2021noise}. Consequently, the gradients vanish exponentially with $n$ if $L$ grows linearly with $n$. Unlike the noise-free BP, NIPB only occurs when the VQC is noisy, regardless of whether the circuits form a unitary 2-design. Still, we consider the local noise model illustrated in Fig.~\ref{fig:noise_gate}, system size $n=8$, number of layers $L=1,5,10,15,20,25,30$, noise rate $p = 0, 5 \times 10^{-3}, 0.01, 0.02, 0.03, 0.04, 0.05$. The numerical results for the gradients are shown in Fig.~\ref{fig:bp}(b). With the increase of $L$, the partial derivatives of $C^{\ell_{2}}_{n, K,\mathcal{E}}$ with respect to a random parameter decay exponentially, and the decay factor is determined by the noise rate. This illustrates that although VarQEC can find a roughly correct encoder after adequate training with a noisy VQC (noise resilience), the required training time grows exponentially with the number of circuit layers.
	
	BPs and NIBPs manifest themselves in VarQEC when the circuit depth gets large. Nevertheless, we do not need to worry too much about them. From a practical standpoint, we are more interested in QECCs with a shallow (even constant depth) encoding circuit. The gradients of cost functions tend to be large when searching for these codes. In addition, there are more and more effective strategies to mitigate BPs, e.g., cost function partitioning and meta-learning~\cite{PhysRevResearch.3.033090} as well as optimization guided by classical shadows~\cite{PRXQuantum.3.020365}. These protocols can be applied to VarQEC reasonably.

	\section{Experiment on an IBM machine}\label{Sec:Experiment}
	Now we experimentally demonstrate VarQEC with a real superconducting quantum machine, $ibm\_quito$~\cite{ibmq_quito}. The connectivity graph of $ibm\_quito$ is shown in Fig.~\ref{fig:ibm}(a). Our goal is to find a 4-qubit approximate QECC to correct one amplitude damping error~\cite{leung1997approximate} using physical qubits $Q_0, Q_1, Q_2, Q_3$. 
	
	The Kraus operators of the amplitude damping channel are
	\begin{equation}
		\begin{aligned}
			A_{0}&=\left(\begin{array}{cc}
				1 & 0 \\
				0 & \sqrt{1-\gamma}
			\end{array}\right)\\&= I-\frac{\gamma}{4}(I-Z)+\mathcal{O}\left(\gamma^{2}\right),\\
			A_{1}&=\left(\begin{array}{cc}
				0 & \sqrt{\gamma} \\
				0 & 0
			\end{array}\right)=\frac{\sqrt{\gamma}}{2}(X+i Y).\\
		\end{aligned}
	\end{equation}
	Each $(I-Z)$ term contributes a factor of $\gamma$ and each $(X+iY)$ term contributes a factor of $\sqrt{\gamma}$. To correct a single amplitude damping error, we only need to consider error products with total order less than $\gamma^{3/2}$:
	\begin{equation}
		\begin{aligned}
			\mathcal{E} = \{&I, X_j+iY_j, X_j-iY_j, (X_i-iY_i)(X_j+iY_j), \\ &I_j-Z_j\}.
		\end{aligned}
	\end{equation}
	The variational quantum circuit we use is illustrated in Fig.~\ref{fig:ibm}(b). When the rotation angle $\theta = \pm \pi/2$, the VQC serves as an exact encoder. Since $Q_2$ and $Q_3$ are not directly connected, the IBM compiler adds 2 additional SWAP gates (each realized by 3 CNOT gates) to implement CNOT between $Q_2$ and $Q_3$). The hardware-efficient VQC after compiling is shown in Fig.~\ref{fig:ibm}(c).

	Due to hardware constraints, we slightly modify the VarQEC algorithm and enhance it with quantum error mitigation (EM) as follows. Suppose the initial parameter $\theta=0.1$, we iteratively apply the VQC to input states $|0000\rangle$ and $|0010\rangle$, do quantum state tomography on the output mixed states and record their density matrices $\rho_1$ and $\rho_2$. Then we classically extract their dominating eigenstates 
	\begin{equation}
		|\tilde{\psi}_1\rangle = \lim_{M \rightarrow \infty}\frac{\rho_1^M}{\operatorname{Tr}(\rho_1^M)}, |\tilde{\psi}_2\rangle = \lim_{M \rightarrow \infty}\frac{\rho_2^M}{\operatorname{Tr}(\rho_2^M)},
	\end{equation}
	estimate the cost functions $C^{\ell_{2}}_{n, K,\mathcal{E}}$ and $C^{\ell_{1}}_{n, K,\mathcal{E}}$ of logical basis states $\{|\tilde{\psi}_1\rangle, (|\tilde{\psi}_2\rangle - \langle\tilde{\psi}_1|\tilde{\psi}_2\rangle|\tilde{\psi}_1\rangle)/c\}$, where $c$ is a normalization factor.

	The cost gradients are estimated by finite differencing:
	\begin{equation}
		\frac{\partial C_{4, 2,\mathcal{E}}(\theta)}{\partial \theta}  \approx  \frac{C_{4, 2,\mathcal{E}}(\theta + \delta\theta) - C_{4, 2,\mathcal{E}}(\theta - \delta\theta)}{2\delta\theta}
	\end{equation}
	with $\delta\theta = 0.05$. In the first stage (first 15 iterations), we minimize $C^{\ell_{2}}_{n, K,\mathcal{E}}$ with learning rate $\eta = 1$ until $C^{\ell_{2}}_{n, K,\mathcal{E}} < 0.01$. Then we switch to $C^{\ell_{1}}_{n, K,\mathcal{E}}$ and minimize it with a smaller learning rate $\eta = 0.05$. The training curves of the estimated $C^{\ell_{2}}_{n, K,\mathcal{E}}$ with/without error mitigation and its real value are shown in Fig.~\ref{fig:ibm}(d). After adequate training (25 iterations), the parameter $\theta$ converges to about $1.63$, slightly greater than the ideal angle $\pi/2$ (indicated by the dashed line in the inset). Nevertheless, this difference is acceptable, the VQC still encodes an approximate amplitude damping code. 
	
	We implement a total of $152$ quantum circuits for this experiment: $100$ for estimating the gradients and $52$ for estimating the cost functions.

	\begin{figure}[bt]
		\centering
		\includegraphics[width=8.2cm]{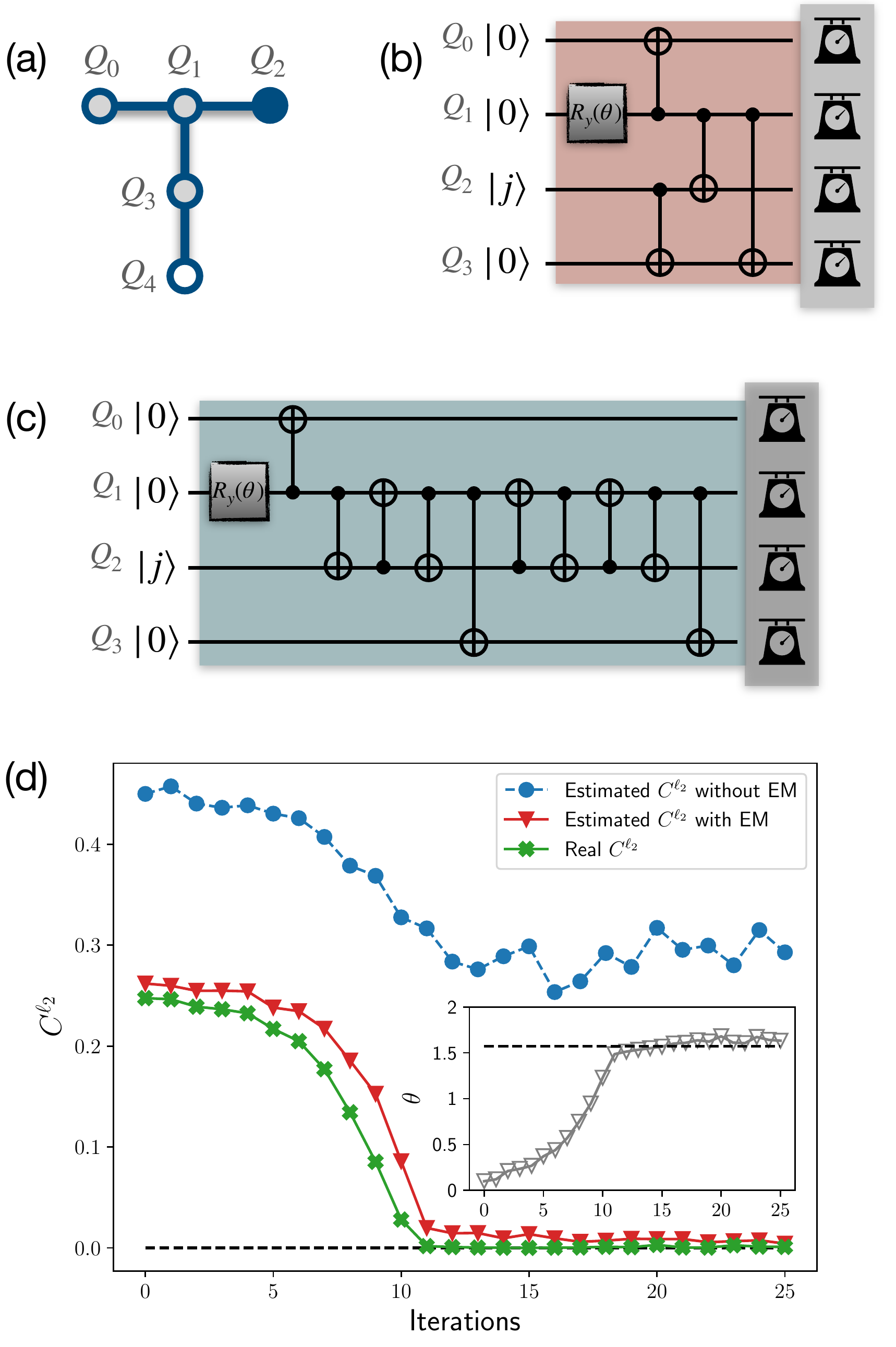}
		\caption{Experimental results of VarQEC for finding an approximate amplitude damping code. (a) The hardware connectivity of $ibm\_quito$. $Q_0, Q_1, Q_2, Q_3$ are used in the experiment. $Q_2$ encodes the initial logical information. (b-c) The original and the compiled variational quantum circuits. (d) The estimated $\ell_2$-norm cost functions with/without error mitigation (EM) and its real value during variational learning. Inset: the rotation angle $\theta$ during learning.}
		\label{fig:ibm}
	\end{figure}

	\section{Conclusions and Outlooks}\label{Conclusion}
	In this work, we proposed VarQEC, an effective variational quantum algorithm for finding various quantum error-correcting codes. VarQEC is capable of finding arbitrary quantum codes since the cost functions therein are based on the most general requirement of a QECC, the Knill-Laflamme conditions. We demonstrated its efficacy by discovering/rediscovering some symmetric, asymmetric, and channel-adaptive codes, e.g., $((5,2,3))_2$, $((5,6,2))_2$, $((6,2,3))_2$, $((7,2,3))_2$, $((12,2^6,3))_2$, $((14,2^8,3))_2$, $((10,4,4))_2$, $((6,2,d_e(2)=4))_2$, $((8,3,d_e(2)=4))_2$. Some discovered codes are equivalent to stabilizer ones and some are not. We investigated them in detail. In particular, VarQEC provided numerical evidence that a quantum code with parameters $((7,3,3))_2$ does not exist. It is worth mentioning that the channel-adaptive codes with optimized encoding circuits found by our method can then be used as inner codes on the physical level in a concatenation scheme. Stabilizer QECCs for qudits can be used as outer codes.

	VarQEC is robust to hardware noise; therefore, it is particularly promising in the NISQ era. A problem worth studying further is how to choose the most resource-efficient variational quantum circuit in VarQEC. There is reason to believe that the optimal VQC ansatz is code-dependent. For example, when the target quantum code is translational-invariant, one may use a VQC with a certain amount of symmetry, where different gates can share the same parameter. If we slightly modify the cost functions, VarQEC can be used for finding some QECC variants like the hybrid quantum-classical codes~\cite{8006823}, estimating the zero-error capacity of noisy quantum channels~\cite{duan2009super}, and solving quantum marginal problems~\cite{yu2021complete}.
	
	VarQEC can also be directly revised to a classical algorithm. When a NISQ processor is not accessible, one can replace the VQCs with classical variational ansatzes like tensor networks~\cite{orus2019tensor, cirac2021matrix, cheng2021simulating} or neural network quantum states~\cite{carleo2017solving}, and then similarly implement optimization and search for eligible quantum codes merely with a classical computer. However, the encoding circuits can not be naturally obtained.

	\acknowledgments
	We thank Song Cheng and Sirui Lu for helpful discussions. MG acknowledges support by the Foundation for Polish Science (IRAP project, ICTQT, contract no. MAB/2018/5, co-financed by EU within Smart Growth Operational Programme). BZ is supported by General Research Fund (no. GRF/16305121).

	\bibliography{ref}
	\bibliographystyle{quantum}
	
	\onecolumn\vskip5mm\noindent\hrulefill\vskip5mm\twocolumn
	
	\appendix
	\section*{\LARGE \bf Appendices}
	\section{Proof of Proposition~\ref{prop3}}\label{ap:proof7}
	\begin{proof}
		From the completeness relation of the Kraus operators $\{E_{\alpha}\}$,
		\begin{equation}
			\begin{aligned}
				\sum_{\alpha=1}^m E_{\alpha}^{\dagger}E_{\alpha} = I,
			\end{aligned}
		\end{equation}
		we know 
		\begin{equation}
			\begin{aligned}
				\operatorname{Tr}(\sum_{\alpha=1}^m E_{\alpha}E_{\alpha}^{\dagger})=&\sum_{\alpha=1}^m\operatorname{Tr}(E_{\alpha}E_{\alpha}^{\dagger})\\=&\sum_{\alpha=1}^m\operatorname{Tr}(E_{\alpha}^{\dagger}E_{\alpha})\\=&\operatorname{Tr}(\sum_{\alpha=1}^m E_{\alpha}^{\dagger}E_{\alpha})\\=&2^n.
			\end{aligned}
		\end{equation}
		Each $E_{\alpha}E^{\dagger}_{\alpha}$ and $E^{\dagger}_{\alpha}E_{\alpha}$ are positive semidefinite. Denote the eigenvalues of $E_{\alpha}E_{\alpha}^{\dagger}$ as $\xi^{\alpha}_1 \geq  \xi^{\alpha}_2 \geq \dots \geq \xi^{\alpha}_{2^n} \geq 0$, the eigenvalues of $E_{\beta}E_{\beta}^{\dagger}$ as $\xi^{\beta}_1 \geq  \xi^{\beta}_2 \geq \dots \geq \xi^{\beta}_{2^n} \geq 0$, $\sum_{\alpha,j} \xi^{\alpha}_j = \sum_{\beta,j} \xi^{\beta}_j = 2^n$.
		
		Then we have
		\begin{equation}\label{trace_bound}
			\begin{aligned}
				&\operatorname{Tr}\Big(\sum_{\beta=1}^m \sum_{\alpha=1}^m (E^{\dagger}_{\alpha}E_{\beta})^{\dagger}E^{\dagger}_{\alpha}E_{\beta}\Big)\\
				=&\sum_{\beta=1}^m \sum_{\alpha=1}^m\operatorname{Tr} (E_{\alpha}E^{\dagger}_{\alpha}E_{\beta}E^{\dagger}_{\beta})\\
				\leq &\sum_{\beta=1}^m \sum_{\alpha=1}^m \sum_{j=1}^{2^n} \xi^{\alpha}_j \xi^{\beta}_j\\
				\leq & \sum_{\beta=1}^m \sum_{\alpha=1}^m \sum_{i=1}^{2^n}\sum_{j=1}^{2^n} \xi^{\alpha}_i \xi^{\beta}_j\\
				=& 2^{2n}.
			\end{aligned}
		\end{equation}
		The first inequality uses von Neumann's trace inequality.
		
		Each $E_{\alpha}$ non-trivially acts on no more than $\lfloor (d-1)/2 \rfloor$ qubits, therefore, each error product $E^{\dagger}_{\alpha}E_{\beta}$ non-trivially acts on no more than $(d-1)$ qubits. We expand $E^{\dagger}_{\alpha}E_{\beta}$ in the Pauli basis,
		\begin{equation}
			E^{\dagger}_{\alpha}E_{\beta} = \sum_{\gamma}\chi_{\gamma}^{\alpha\beta} O_{\gamma}^{\alpha\beta},
		\end{equation}
		where each $O_{\gamma}^{\alpha\beta}$ is a Pauli tensor product with weight less than $d$, $O_{\gamma}^{\alpha\beta\dagger}O_{\gamma}^{\alpha\beta} = I$, $|\{O_{\gamma}^{\alpha\beta}\}| \leq 4^{d-1}m^2$. Then
		\begin{equation}
			\begin{aligned}
				\operatorname{Tr}\left( (E_\alpha^\dagger E_\beta)^\dagger E_\alpha^\dagger E_\beta\right )=&\sum_\gamma \operatorname{Tr}\Big(|\chi_\gamma^{\alpha\beta}|^2 O_{\gamma}^{\alpha\beta\dagger}O_{\gamma}^{\alpha\beta}\Big) \\=& \sum_\gamma |\chi_\gamma^{\alpha\beta}|^2 \operatorname{Tr}(O_{\gamma}^{\alpha\beta\dagger}O_{\gamma}^{\alpha\beta})\\ =& 2^n \sum_\gamma |\chi_\gamma^{\alpha\beta}|^2,
			\end{aligned}
		\end{equation}
		\begin{equation}
			\operatorname{Tr}\Big(\sum_{\beta=1}^m \sum_{\alpha=1}^m (E^{\dagger}_{\alpha}E_{\beta})^{\dagger}E^{\dagger}_{\alpha}E_{\beta}\Big) = 2^n \sum_{\alpha, \beta, \gamma}|\chi_{\gamma}^{\alpha\beta}|^2.
		\end{equation}
		
		According to Eq.~\eqref{trace_bound}, we have
		\begin{equation}
			\sum_{\alpha, \beta, \gamma}|\chi_{\gamma}^{\alpha\beta}|^2 \leq 2^n.
		\end{equation}
		
		For the basis states $\{|\psi_1\rangle,|\psi_2\rangle,\dots,|\psi_K\rangle\}$, 
		\begin{equation}
			\begin{aligned}
				&\sum_{\alpha,\beta}\sum_{1 \leq i<j \leq K} \big|\langle\psi_i|E^{\dagger}_{\alpha}E_{\beta} |\psi_j\rangle\big|
				\\=  &\sum_{\alpha,\beta, \gamma}\sum_{1 \leq i<j \leq K} |\chi_{\gamma}^{\alpha\beta}| \big|\langle\psi_i|O_{\gamma}^{\alpha\beta} |\psi_j\rangle\big|
				\\\leq  &\sum_{\alpha,\beta, \gamma}|\chi_{\gamma}^{\alpha\beta}| \sum_{1 \leq i<j \leq K}\sum_{\operatorname{wt}(O_{\alpha'})<d}\big|\langle\psi_i|O_{\alpha'} |\psi_j\rangle\big|
				\\ \leq &2^{n/2 + d-1}m\sum_{1 \leq i<j \leq K}\sum_{\operatorname{wt}(O_{\alpha'})<d}\big|\langle\psi_i|O_{\alpha'} |\psi_j\rangle\big|.
			\end{aligned}
		\end{equation}
		
		Similarly, we have 
		\begin{equation}
			\begin{aligned}
				&\sum_{\alpha,\beta}\sum_{j=1}^K \big|\langle\psi_j|E^{\dagger}_{\alpha}E_{\beta}|\psi_j\rangle-\overline{\langle E^{\dagger}_{\alpha}E_{\beta} \rangle}\big|/2 
				\\\leq &2^{n/2 + d-1}m \sum_{\operatorname{wt}(O_{\alpha'})<d}\sum_{j=1}^K \big|\langle\psi_j|O_{\alpha'}|\psi_j\rangle-\overline{\langle O_{\alpha'} \rangle}\big|/2.
			\end{aligned}
		\end{equation}
		Denote \\ $\mathcal{E}'= \{E^{\dagger}_{\alpha}E_{\beta}| E_{\alpha}, E_{\beta} \text{ are Kraus operators of } \mathcal{N}\}$, we have
		\begin{equation}
			C^{\ell_{1}}_{n, K,\mathcal{E}'} \leq 2^{n/2 + d-1}m C^{\ell_{1}}_{n, K, \mathcal{E}}.
		\end{equation}
		According to Proposition~\ref{prop2}, the code is $\varepsilon$-correctable with $\varepsilon$ bounded by
		\begin{equation}
			\varepsilon \leq K\sqrt{2C^{\ell_{1}}_{n, K,\mathcal{E}'}} \leq 2^{n/4 + d/2}K\sqrt{m C^{\ell_{1}}_{n, K, \mathcal{E}}}.
		\end{equation}
	\end{proof}

	\section{Proof of Proposition~\ref{prop4}}\label{ap:proof8}
	\begin{proof}
		Since each $E_{\alpha}$ is proportional to a Pauli error, we have
		\begin{equation}
			E_{\alpha}^{\dagger}E_{\alpha} = E_{\alpha}E_{\alpha}^{\dagger}.
		\end{equation}
		Further, from the completeness relation of the Kraus operators $\{E_{\alpha}\}$,
		\begin{equation}
			\sum_{\alpha=1}^m E_{\alpha}^{\dagger}E_{\alpha} = I,
		\end{equation}
		we obtain the completeness relation of the error products, $\{E^{\dagger}_{\alpha}E_{\beta}\}$
		\begin{equation}\label{complete_ep8}
			\sum_{\beta=1}^m \sum_{\alpha=1}^m (E^{\dagger}_{\alpha}E_{\beta})^{\dagger}E^{\dagger}_{\alpha}E_{\beta} = \sum_{\beta=1}^m E_{\beta}^{\dagger}E_{\beta} = I.
		\end{equation}
		The $c_{\text{Z}}$-effective weight of each $E_{\alpha}$ smaller than $d_e(c_{\text{Z}})/2$, therefore, each error product $E^{\dagger}_{\alpha}E_{\beta}$ is proportional to a Pauli tensor product with $c_{\text{Z}}$-effective weight
		\begin{equation}
			\operatorname{wt}_e(E^{\dagger}_{\alpha}E_{\beta}, c_{\text{Z}}) < d_e(c_{\text{Z}}).
		\end{equation}
		Denote
		\begin{equation}
			E^{\dagger}_{\alpha}E_{\beta} = \chi^{\alpha\beta} O^{\alpha\beta},
		\end{equation}
		where $O^{\alpha\beta}$ is a Pauli tensor product. 
		
		According to Eq.~\eqref{complete_ep8}, we have the normalization condition
		\begin{equation}
			\sum_{\alpha, \beta}|\chi^{\alpha\beta}|^2 = 1.
		\end{equation}

		For the basis states $\{|\psi_1\rangle,|\psi_2\rangle,\dots,|\psi_K\rangle\}$, 
		\begin{equation}
			\begin{aligned}
				&\sum_{\alpha,\beta}\sum_{1 \leq i<j \leq K} \big|\langle\psi_i|E^{\dagger}_{\alpha}E_{\beta} |\psi_j\rangle\big|
				\\=  &\sum_{\alpha,\beta}\sum_{1 \leq i<j \leq K} |\chi^{\alpha\beta}| \big|\langle\psi_i|O^{\alpha\beta} |\psi_j\rangle\big|
				\\\leq  &\sum_{\alpha,\beta}|\chi^{\alpha\beta}| \sum_{1 \leq i<j \leq K}\sum_{\operatorname{wt}_e(O_{\alpha’}, c_{\text{Z}})<d_e(c_{\text{Z}})}\big|\langle\psi_i|O_{\alpha'} |\psi_j\rangle\big|
				\\ \leq &m\sum_{1 \leq i<j \leq K}\sum_{\operatorname{wt}_e(O_{\alpha’}, c_{\text{Z}})<d_e(c_{\text{Z}})}\big|\langle\psi_i|O_{\alpha'} |\psi_j\rangle\big|.
			\end{aligned}
		\end{equation}
		
		Similarly, we have 
		\begin{equation}
			\begin{aligned}
				&\sum_{\alpha,\beta}\sum_{j=1}^K \big|\langle\psi_j|E^{\dagger}_{\alpha}E_{\beta}|\psi_j\rangle-\overline{\langle E^{\dagger}_{\alpha}E_{\beta} \rangle}\big|/2 
				\\\leq &m \sum_{\operatorname{wt}_e(O_{\alpha’}, c_{\text{Z}})<d_e(c_{\text{Z}})}\sum_{j=1}^K \big|\langle\psi_j|O_{\alpha'}|\psi_j\rangle-\overline{\langle O_{\alpha'} \rangle}\big|/2.
			\end{aligned}
		\end{equation}
		Denote \\$\mathcal{E}'= \{E^{\dagger}_{\alpha}E_{\beta}| E_{\alpha}, E_{\beta} \text{ are Kraus operators of } \mathcal{N}\}$, we have
		\begin{equation}
			C^{\ell_{1}}_{n, K,\mathcal{E}'} \leq m C^{\ell_{1}}_{n, K, \mathcal{E}}.
		\end{equation}
		According to Proposition~\ref{prop2}, the code is $\varepsilon$-correctable with $\varepsilon$ bounded by
		\begin{equation}
			\varepsilon \leq K\sqrt{2C^{\ell_{1}}_{n, K,\mathcal{E}'}} \leq K\sqrt{2m C^{\ell_{1}}_{n, K, \mathcal{E}}}.
		\end{equation}
	\end{proof}

	\section{Parameter Dimension and Overparameterization}\label{ap:effect quantum dimension}
	
	The quantum Fisher information matrix (QFIM) is an essential concept in quantum metrology~\cite{helstrom1969quantum, Simple_QFIM, Liu_2019}. In recent years, its applications in NISQ algorithms and quantum machine learning have also been noticed~\cite{Meyer2021fisherinformationin, QFIM_1}. Ref.~\cite{QFIM_1} uses the QFIM to assess the expressive power of a VQC with the fixed input state $|0\rangle^{\otimes n}$. In VarQEC, however, we use $K$ orthogonal input states to find an $((n, K))_2$ quantum code. In this section, we generalize the notion of QFIM to multiple input states to quantify the expressive power of a VQC for preparing an $((n, K))_2$ quantum code. Based on that, we discuss the parameter dimension and the overparameterization of a VQC encoder.
	
	Suppose the VQC encoder has parameters
	\begin{equation}
		\boldsymbol{\theta} = (\theta_1, \theta_2, \dots, \theta_N).
	\end{equation}
	For a fixed pure input state, one relates the QFIM $\mathcal{F}(\boldsymbol{\theta})$ to the distance in the space of pure quantum states by
	\begin{equation}
		\operatorname{Dist}(|\psi(\boldsymbol{\theta})\rangle,|\psi(\boldsymbol{\theta}+d \boldsymbol{\theta})\rangle)^{2}=\sum_{l, m} \mathcal{F}_{l, m}(\boldsymbol{\theta}) d \theta_{l} d \theta_{m},
	\end{equation}
	where $\operatorname{Dist}(|\psi(\boldsymbol{\theta})\rangle, |\psi(\boldsymbol{\theta}')\rangle)=1-|\langle \psi(\boldsymbol{\theta})|\psi(\boldsymbol{\theta}')\rangle|^{2}$. The QFIM is an $N$ by $N$ matrix
	\begin{equation}
		\mathcal{F}_{lm}(\boldsymbol{\theta})=4\operatorname{Re}\left[\left\langle\partial_{l} \psi| \partial_{m} \psi\right\rangle-\left\langle\partial_{l} \psi| \psi\right\rangle\left\langle\psi | \partial_{m} \psi\right\rangle\right].
	\end{equation}
	where $|\partial_{l} \psi\rangle$ denotes $\partial|\psi(\boldsymbol{\theta})\rangle/\partial\theta_l$. In this case, the parameter dimension $D_c$ for a VQC is defined as the number of independent parameters that the VQC can express in the space of output states. Numerical evidence shows that $D_c$ is usually equivalent to the rank of QFIM for hardware-efficient VQCs with periodic and non-correlated random parameters $\boldsymbol{\theta}$~\cite{QFIM_1}.

	In VarQEC, the inputs are $K$ orthogonal pure states. Denote the projector onto the output space as $P_c$. We relate $\mathcal{F}(\boldsymbol{\theta})$ to the distance in the space of $K$-dimensional projectors,
	\begin{equation}
		\operatorname{Dist}_{K}(P_c(\boldsymbol{\theta}),P_c(\boldsymbol{\theta}+d \boldsymbol{\theta}))^{2}=\sum_{l, m} \mathcal{F}_{l, m}(\boldsymbol{\theta}) d \theta_{l} d \theta_{m},
	\end{equation}
	where the distance
	\begin{equation}
		\begin{aligned}
			\operatorname{Dist}_{K}(P_c(\boldsymbol{\theta}), &P_c(\boldsymbol{\theta}'))=\\&\Bigg(\frac{\operatorname{Tr} \sqrt{\sqrt{P_c(\boldsymbol{\theta})} P_c(\boldsymbol{\theta}') \sqrt{P_c(\boldsymbol{\theta})}}}{K}\Bigg)^{2}
		\end{aligned}
	\end{equation}
	is defined as the fidelity between the normalized mixed states of projectors $P_c(\boldsymbol{\theta})$ and $P_c(\boldsymbol{\theta}')$.
	Suppose the projector $P_c(\boldsymbol{\theta})$ has eigen decomposition 
	\begin{equation}
		P_c(\boldsymbol{\theta})=\sum_{j=1}^{K}|\psi_j\rangle\langle\psi_j|
	\end{equation}
	and denote the basis of its orthogonal complement as $\{|\psi_j\rangle\}_{j=K+1, K+2,\dots,2^n}$, the QFIM under our framework is of the form
	\begin{equation}
		\begin{aligned}
			&\mathcal{F}_{lm}(\boldsymbol{\theta})=2\frac{\partial^2}{\partial\delta_l\partial\delta_m}\operatorname{Dist}_{K}(P_c(\boldsymbol{\theta}), P_c(\boldsymbol{\theta+\delta}))|_{\boldsymbol{\delta = 0}} \\
			=&\frac{2}{K^2}\sum_{\min{\{i,j\}}\leq K} \frac{ \operatorname{Re}\left(\left\langle\psi_{i}\left|\partial_{l} P_c\right| \psi_{j}\right\rangle\left\langle\psi_{j}\left|\partial_{m} P_c\right| \psi_{i}\right\rangle\right)}{\langle\psi_i|P_c|\psi_i\rangle + \langle\psi_j|P_c|\psi_j\rangle}.
		\end{aligned}
	\end{equation}
	We remark that the derivation of QFIM for projectors is the same as for density matrices~\cite{Simple_QFIM}. Therefore, a similar formula can be used to compute the QFIM of a VQC with mixed inputs/outputs.
	
	Through sampling random parameters $\boldsymbol{\theta}$ from the interval $[0,2\pi)^{N}$ and computing the QFIM, we can estimate the parameter dimension $D_c$ by $\operatorname{rank}(\mathcal{F}(\boldsymbol{\theta}))$. The VarQEC algorithm searches a $D_c$-dimensional submanifold of the complex Grassmannian $\mathbf{G r}(K, 2^n)$.

	\begin{figure}[tb]
		\centering
		\includegraphics[width=4cm]{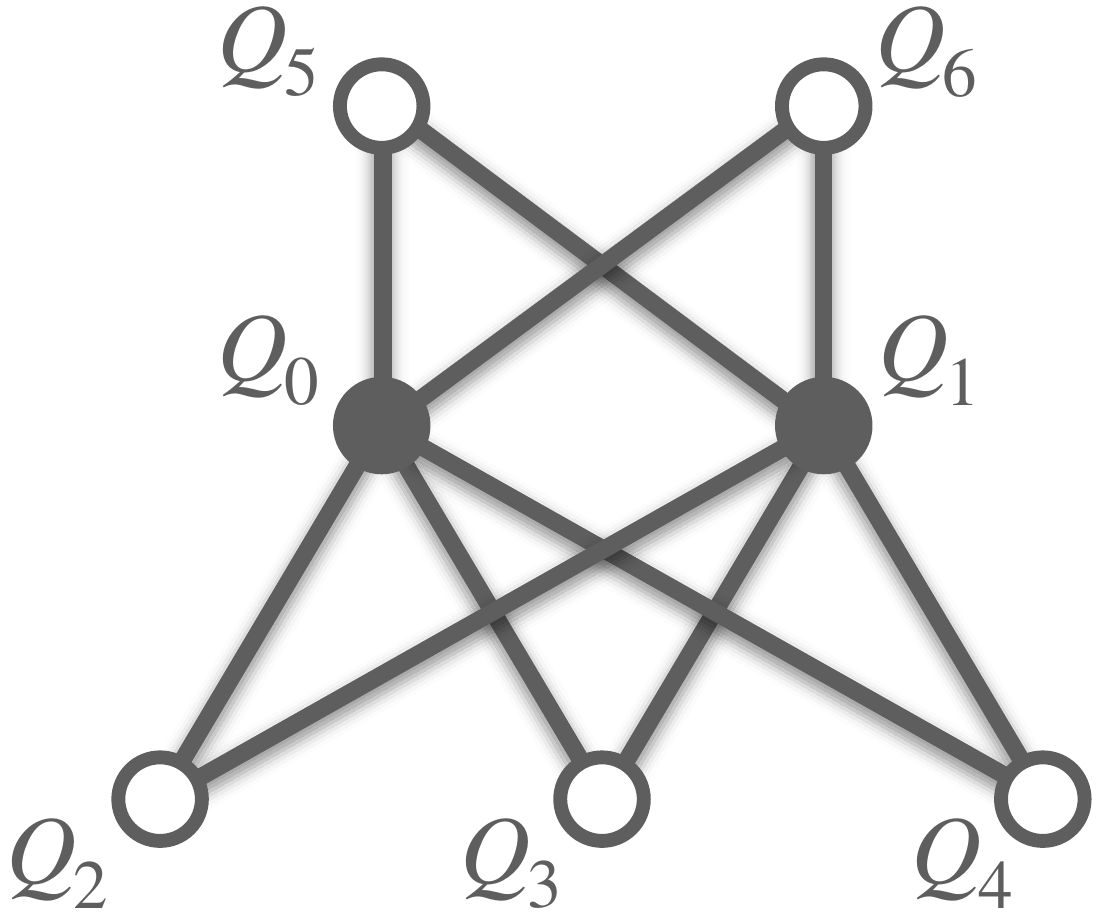}
		\caption{The bipartite connectivity graph for finding a quantum code with parameters $((7,3,3))_2$. Qubits $Q_0$ and $Q_1$ are selected to prepare the logical data.}
		\label{fig:733connectivity}
	\end{figure}
	
	\begin{figure}[t]
		\centering
		\includegraphics[width=7.5cm]{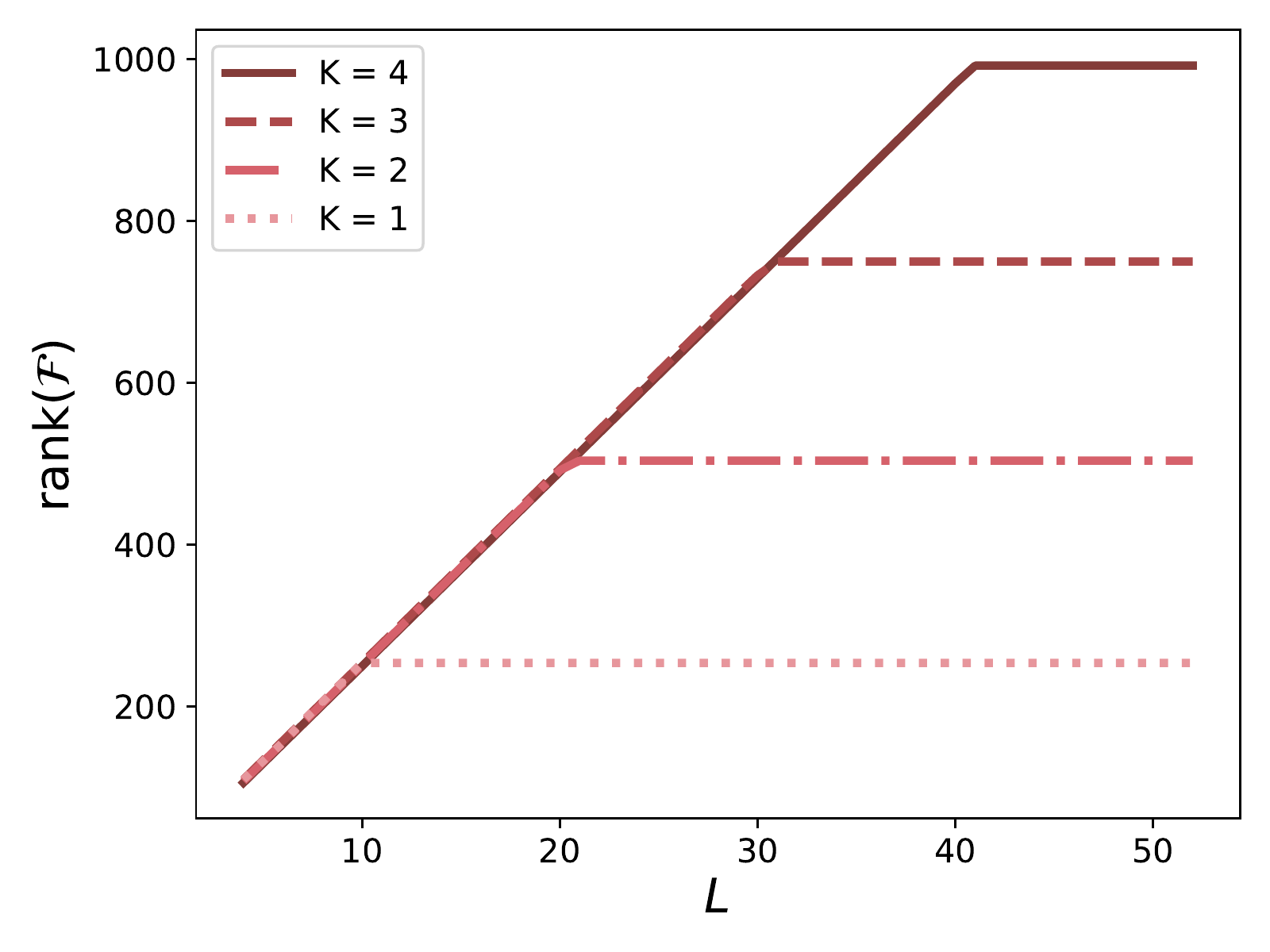}
		\caption{$\operatorname{Rank}(\mathcal{F}(\boldsymbol{\theta}))$ versus the number of VQC layers for the connectivity graph shown in Fig.~\ref{fig:733connectivity}. For code dimensions $K=1,2,3,4$, the maximum ranks are $D_c^{\mathrm{max}} = 254,504,750,992$, and the required numbers of VQC layers to achieve overparameterization are $L_{\mathrm{crit}} = 10,21,31,41$.}
		\label{fig:QFIM rank}
	\end{figure}
	
	Without loss of generality, we consider the connectivity graph shown in Fig.~\ref{fig:733connectivity}. For $K=1,2,3,4$, we randomly sample parameters $\boldsymbol{\theta}$ and plot $\operatorname{rank}(\mathcal{F}(\boldsymbol{\theta}))$ as a function of the number of VQC layers $L$ in Fig.~\ref{fig:QFIM rank}. Almost no parameterized gate is redundant when the circuit is underparameterized ($D_c/N \approx 1$). With the increase of $L$, $\operatorname{rank}(\mathcal{F}(\boldsymbol{\theta}))$ increases approximately linearly until achieving its maximum $D_c^{\mathrm{max}}$. The maximum parameter dimension for code length $n$ and code dimension $K$ is of the form
	\begin{equation}
		D_c^{\mathrm{max}} = 2K(2^n-K).
	\end{equation}
	This agrees with the fact that the dimension of the complex Grassmannian $\mathbf{G r}(K, 2^n)$ is $K(2^n-K)$~\cite{milnor2016characteristic}. When $D_c = D_c^{\mathrm{max}}$, the VQC can explore the whole $\mathbf{G r}(K, 2^n)$ manifold and prepare arbitrary $((n,K))_2$ quantum code. The required number of layers to saturate the maximum parameter dimension is approximately
	\begin{equation}
		L_{\text{crit}} = \Big\lceil \frac{2K(2^n-K) - 2n}{2n + |E(G)|} \Big\rceil,
	\end{equation}
	where $|E(G)|$ is the number of edges of the connectivity graph.

	To find a quantum code with parameters $((7,3,3))_2$, we sample $100$ different initial values of $\boldsymbol{\theta}$ and implement VarQEC with an overparameterized VQC ($L=31$). However, the cost function $C^{\ell_{1}}_{n,K,\mathcal{E}}(\boldsymbol{\theta})$ is always greater than 1. A $((7,3,3))_2$ code is improbable to exist.

	\section{A variational quantum encoder for additive codes}\label{ap:QFT-ansatz}
	The VQC with bipartite connectivity performs well in most cases. However, for some code parameters (e.g., $((10,4,4))_2$), it needs a large bunch of samples of the initial $\boldsymbol{\theta}$ to find an eligible code. Here we propose AC-VQC, another variational quantum circuit with all-to-all connectivity, to complement the bipartite ansatz. 
	
	The AC-VQC is especially resource-efficient in finding encoding circuits of additive codes. The structure of an AC-VQC is similar to the circuit of the quantum Fourier transform, as shown in Fig.~\ref{fig:QFT-VQC}. We start from two physical qubits ($Q_0,Q_1$) and apply a 2-qubit parameterized unitary operator $U_{01}$ to them. Then, we add another qubit ($Q_2$), apply 2-qubit parameterized unitary operators $U_{02}$/$U_{12}$ to the new one and each of the qubits that already exist ($Q_0$-$Q_2$, $Q_1$-$Q_2$). Repeat the steps until the system size equals $n$. In the end, we apply single-qubit rotations $R_z$ and $R_x$ to all qubits to explore the manifold of locally equivalent codes. The initial $k$ qubits prepare the logical data. The total circuit depth is of order $\mathcal{O}(\sum_{j=1}^{n-1} j) = \mathcal{O}(n^2)$.
	
	\begin{figure}[h]
		\centering
		\includegraphics[width=\columnwidth]{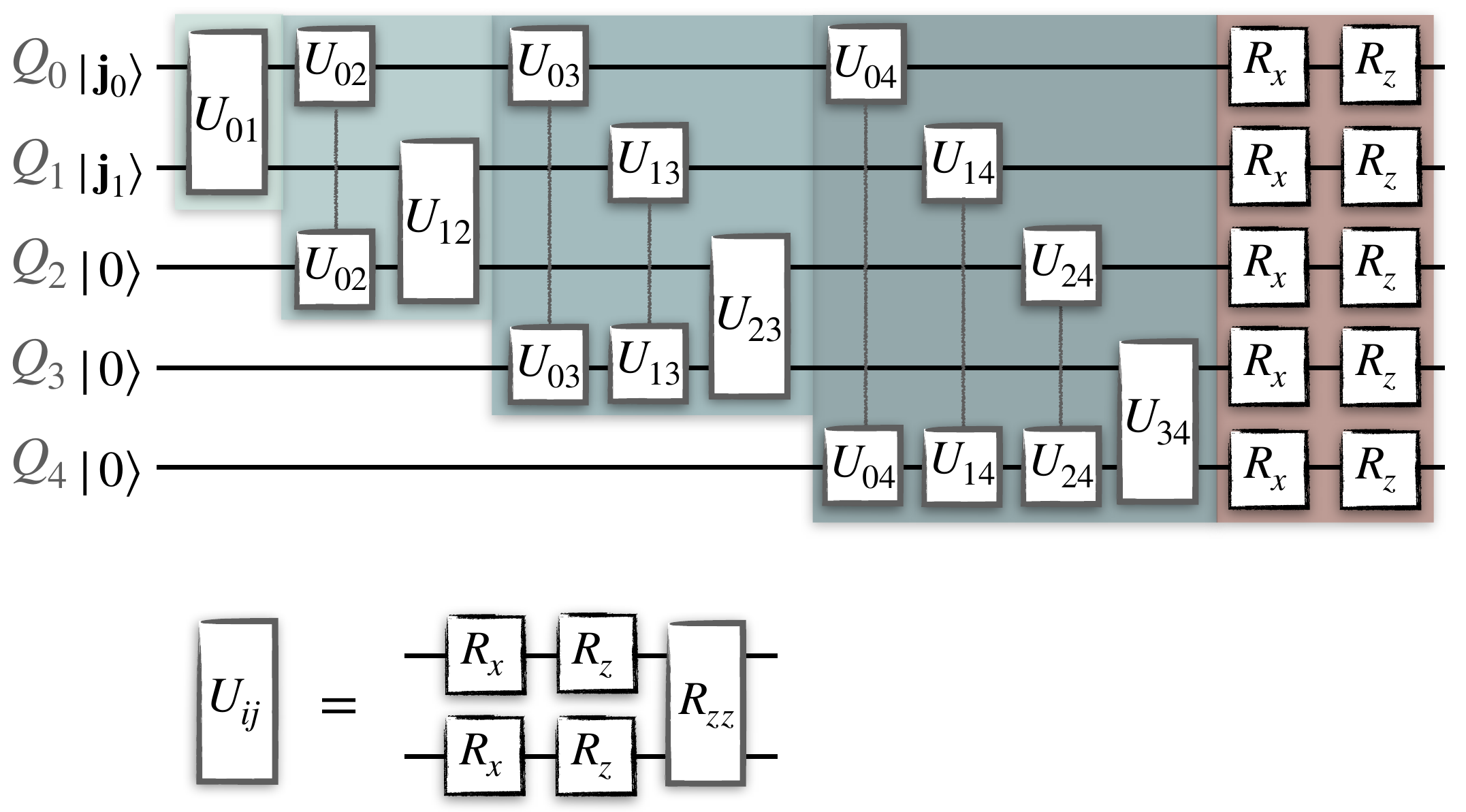}
		\caption{Schematic of AC-variational quantum circuit with $n=5$, $k=2$. Physical qubits are added layer by layer.}
		\label{fig:QFT-VQC}
	\end{figure}

	\section{Non-CWS Quantum Codes with Parameters $((6,2,3))_2$, $((7,2,3))_2$}\label{ap:623}
	The $((5,2,3))_2$ code is known to be unique. However, the classification and construction of $((6,2,3))_2$ and $((7,2,3))_2$ quantum codes are unclear. Some are said to be ``non-CWS'' since they are not locally equivalent to CWS codes. Here we present a general construction of non-CWS quantum codes based on stabilizer ones.
	
	\begin{theorem}\label{non_CWS}
		If there exists a quantum code $\mathcal{C}$ with parameters $((n, 2^k, d))_q$, then there exist degenerate $((n', 2^k, d))_q$ codes $\{\mathcal{C}'\}$ with $n'>n$ that are not locally equivalent to a CWS code.
	\end{theorem}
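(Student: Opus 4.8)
The plan is to obtain $\mathcal C'$ by tensoring $\mathcal C$ with a small fixed state that is ``too non-stabilizer'' to be absorbed by any local unitary, so the enlarged code cannot be CWS even after local equivalence. Concretely, fix a two-qudit pure state $|\chi\rangle$ that is \emph{not} local-unitarily equivalent to any stabilizer state; such states exist for every prime power $q$ because stabilizer states have a flat Schmidt spectrum of prescribed rank across any bipartition (for $q=2$ the nonzero ranks give spectra $(1,0)$ or $(1/\sqrt2,1/\sqrt2)$), whereas a state with Schmidt coefficients $(\cos\alpha,\sin\alpha)$, $\alpha\notin\{0,\pi/4\}$ (e.g.\ $|\chi\rangle=\cos\tfrac\pi8|00\rangle+\sin\tfrac\pi8|11\rangle$ for $q=2$), does not. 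Set $\mathcal C'=\mathcal C\otimes|\chi\rangle$ on $n'=n+2$ qudits; varying $|\chi\rangle$ (and, if desired, appending extra $|0\rangle$ qudits or further copies of $|\chi\rangle$) yields the whole family $\{\mathcal C'\}$.

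First I would check the parameters $((n+2,2^k,d))_q$. For any Pauli $F=F_{\mathcal C}\otimes F_\chi$ with $\operatorname{wt}(F)<d$ we have $\operatorname{wt}(F_{\mathcal C})<d$, so $\mathcal C$ detects $F_{\mathcal C}$ and $\langle\psi_i\chi|F|\psi_j\chi\rangle=\langle\psi_i|F_{\mathcal C}|\psi_j\rangle\,\langle\chi|F_\chi|\chi\rangle$ is a scalar multiple of $\delta_{ij}$; conversely a weight-$d$ Pauli that $\mathcal C$ fails to detect, tensored with identity, is not detected by $\mathcal C'$, so the distance is exactly $d$. For degeneracy (taking $d\ge 3$, the only nonvacuous case for the full-rank-of-$\lambda_{\alpha\beta}$ criterion), observe that $Z$ on qudit $n+1$ and $Z$ on qudit $n+2$ are both weight-$1\le\lfloor(d-1)/2\rfloor$ correctable errors, while their product $Z_{n+1}Z_{n+2}$ fixes $|\chi\rangle$ and acts trivially on $\mathcal C$; hence the corresponding columns of $\lambda_{\alpha\beta}$ coincide and $\lambda_{\alpha\beta}$ is rank-deficient. (For $d=2$ one instead appends a $|0\rangle$ qudit and uses the detection form of degeneracy.)

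The substantive step is that $\mathcal C'$ is not locally equivalent to a CWS code. Suppose $V=\bigotimes_j V_j$ is a local unitary (composed, if one wishes, with a qudit permutation) such that $V\mathcal C'$ is a CWS code $\operatorname{span}\{w_i|S\rangle\}$ with $|S\rangle$ a stabilizer state and the $w_i$ Paulis. Then $V\mathcal C'$ contains the stabilizer state $w_1|S\rangle$, so $\mathcal C'$ contains $|\Psi\rangle:=V^\dagger w_1|S\rangle$. But every unit vector of $\mathcal C'=\mathcal C\otimes|\chi\rangle$ has the form $|v\rangle\otimes|\chi\rangle$, so the reduced state of $|\Psi\rangle$ on qudits $\{n+1,n+2\}$ equals $|\chi\rangle\langle\chi|$; on the other hand it equals $(V_{n+1}^\dagger\otimes V_{n+2}^\dagger)\,\rho\,(V_{n+1}\otimes V_{n+2})$, where $\rho$ is the reduced state of the stabilizer state $w_1|S\rangle$ on those two qudits. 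A reduced density operator of a stabilizer state is proportional to a sum over a subgroup of Pauli operators, and if it is pure it is a two-qudit stabilizer state; since $|\chi\rangle\langle\chi|$ is pure, $\rho$ is pure, so $|\chi\rangle$ is local-unitarily equivalent to a stabilizer state — contradicting the choice of $|\chi\rangle$. A qudit permutation only relabels which two qudits carry the $\chi$-register and replaces $|\chi\rangle$ by its image under a SWAP, which is still non-stabilizer, so that case is identical.

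I expect the main obstacle to be precisely this last step: correctly fixing the definition of ``locally equivalent to a CWS code'' and checking it is closed under the group one quotients by (single-qudit unitaries and permutations), together with the standard but easily-misstated fact that a pure reduced state of a stabilizer state is itself a stabilizer state; one also wants a clean existence statement, uniform in $q$, for a two-qudit state robustly inequivalent to every stabilizer state, which the flat-Schmidt-spectrum obstruction supplies. By comparison, verifying the code parameters and the degeneracy of $\mathcal C'$ is routine tensor-product bookkeeping.
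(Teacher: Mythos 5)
Your construction is essentially the paper's own for the case $n'\ge n+2$ -- the paper's proof simply says ``for $n'>n+1$, take the tensor product with a non-stabilizer state'' -- but you supply the rigour the paper omits: the observation that every CWS code contains a stabilizer state among its codewords, that a pure reduced state of a stabilizer state is again a stabilizer state, and that the flat Schmidt spectrum then rules out $|\chi\rangle=\cos\tfrac{\pi}{8}|00\rangle+\sin\tfrac{\pi}{8}|11\rangle$ being locally equivalent to any two-qudit stabilizer state. Your degeneracy argument via the coinciding columns of $\lambda_{\alpha\beta}$ for $Z_{n+1}$ and $Z_{n+2}$ (or $Z_{n+2}^{q-1}$ for general $q$) is also sound for $d\ge 3$ and is more explicit than anything in the paper. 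The one point where you fall short of the paper's proof is the case $n'=n+1$: your ancilla register needs two qudits to host a genuinely non-stabilizer pure state, whereas the paper treats $n'=n+1$ separately by tensoring $\mathcal{C}$ with a single fixed qudit and then applying a non-Clifford \emph{entangling} unitary between that qudit and one of the original ones (your own worked $((6,2,3))_2$ example in this appendix is exactly of that form, with the controlled-$U$, $U=\tfrac15\bigl(\begin{smallmatrix}3&4\\-4&3\end{smallmatrix}\bigr)$). If the theorem is read existentially in $n'$ your proof is complete; if, as the paper's two-case structure suggests, it is meant to hold for every $n'>n$, you still need an argument for $n'=n+1$ -- and note that the paper's own one-sentence treatment of that case is itself incomplete, since ``non-local stabilizers'' does not by itself exclude local equivalence to a CWS code, so an honest proof there would need an invariant (e.g.\ the reduced states across the new cut, in the spirit of your Schmidt argument) certifying non-CWS-ness after the entangling non-Clifford gate.
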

	\begin{proof}
		For $n'>n+1$, we can directly obtain non-CWS codes by taking the tensor product with a non-stabilizer state. 
		
		For $n' = n+1$, we take the tensor product of the code $\mathcal{C}$ with a fixed (stabilizer) state, then apply a non-Clifford entangling unitary operation to one of the original qudits and the additional qudit. This will conjugate the Pauli-stabilizers to non-local stabilizers. The resulting code is a non-CWS degenerate code with parameters $((n', 2^k, d))_q$.
	\end{proof}
	
	For example, we start from the $((5,2,3))_2$ code ($Q_0$,$Q_1$,$\dots$, $Q_4$) and initialize an addition qubit ($Q_5$) in state $|0\rangle$, then apply the transformation $U$ on $Q_5$, controlled by $Q_4$ with 
	\begin{equation} 
		U=\frac{1}{5}\left(\begin{array}{cc}
			3 & 4 \\
			-4 & 3
		\end{array}\right).
	\end{equation}
	The resulting code is a non-CWS degenerate $((6,2,3))_2$ code with basis states
	\begin{equation}
		\begin{aligned}
			|\psi_1\rangle = &\frac{\sqrt{2}}{20}(5|000000\rangle -5i|001100\rangle -3i|010010\rangle\\
			& +4i|010011\rangle + 3|011110\rangle - 4|011111\rangle\\
			& -3|100110\rangle + 4|100111\rangle -3i|101010\rangle\\
			& +4i|101011\rangle - 5i|110100\rangle - 5|111000\rangle,
		\end{aligned}
	\end{equation}
	\begin{equation}
		\begin{aligned}
			|\psi_2\rangle = &\frac{\sqrt{2}}{20}(3|00110\rangle  -4|00111\rangle -3i|01010\rangle\\
			& + 4i|01011\rangle -5i|10100\rangle + 5|11000\rangle\\
			& + 5|100000\rangle + 5i|101100\rangle + 3i|110010\rangle\\
			& -4i|110011\rangle + 3|111110\rangle - 4|111111\rangle.\mkern14mu
		\end{aligned}
	\end{equation}
	Its weight enumerators are
	\begin{equation}
		\begin{aligned}
			A(z) = &1 + \frac{9}{25}z + \frac{16}{25}z^2 + \frac{311}{25}z^4 + \frac{391}{25}z^5 +\frac{48}{25}z^6,^{}
		\end{aligned}
	\end{equation}
	\begin{equation}
		\begin{aligned}
			B(z) = &1 + \frac{9}{25}z + \frac{16}{25}z^2 + \frac{654}{25}z^3 + \frac{193}{5}z^4 \\&+ \frac{937}{25}z^5 + \frac{594}{25}z^6.
		\end{aligned}
	\end{equation}
	Likewise, non-CWS $((7,2,3))_2$ codes can be constructed based on $((6,2,3))_2$ stabilizer codes.
	
	Another class of non-CWS $((6,2,3))_2$ codes are unitarily related to the additive $((6,2,3))_2$ code stabilized by 
	\begin{equation}
		\begin{array}{llllllllllllll}
			g_{1} & = &Y&I&Z&X&X&Y\\
			g_{2} & = &Z&X&I&X&I&Z\\
			g_{3} & = &I&Z&X&X&X&X\\
			g_{4} & = &I&I&I&I&Z&Z\\
			g_{5} & = &Z&Z&Z&Z&I&I.
		\end{array}
	\end{equation}
	Here ``unitarily related’’ means they can be transformed to this code when allowing permutations of qubits and a unitary transformation of the form $U = \otimes_{j=1}^{5}U_j$ where $U_1$,$U_2$,$U_3$,$U_4$ are single-qubit unitaries and $U_5$ is a 2-qubit unitary. In our numerical experiments, all the $((6,2,3))_2$ codes discovered by VarQEC are unitarily related to this $((6,2,3))_2$ stabilizer code or the $((5,2,3))_2$ perfect code, some are related to both.
	
	Consider a quantum code $\mathcal{C}$ that is capable of correcting an error set $\mathcal{E}$. If linearly independent errors in $\mathcal{E}$ map $\mathcal{C}$ to linearly independent subspaces, we say $\mathcal{C}$ is \textit{non-degenerate} with respect to $\mathcal{E}$. If linearly independent errors in $\mathcal{E}$ map $\mathcal{C}$ to mutually orthogonal subspaces, we say $\mathcal{C}$ is \textit{pure} with respect to $\mathcal{E}$~\cite{calderbank1998quantum}. A pure code must be non-degenerate. For additive codes and CWS codes, ``non-degenerate’’ and ``pure’’ are equivalent~\cite{calderbank1998quantum}. However, we note that some of our $((7,2,3))_2$ codes are non-degenerate but impure. Here we give a ``trivial’’ construction of non-degenerate and impure $((7,2,3))_2$ codes. Still, we start with the $((5,2,3))_2$ code on qubits $Q_0,Q_1,\ldots,Q_4$ and add two additional qubits $Q_5$, $Q_6$ in a fixed state $|00\rangle$. Then we can apply either a three-qubit unitary on, e.g., $Q_4$, $Q_5$, $Q_6$, or two two-qubit unitaries on, e.g., $Q_3$, $Q_5$ and $Q_4$, $Q_6$. When these unitaries are non-Clifford, the resulting code is a non-CWS degenerate and impure code. We give a detailed example here. The following basis states span a non-degenerate but impure $((7,2,3))_2$ code:
        \begin{widetext}
	  \begin{equation}
            \begin{aligned}
			|\psi_1\rangle =&\frac{1}{20}(-4\omega^3|0000011\rangle + 3\omega^2|0000100\rangle
			-3\omega|0001110\rangle-4\omega|0001111\rangle+3\omega^3|0010110\rangle\\&
			+4\omega^3|0010111\rangle-4\omega|0011011\rangle+3|0011100\rangle
			+3\omega^3|0100110\rangle+4\omega^3|0100111\rangle-4\omega|0101011\rangle\\&
			+3\omega|0101100\rangle-4\omega^3|0110011\rangle+3\omega^2|0110100\rangle
			-3\omega|0111110\rangle-4\omega|0111111\rangle-4\omega^3|1000011\rangle\\&
			+3\omega^2|1000100\rangle+3\omega|1001110\rangle+4\omega|1001111\rangle
			+3\omega^3|1010110\rangle-4\omega^3|1010111\rangle-4\omega|1011011\rangle\\&
			-3|1011100\rangle-3\omega^3|1100110\rangle-4\omega^3|1100111\rangle\rangle
			-4\omega|1101011\rangle+3|1101100\rangle+4\omega^3|1110011\rangle\\&
			-3\omega^2|1110100\rangle-3\omega|1111110\rangle-4\omega|1111111\rangle),
	    \end{aligned}
          \end{equation}
	\begin{equation}
		\begin{aligned}
			|\psi_2\rangle =& \frac{1}{20}(4\omega|0000011\rangle - 3|0000100\rangle
			+3\omega^3|0001110\rangle+4\omega^3|0001111\rangle+3\omega|0010110\rangle
			+4\omega|0010111\rangle\\&-4\omega^3|0011011\rangle+3\omega^2|0011100\rangle
			-3\omega|0100110\rangle-4\omega|0100111\rangle+4\omega^3|0101011\rangle
			-3\omega^2|0101100\rangle\\&-4\omega|0110011\rangle+3|0110100\rangle
			-3\omega^3|0111110\rangle-4\omega^3|0111111\rangle-4\omega|1000011\rangle
			+3|1000100\rangle\\&+3\omega^3|1001110\rangle+4\omega^3|1001111\rangle
			-3\omega|1010110\rangle-4\omega|1010111\rangle-4\omega^3|1011011\rangle
			+3\omega^2|1011100\rangle\\&-3\omega|1100110\rangle-4\omega|1100111\rangle
			-4\omega^3|1101011\rangle+3\omega^2|1101100\rangle-4\omega|1110011\rangle
			+3|1110100\rangle\\&+3\omega^3|1111110\rangle-4\omega^3|1111111\rangle),
		\end{aligned}
	\end{equation}
	\end{widetext}
        where $\omega = \exp(i\pi/4)$. Its weight enumerators are
	\begin{equation}
		\begin{aligned}
			A(z) = &1 + \frac{106}{125}z + z^2 +  \frac{144}{125}z^3 +  \frac{1299}{125}z^4\\& + \frac{3318}{125}z^5 + \frac{2451}{125}z^6 + \frac{432}{125}z^7,\mkern50mu
		\end{aligned}
	\end{equation}
	\begin{equation}
		\begin{aligned}
			B(z) = &1 + \frac{106}{125}z + z^2 + + \frac{606}{25}z^3 + \frac{7071}{125}z^4\\& + \frac{9318}{125}z^5 + \frac{8679}{125}z^6 + \frac{3546}{125}z^7.
		\end{aligned}
	\end{equation}

	\section{Quantum Weight Enumerators}\label{ap:weight enumerators}
\abovedisplayskip0.2\abovedisplayskip
\belowdisplayskip0.2\belowdisplayskip
	This section lists the quantum weight enumerators of some QECCs discovered/rediscovered by VarQEC. 
	\subsection{Symmetric codes}
	\begin{flalign}&
		A^{\{5,6,2\}}(z) = 1 + 1.667z^4 + 2.667z^5,
		&\end{flalign}
	\begin{flalign}&
		B^{\{5,6,2\}}(z) = 1 + 20z^2 + 50z^3 + 75z^4 + 46z^5;
		&\end{flalign}
	\begin{flalign}&
		A^{\{5,2,3\}}(z) = 1 + 15z^4,
		&\end{flalign}
	\begin{flalign}&
		B^{\{5,2,3\}}(z) = 1 + 30z^3 + 15z^4 + 18z^5;
		&\end{flalign}
	
	\begin{flalign}&
		\begin{aligned}
			A^{\{6,2,3\}}(z) = &1 + 0.267z + 0.732z^2 + 12.070z^4 \\&+ 15.732z^5 + 2.197z^6,
		\end{aligned}
		&\end{flalign}
	\begin{flalign}&
		\begin{aligned}
			B^{\{6,2,3\}}(z) = &1 + 0.267z + 0.732z^2 + 25.605z^3 \\&+ 37.676z^4 + 38.126z^5 + 24.591z^6;
		\end{aligned}
		&\end{flalign}
	
	\begin{flalign}&
		A^{\{7,8,2\}}(z) = 1 + 5z^4 + 2z^5 + 2z^6 + 6z^7,
		&\end{flalign}
	\begin{flalign}&
		\begin{aligned}
			B^{\{7,8,2\}}(z) = &1 + 17z^2 + 40z^3 + 195z^4 + 328z^5\\&+ 299z^6 + 144z^7;
		\end{aligned}
		&\end{flalign}
	
	Non-degenerate $((7,2,3))_2$:
	\begin{flalign}&
		\begin{aligned}
			A^{\{7,2,3\}}(z) = &1 + 1.437z^2 + 18.125z^4 + 43.437z^6,
		\end{aligned}
		&\end{flalign}
	\begin{flalign}&
		\begin{aligned}
			B^{\{7,2,3\}}(z) = &1 + 1.437z^2 + 25.311z^3 + 18.125z^4 \\& + 117.377z^5 + 43.437z^6 + 49.311z^7;
		\end{aligned}
		&\end{flalign}
	
	Degenerate $((7,2,3))_2$:
	\begin{flalign}&
		\begin{aligned}
			A^{\{7,2,3\}}(z) = 1 + 5z^2 + 11z^4 + 47z^6,
		\end{aligned}
		&\end{flalign}
	\begin{flalign}&
		\begin{aligned}
			B^{\{7,2,3\}}(z) = &1 + 5z^2 + 36z^3 + 11z^4 + 96z^5\\&+47z^6+ 60z^7;
		\end{aligned}
		&\end{flalign}

	\begin{flalign}&
		\begin{aligned}
			A^{\{8,2,3\}}(z) = &1 + 0.015z^3 + 13.924z^4 + 24.091z^5\\& + 40.030z^6 + 39.893z^7 + 9.046z^8,
		\end{aligned}
		&\end{flalign}
	\begin{flalign}&
		\begin{aligned}
			B^{\{8,2,3\}}(z) = &1 + 11.970z^3 + 38.152z^4\\& + 119.817z^5 + 159.939z^6\\& + 124.213z^7 + 56.909z^8;
		\end{aligned}
		&\end{flalign}

	\begin{flalign}&
		A^{\{8,8,3\}}(z) = 1 + 28z^6 + 3z^8,
		&\end{flalign}
	\begin{flalign}&
		\begin{aligned}
			B^{\{8,8,3\}}(z) = &1 + 56z^3 + 210z^4 + 336z^5 + 728z^6\\& + 504z^7 + 213z^8;
		\end{aligned}
		&\end{flalign}
	
	\begin{flalign}&
		\begin{aligned}
			A^{\{9,8,3\}}(z) = &1 + 0.042z^4 + 5.875z^5 + 16.083z^6\\& + 24.083z^7 + 14.875z^8 + 2.042z^9,
		\end{aligned}
		&\end{flalign}
	\begin{flalign}&
		\begin{aligned}
			B^{\{9,8,3\}}(z) = &1 + 40z^3 + 162.332z^4 + 479.004z^5\\&  + 952.664z^6+ 1224.664z^7\\&  + 932.004z^8 + 304.332z^9;
		\end{aligned}
		&\end{flalign}
	
	\begin{flalign}&
		\begin{aligned}
			A^{\{10,16,3\}}(z) = &1 + 0.127z^5 + 8.525z^6 + 20.443z^7\\& + 21.253z^8 + 11.430z^9 + 1.221z^{10},
		\end{aligned}
		&\end{flalign}
	\begin{flalign}&
		\begin{aligned}
			B^{\{10,16,3\}}(z) = &1 + 55.810z^3 + 275.961z^4\\& + 954.241z^5  + 2366.014z^6\\&+ 4120.948z^7  + 4622.227z^8\\& + 3061.001z^9+926.797z^{10};
		\end{aligned}
		&\end{flalign}

	\begin{flalign}&
		A^{\{10,4,4\}}(z) = 1 + 90z^6 + 135z^8 + 30z^{10},
		&\end{flalign}
	\begin{flalign}&
		\begin{aligned}
			B^{\{10,4,4\}}(z) = &1 + 90z^4 + 216z^5 + 720z^6 + 720z^7\\& + 1485z^8 + 600z^9 + 264z^{10}.
		\end{aligned}
		&\end{flalign}

	\begin{flalign}&
		\begin{aligned}
			A^{\{11,32,3\}}(z) = &1 + 0.160z^6 + 12.522z^7 + 23.316z^8\\& + 18.319z^9 + 7.524z^{10} + 1.158z^{11}
		\end{aligned}
		&\end{flalign}
	\begin{flalign}&
		\begin{aligned}
			B^{\{11,32,3\}}(z) = &1 + 75.003z^3 + 443.260z^4\\& + 1729.654z^5  + 5219.456z^6\\&+ 11343.613z^7  + 16918.654z^8\\& + 16859.456z^9+10185.630z^{10}\\& + 2760.274z^{11}
		\end{aligned}
		&\end{flalign}

	\begin{flalign}&
		\begin{aligned}
			A^{\{12,64,3\}}(z) = &1 + 2z^7 + 15z^8 + 24z^9 + 16z^{10}\\& + 6z^{11},
		\end{aligned}
		&\end{flalign}
	\begin{flalign}&
		\begin{aligned}
			B^{\{12,64,3\}}(z) = &1 + 104z^3 + 649z^4 + 2976z^5\\& + 10472z^6+ 27184z^7  + 50691z^8\\& + 67616z^9+60952z^{10} + 33192z^{11}\\& + 8307z^{12}.
		\end{aligned}
		&\end{flalign}

	\begin{flalign}&
		\begin{aligned}
			A^{\{13,128,3\}}(z) = &1 + 5z^8 + 16z^9 + 24z^{10} + 16z^{11}\\& + 2z^{12},
		\end{aligned}
		&\end{flalign}
	\begin{flalign}&
		\begin{aligned}
			B^{\{13,128,3\}}(z) = &1 + 138z^3 + 929z^4 + 4814z^5\\& + 19592z^6+ 58628z^7  + 131987z^8\\& + 219836z^9+263864z^{10}\\& + 215954z^{11} + 107925z^{12}\\&+24918z^{13} .
		\end{aligned}
		&\end{flalign}

	\begin{flalign}&
		\begin{aligned}
			A^{\{14,256,3\}}(z) = &1 + 2z^8 + 4z^9 + 18z^{10} + 28z^{11}\\& + 11z^{12},
		\end{aligned}
		&\end{flalign}
	\begin{flalign}&
		\begin{aligned}
			B^{\{14,256,3\}}(z) = &1 + 180z^3 + 1295z^4 + 7436z^5\\& + 34418z^6+ 117320z^7\\&  + 307391z^8 + 616280z^9\\&+923372z^{10} + 1007300z^{11}\\& + 755921z^{12}+348636z^{13}\\& + 74754z^{14} .
		\end{aligned}
		&\end{flalign}

	\subsection{Asymmetric codes}
	\begin{flalign}&
		\begin{aligned}
			A^{\{6,2,d_e(\frac{1}{2})=2\}}(z) = 1 + 6z^2 + 9z^4 + 16z^6,
		\end{aligned}
		&\end{flalign}
	\begin{flalign}&
		\begin{aligned}
			B^{\{6,2,d_e(\frac{1}{2})=2\}}(z) = &1 + 15z^2 + 8z^3 + 39z^4\\& + 24z^5 + 41z^6;
		\end{aligned}
		&\end{flalign}
	
	\begin{flalign}&
		\begin{aligned}
			A^{\{7,3,d_e(\frac{1}{2})=2\}}(z) = &1 + 1.111z^2 + 2.667z^3\\& + 4.778z^4 + 13.333z^5\\& + 15.444z^6 + 53.333z^7,
		\end{aligned}
		&\end{flalign}
	\begin{flalign}&
		\begin{aligned}
			B^{\{7,3,d_e(\frac{1}{2})=2\}}(z) = &1 + 3.667z^2 + 24z^3 + 61.667z^4 \\&+ 120z^5 + 125.667z^6 + 48z^7;
		\end{aligned}
		&\end{flalign}
	
	\begin{flalign}&
		\begin{aligned}
			A^{\{5,2,d_e(2)=3\}}(z) = 1 + 4z^2 + 3z^4 + 8z^5,
		\end{aligned}
		&\end{flalign}
	\begin{flalign}&
		\begin{aligned}
			B^{\{5,2,d_e(2)=3\}}(z) = &1 + 12z^2 + 10z^3 + 19z^4\\& + 22z^5;
		\end{aligned}
		&\end{flalign}
	
	\begin{flalign}&
		\begin{aligned}
			A^{\{6,4,d_e(2)=3\}}(z) = 1 + z^3 + 4z^4 + 7z^5 + 3z^6,
		\end{aligned}
		&\end{flalign}
	\begin{flalign}&
		\begin{aligned}
			B^{\{6,4,d_e(2)=3\}}(z) = &1 + 7z^2 +36z^3+75z^4+92z^5\\& + 45z^6;
		\end{aligned}
		&\end{flalign}
	
	\begin{flalign}&
		\begin{aligned}
			A^{\{7,8,d_e(2)=3\}}(z) = 1 + z^4 +6z^5+ 6z^6 + 2z^7,
		\end{aligned}
		&\end{flalign}
	\begin{flalign}&
		\begin{aligned}
			B^{\{7,8,d_e(2)=3\}}(z) = &1 + 9z^2 + 64z^3 + 179z^4\\& + 312z^5+ 323z^6 + 136z^7;
		\end{aligned}
		&\end{flalign}

	\begin{flalign}&
		\begin{aligned}
			A^{\{6,2,d_e(2)=4\}}(z) = 1 + z^2 + 11z^4 + 16z^5 + 3z^6,
		\end{aligned}
		&\end{flalign}
	\begin{flalign}&
		\begin{aligned}
			B^{\{6,2,d_e(2)=4\}}(z) = &1 + z^2 + 24z^3 + 35z^4 + 40z^5\\& + 27z^6;
		\end{aligned}
		&\end{flalign}

	\begin{flalign}&
		\begin{aligned}
			A^{\{8,3,d_e(2)=4\}}(z) = &1 + 0.224z^2 + 1.770z^3\\& + 4.919z^4 + 17.794z^5\\&  + 28.155z^6 + 23.103z^7\\& + 8.368z^8,
		\end{aligned}
		&\end{flalign}
	\begin{flalign}&
		\begin{aligned}
			B^{\{8,3,d_e(2)=4\}}(z) = &1 + 2.021z^2 + 18.612z^3\\& + 68.016z^4+ 154.775z^5\\& + 237.904z^6+ 210.612z^7\\& + 75.058z^8.
		\end{aligned}
		&\end{flalign}

	\subsection{Channel-adaptive codes}
	\subsubsection{Nearest-neighbor collective amplitude damping}
	
	This part lists the quantum weight enumerators of the channel-adaptive codes for the one-dimensional nearest-neighbor collective amplitude damping errors discussed in Sec.~\ref{Sec:collective-ad}.
	
	\begin{flalign}&
		\begin{aligned}
			A^{\{4,3\}}(z) = &1 + 0.111z + 1.222z^2 + 0.778z^3\\& + 2.222z^4,
		\end{aligned}
		&\end{flalign}
	\begin{flalign}&
		\begin{aligned}
			B^{\{4,3\}}(z) = &1 + 1.667z + 11.667z^2 + 17z^3 \\&+ 16.667z^4;
		\end{aligned}
		&\end{flalign}
	
	\begin{flalign}&
		\begin{aligned}
			A^{\{5,2\}}(z) = 1 + z + 2z^2 + 2z^3 + 5z^4 + 5z^5,
		\end{aligned}
		&\end{flalign}
	\begin{flalign}&
		\begin{aligned}
			B^{\{5,2\}}(z) = 1 + z + 10z^2 + 18z^3 + 21z^4 + 13z^5;
		\end{aligned}
		&\end{flalign}

	\begin{flalign}&
		\begin{aligned}
			A^{\{6,5\}}(z) = &1 + 1.307z^2 + 0.032z^3 + 3.890z^4 \\&+ 0.097z^5  + 6.474z^6,
		\end{aligned}
		&\end{flalign}
	\begin{flalign}&
		\begin{aligned}
			B^{\{6,5\}}(z) = &1 + 18.533z^2 + 32.162z^3 + 103.449z^4 \\&+ 96.485z^5 + 68.371z^6;
		\end{aligned}
		&\end{flalign}

	\begin{flalign}&
		\begin{aligned}
			A^{\{7,8\}}(z) = 1 + z^4 + 6z^5 + 6z^6 + 2z^7,
		\end{aligned}
		&\end{flalign}
	\begin{flalign}&
		\begin{aligned}
			B^{\{7,8\}}(z) = &1 + 9z^2 + 64z^3 + 179z^4 + 312z^5 \\& + 323z^6 + 136z^7;
		\end{aligned}
		&\end{flalign}
	
	\begin{flalign}&
		\begin{aligned}
			A^{\{8,9\}}(z) = &1 + 0.038z^2 + 0.124z^3 + 0.827z^4 \\& + 5.282z^5 + 10.035z^6 + 8.816z^7\\& + 2.322z^8,
		\end{aligned}
		&\end{flalign}
	\begin{flalign}&
		\begin{aligned}
			B^{\{8,9\}}(z) = &1 + 6.119z^2 + 57.119z^3 + 200.778z^4\\& + 475.985z^5 + 713.867z^6 + 618.896z^7 \\&+ 230.237z^8;
		\end{aligned}
		&\end{flalign}

	\begin{flalign}&
		\begin{aligned}
			A^{\{9,16\}}(z) = &1 + z^5 + 8z^6 + 14z^7 + 7z^8 + z^9,
		\end{aligned}
		&\end{flalign}
	\begin{flalign}&
		\begin{aligned}
			B^{\{9,16\}}(z) = &1 + 4z^2 + 80z^3 + 326z^4 + 936z^5 \\&+ 1924z^6 + 2464z^7 + 1841z^8\\& + 616z^9.
		\end{aligned}
		&\end{flalign}

	\subsubsection{Nearest-neighbor collective phase-flips}

	This part lists the quantum weight enumerators of the channel-adaptive codes for the combined noise channel $\mathcal{N}$ (Eq.~\eqref{overall_N}) with hardware connectivity graphs shown in Fig.~\ref{fig:code_str}. 

	\begin{flalign}
		&A^{\{a\}}(z) = 1 + z^2 + 11z^4 + 16z^5+ 3z^6, &                                  
	\end{flalign}

	\begin{flalign}&
		B^{\{a\}}(z) = 1 + z^2 + 24z^3 + 35z^4 + 40z^5 + 27z^6;
		&\end{flalign}
	\begin{flalign}&
		A^{\{b\}}(z) = 1 + z^2 + 11z^4 + 16z^5 + 3z^6,
		&\end{flalign}
	\begin{flalign}&
		B^{\{b\}}(z) = 1 + z^2 + 24z^3 + 35z^4 + 40z^5 + 27z^6;
		&\end{flalign}
	\begin{flalign}&
		A^{\{c\}}(z) = 1 + 2z^3 + 9z^4 + 24z^5+22z^6+6z^7,
		&\end{flalign}
	\begin{flalign}&
		B^{\{c\}}(z) = 1 + 17z^3 + 45z^4 + 78z^5+82z^6+33z^7;
		&\end{flalign}
	\begin{flalign}&
		A^{\{d\}}(z) = 1 + 2z^3 + 9z^4 + 24z^5+22z^6+6z^7,
		&\end{flalign}
	\begin{flalign}&
		B^{\{d\}}(z) = 1 + 17z^3 + 45z^4 + 78z^5+ 82z^6+ 33z^7;
		&\end{flalign}
	\begin{flalign}
		&\begin{aligned}
			A^{\{e\}}(z) = &1 + 3.944z^4+12.112z^5+ 24z^6\\&+ 19.888z^7 +3.056z^8,
		\end{aligned}&
	\end{flalign}
	\begin{flalign}&
		\begin{aligned}
			B^{\{e\}}(z) = &1 + 27.888z^3+ 86.336z^4+ 215.776z^5\\&+ 319.776z^6 + 268.336z^7 + 104.888z^8;
		\end{aligned}
		&\end{flalign}
	
	\begin{flalign}&
		\begin{aligned}
			A^{\{f\}}(z) = &1 + 3.983z^4+ 12.033z^5+ 24z^6\\&+ 19.967z^7+ 3.017z^8,
		\end{aligned}
		&\end{flalign}
	
	\begin{flalign}&
		\begin{aligned}
			B^{\{f\}}(z) = &1 +27.967z^3+ 86.100z^4+215.933z^5\\&+319.933z^6+268.100z^7+ 104.967z^8;
		\end{aligned}
		&\end{flalign}
	
	\begin{flalign}&
		\begin{aligned}
			A^{\{g\}}(z) = &1 + 0.886z^3 + 3.282z^4 + 11.604z^5 \\&+ 30.352z^6+ 45.220z^7 + 29.366z^8 \\&+ 6.290z^9,
		\end{aligned}
		&\end{flalign}
	\begin{flalign}&
		\begin{aligned}
			B^{\{g\}}(z) = &1 +21.648z^3 + 78.704z^4 + 232.520z^5 \\&+ 482.255z^6+ 618.352z^7+ 462.041z^8\\&+151.480z^9;
		\end{aligned}
		&\end{flalign}
	\begin{flalign}&
		\begin{aligned}
			A^{\{h\}}(z) = &1 + 0.004z^3 + 3.996z^4 + 10.009z^5\\& + 35.974z^6+44.004z^7 + 23.030z^8\\& + 9.983z^9,
		\end{aligned}
		&\end{flalign}
	\begin{flalign}&
		\begin{aligned}
			B^{\{h\}}(z) = &1 +16.026z^3+ 89.948z^4+ 224.044z^5\\&+ 487.965z^6+ 623.974z^7 + 445.087z^8  \\&+ 159.956z^9.
		\end{aligned}
		&\end{flalign}
	
\end{document}